\numberwithin{equation}{section}
\newtheorem{theorem}{Theorem}[section]
\newtheorem{corollary}{Corollary}[section]
\newtheorem{lemma}{Lemma}[section]
\newtheorem{proposition}{Proposition}[section]
\newtheorem{definition}{Definition}[section]
\newtheorem{remark}{Remark}[section]
\renewcommand{\P}{\mathbb{P}}
\newcommand{\Q}{\mathbb{Q}}
\newcommand{\R}{\mathbb{R}}
\newcommand{\E}{\mathbb{E}}
\newcommand{\cE}{\mathcal{E}}
\newcommand{\N}{\mathbb{N}}
\newcommand{\F}{\mathcal{F}}
\newcommand{\B}{\mathcal{B}}
\newcommand{\T}{\mathcal{T}}
\newcommand{\cU}{\mathcal{U}}
\newcommand{\C}{\mathcal{C}}
\newcommand{\cG}{\mathcal{G}}
\newcommand{\cP}{\mathcal{P}}
\newcommand{\A}{\mathcal{A}}
\newcommand{\eps}{\varepsilon}
\newcommand{\fC}{\mathfrak{C}}
\newcommand{\fP}{\mathfrak{P}}
\newcommand{\pr}{\operatorname{proj}}
\newcommand{\Int}[1]{%
  {\kern0pt#1}^{\mathrm{o}}%
}
\newcommand{\nada}[1]{}
\definecolor{gb}{rgb}{0, 0.2, 0.8}
\title{Optimal Stopping under Model Ambiguity: a Time-Consistent Equilibrium Approach}
\author{Yu-Jui Huang\thanks{
University of Colorado, Department of Applied Mathematics, Boulder, CO 80309-0526, USA, email: \texttt{yujui.huang@colorado.edu}. Partially supported by National Science Foundation (DMS-1715439) and a start-up grant from the University of Colorado (11003573).}
 \and Xiang Yu\thanks{
The Hong Kong Polytechnic University, Department of Applied Mathematics, Hung Hom, Kowloon, Hong Kong, email: \texttt{xiang.yu@polyu.edu.hk}. Supported by the Hong Kong Polytechnic University under grant no. 15304317.}
}
\date{\today}
\begin{document}
\maketitle

\begin{abstract}
An unconventional approach for optimal stopping under model ambiguity is introduced. Besides ambiguity itself, we take into account how {\it ambiguity-averse} an agent is. This inclusion of ambiguity attitude, via an $\alpha$-maxmin nonlinear expectation, renders the stopping problem time-inconsistent. We look for subgame perfect equilibrium stopping policies, formulated as fixed points of an operator. For a one-dimensional diffusion with drift and volatility uncertainty, %{\color{blue}we show that any initial stopping policy will converge to an equilibrium through the proposed iteration scheme and each equilibrium can be obtained via this iteration.}
we show that any initial stopping policy will converge to an equilibrium through a fixed-point iteration.
%every equilibrium can be obtained through a fixed-point iteration. 
This allows us to capture much more diverse behavior, depending on an agent's ambiguity attitude, beyond the standard worst-case (or best-case) analysis. In a concrete example of real options valuation under model ambiguity, all equilibrium stopping policies, as well as the {\it best} one among them, are fully characterized under appropriate conditions. It demonstrates explicitly the effect of ambiguity attitude on decision making: the more ambiguity-averse, the more eager to stop---so as to withdraw from the uncertain environment. The main result hinges on a delicate analysis of continuous sample paths in the canonical space and the capacity theory. To resolve measurability issues, a generalized measurable projection theorem, new to the literature, is also established. 
\end{abstract}
\ \\
\textbf{MSC (2010):} 60G40, 91G80, 28A05.  	%Classes of sets (Borel fields, $\sigma$-rings, etc.), measurable sets, Suslin sets, analytic sets
\ \\
\ \\
\textbf{Keywords}: Time inconsistency, model ambiguity, ambiguity attitude, generalized measurable projection theorem, optimal stopping, real options valuation, equilibrium stopping policies.

%%%%%%%%%%%%%%%%%%%%%%%%%%%%%%%%%%%%%%%%%%%%%%%%%%%%
%%%%%%%%%%%%%%%%%%%%%%%%%%%%%%%%%%%%%%%%%%%%%%%%%%%%

\section{Introduction}
Decision making under model ambiguity (or, uncertainty) has been extensively studied, {\it dominantly} in the worst-case or the best-case scenario: strategies are found to maximize the worst-case, or the best-case, expected value. In practice, few individuals are so pessimistic (or optimistic) that {\it solely} the least (or the most) favorable situation dictates their behavior. In this paper, a new framework for handling model ambiguity is introduced: {\it ambiguity attitude} of an agent is included as a core ingredient, leading to a more realistic spectrum of behavior.

We focus on optimal stopping. Classically, an agent chooses a stopping time $\tau$ to maximize his expected discounted payoff 
\begin{equation}\label{C}
\E^\P[e^{-r\tau} g(X_\tau)].
\end{equation}
In the face of model ambiguity, the agent, uncertain about the true probability $\P$, can only work with a collection $\cP$ of {\it plausible} probability measures, or {\it priors}, which represent the ambiguity perceived by the agent. This leads to two types of optimal stopping problems. %both of which have been thoroughly investigated. 
The first type---the so-called robust optimal stopping---maximizes the worst-case expected value
\begin{equation}\label{W}
\inf_{\P\in\cP}\E^\P[e^{-r\tau} g(X_\tau)]
\end{equation}
through the choice of $\tau$; see \cite{FrankRiedel2009}, \cite{ErhanYao1, ErhanYao2, BaySong14}, \cite{ChengRiedel13}, and \cite{NutzZhang15}, among many others. The second type, on the other hand, maximizes the best-case expected value
\begin{equation}\label{B}
\sup_{\P\in\cP}\E^\P[e^{-r\tau} g(X_\tau)];
\end{equation}
see e.g. \cite{ErhanYao1, ErhanYao2}, \cite{ETZhang14}, \cite{BelomestnyKra2016}, and \cite{Yao11}. 

What is missing in the above literature is the agent's {\it attitude} towards ambiguity. Even with the same perceived ambiguity $\cP$, different agents may have different levels of ambiguity aversion, as shown empirically in \cite{Curley1989} and  \cite{HeathTversky}.
To differentiate ambiguity attitude from ambiguity itself, \cite{GhirardatoMacMar2004} develop an axiomatic foundation for decision making, leading to the first set of preference models that encode ambiguity attitude. More general models with ambiguity attitude are proposed subsequently in \cite{KlibanoffMM05}, \cite{Chatea}, among others.
%\cite{GhirardatoMacMar2004} recommended the $\alpha$-maxmin preferences via a constant index $\alpha\in [0,1]$ to weight between the worst case and best case objectives. 
In this paper, we incorporate the {\it $\alpha$-maxmin} preference, introduced in Section 6 of \cite{GhirardatoMacMar2004}, into the optimal stopping framework: the agent intends to maximize
\begin{align}\label{fucnal'}
\alpha \inf_{\P\in\cP} \E^{\P}[e^{-r\tau}g(X_\tau)] + (1-\alpha) \sup_{\P\in\cP} \E^{\P}[e^{-r\tau}g(X_\tau)],
\end{align}
where $\alpha\in [0,1]$ is a given constant that reflects the level of ambiguity aversion of the agent. Here, both {\it ambiguity} and {\it ambiguity attitude} are captured, by $\cP$ and $\alpha$, respectively. The case $\alpha=1$ amounts to the standard worst-case analysis, reflecting extreme aversion to ambiguity. The other extreme $\alpha=0$ depicts a purely ambiguity-loving agent who cares only about the best-case value. 
Note that  an agent's preference is fully characterized by an $\alpha$-maxmin objective, such as \eqref{fucnal'}, as long as six basic axioms are satisfied; see Section 6 of \cite{GhirardatoMacMar2004}.
The goal of this paper is to investigate stopping behavior under the $\alpha$-maxmin objective \eqref{fucnal'}. 
%Other more complicated ways to capture heterogeneous ambiguity attitude, such as smooth ambiguity preference and neo-additive capacity model, can be found in \cite{KlibanoffMM05} and \cite{Chatea} and etc.

A distinctive challenge in solving \eqref{fucnal'} is {\it time inconsistency}: an optimal strategy we find today may no longer be optimal at future dates. That is, our future selves may very well deviate from the optimal strategy we set out to employ today. %Mathematically speaking, the problem \eqref{fucnal'} is thereby more descriptive than normative as no dynamic global solution is guaranteed. 
Consequently, finding an optimal stopping time, the ultimate goal in the standard literature, is not meaningful here. 

Note that neither the classical problem \eqref{C} nor the worst-case and best-case problems \eqref{W} and \eqref{B} suffers the issue of time inconsistency. Indeed, time consistency of \eqref{C} simply boils down to the tower property of conditional expectations. While time consistency is generally in question under nonlinear expectations, \cite{ES03} show that, for the special cases \eqref{W} and \eqref{B}, tower property still holds if the set of priors is {\it rectanguler}, i.e., closed under conditioning and stable under pasting. Similar conditions are discovered independently, and further refined under great generality, in the literature of mathematical finance; see e.g. \cite{NvH13}, \cite{BaySong14}, \cite{ETZhang14}, and \cite{NutzZhang15}. All the developments ensure certain tower property for the nonlinear expectation \eqref{W} or \eqref{B}, so that time consistency follows. %the existence of global optimal stopping time can be achieved in the time consistent framework. %Classical methods, such as dynamic programming and the martingale approach, can then be invoked to find an optimal stopping time. 
By contrast, an $\alpha$-maxmin objective, such as \eqref{fucnal'}, does not uphold time consistency, even when the set of priors is rectangular. This is demonstrated in Section 7 of \cite{Schroder} and %a simple two-period model in 
Section 2 of \cite{BLR16}. Time inconsistency is a genuine difficulty for \eqref{fucnal'}.

As proposed in \cite{Stroz1956myopia}, a sensible way to deal with time inconsistency is {\it consistent planning}:  knowing that his future selves may overturn his current plan, the agent selects the best present action taking the future disobedience as a constraint. Assuming that every future self will reason in the same way, the resulting strategy is a  (subgame perfect) equilibrium, from which no future self has incentive to deviate. How such strategies can be precisely formulated and obtained has been a long-standing problem. In response to this, \cite{HN18} develop an iterative approach to finding equilibria for time-inconsistent stopping problems. %: equilibrium strategies, characterized as fixed points of an operator, can be found conveniently via fixed-point iterations. 
It has been applied to stopping under non-exponential discounting (\cite{HN18}, \cite{HZ17-discrete, HZ17-continuous}) and probability distortion (\cite*{HNHZ18}). 

In this paper, we extend the framework of \cite{HZ17-continuous} to account for the $\alpha$-maxmin objective \eqref{fucnal'}. Equilibrium stopping policies are characterized as fixed-points of an operator $\Theta$, defined in \eqref{opertheta} below. The central question is whether equilibria can be found via fixed-point iterations. We take up a strong formulation of model ambiguity, where the drift and volatility coefficients of a one-dimensional diffusion $X$ are only assumed to satisfy certain Lipschitz and linear growth conditions, and are otherwise unknown. As shown in Lemmas~\ref{lem:rc} and \ref{lem:regular}, the resulting collection of priors $\cP$ is relatively compact and $X$ is a regular diffusion under any $\P\in\cP$. The regularity of $X$ immediately yields the convergence of any fixed-point iteration; see Proposition~\ref{limitstopping}. To show that the limit of a fixed-point iteration is indeed an equilibrium, appropriate convergence of stopping times and the values at stopping, uniform in $\P\in\cP$, is required. Such uniform convergence is carefully established in Lemma~\ref{lem:in capacity}, relying crucially on both the relative compactness of $\cP$ and the regularity of $X$. All this leads to Theorem~\ref{thm:main}, the main result of this paper: {\it every} equilibrium can be found via a fixed-point iteration. 

Our framework, in particular, sheds new light on real options valuation.
%To demonstrate our theoretical results, we look into a real options valuation problem in detail. 
The essence of real options valuation is the use of financial pricing techniques to evaluate the right, but not the obligation, to undertake certain capital investment project. 
%, Traditionally, this boils down to an optimal stopping problem under a risk-neutral measure, whose solution dictates optimal timing or scheduling of investment outlays. 
By nature, real options valuation may suffer model ambiguity {\it more} severely than pricing a typical financial option: as the underlying asset of a real option is mostly neither tradable nor fully observable, determining its dynamics relies largely on an agent's estimate. This often leads to %a collection of plausible risk-neutral measures and a corresponding
an interval of plausible values of a real option. By incorporating the $\alpha$-maxmin preference, the multiple plausible values turns into a single one, i.e., the convex combination of the least and the best values, as in \eqref{fucnal'}. This facilitates decisions making: one compares this single value and the value of immediate stopping, to decide whether a project should be postponed or initiated. While the involved stopping problem is now time-inconsistent, %, inapproachable by standard techniques of real options valuation, 
the methodology we develop comes into play to locate (time-consistent) equilibrium strategies. 

In particular, in the uncertain volatility model introduced by \cite{ALP95} and \cite{Lyons95}, when the payoff function of a real option is of the put option type, we provide complete characterizations of not only {\it all} the equilibrium strategies, but also the {\it best} one among them, under appropriate conditions; see Proposition~\ref{characterization} and Theorem~\ref{thm:optimal E}. It demonstrates explicitly the effect of ambiguity attitude: the more ambiguity-averse, the more eager to stop---so as to withdraw from the uncertain environment. 

In summary, the main contributions of this paper are as follows: 
\begin{itemize}[leftmargin=*]
\item[(i)] To the best of our knowledge, this is the first paper that resolves the time-inconsistent stopping problem under the $\alpha$-maxmin preference\footnote{There is a related stopping problem, with drift uncertainty only, introduced in \cite{Schroder} under the $\alpha$-maxmin preference. However, due to the time inconsistency involved, the stopping problem was not solved therein, except for the extreme cases $\alpha=1$ and $\alpha=0$ (i.e., the usual worst case and best case again).}. %Specifically, we show that any (time-consistent) equilibrium strategy can be found via a fixed-point iteration. 
This allows us to go beyond the standard worst-case (or best-case) analysis under model ambiguity, and capture a more realistic spectrum of behavior. We stress that our collection of priors $\mathcal{P}$ is only assumed to be measurable; the ``rectangular'' condition,  imposed widely in the literature, is not required; see Remark~\ref{rem:rectangular} for details. 

\item [(ii)]
Whereas time-inconsistent stopping behavior has been widely investigated, it has been ascribed mostly to {\it non-exponential discounting}, {\it probability distortion}, {\it dependence on initial data}, or {\it nonlinearity in expected rewards}; see e.g., \cite{grenadier2007investment}, \cite{Barberis12}, \cite{XZ13}, \cite{ebert2015until}, \cite{CL18,CL20}. This paper enriches research on time-inconsistent stopping, by focusing on {\it ambiguity aversion}, a cause of time inconsistency that has only been slightly discussed in the literature.

%Under model ambiguity, we no longer restrict ourselves to the worst-case (or the best-case) analysis, as opposed to the majority of the literature. Besides ambiguity itself, an agent's ambiguity aversion is included through the $\alpha$-maxmin preference. 

\item [(iii)]
Our framework provides a new approach for real options valuation. Taking ambiguity attitude into account, via the $\alpha$-maxmin preference, facilitates decision making under model ambiguity (as discussed above), but it also renders the stopping problem time-inconsistent.  
%reduces multiple plausible values of a real option to one, which facilitates decision making. 
The methodology we develop particularly resolves this time-inconsistent problem, %the well-known time inconsistent open problem in robust valuation procedure, 
allowing us to take full advantage of including ambiguity attitude in decision making. 
%Second, we demonstrate explicitly how ambiguity aversion significantly affects agents' choice of equilibrium stopping policies. 
%This may shed new light on the diverse strategies among investors in face of similar, yet uncertain, investment opportunities. 
%Indeed, the release of extra dimension of attitude yields diversity of the derived stopping decisions, which matches with real life evidences that some entrepreneurs are fully confident of their business ideas and pursue the investment in new and emerging industry while many others will not. 

\item[(iv)] Extending the iterative approach from \cite{HN18} to our multiple-prior setting is nontrivial. It demands several convergence results related to stopping times, uniform across all priors, which are established by a detailed analysis of sample paths and a careful use of the capacity theory; see Lemma~\ref{lem:in capacity} and its proof in Appendix~\ref{sec:proof of lem:in capacity}. 

Moreover, we highlight the new measurable projection theorem established in Theorem~\ref{thm:projection} below, which may have many potential applications in stochastic analysis beyond  this paper. %A measurable projection theorem involves the product of two measurable spaces, and studies whether the projection of a measurable set in the product space is still measurable. 
Classical measurable projection theorems all require one of the spaces involved to be a Borel space endowed with the Borel $\sigma$-algebra. Theorem~\ref{thm:projection}, by contrast, allows for {\it any} general measurable spaces. 
In our multiple-prior setting, for the fixed-point operator $\Theta$ to be well-defined, Borel measurability, used in the single-prior framework of \cite{HNHZ18} and \cite{HZ17-continuous}, is no longer adequate, and the more general universal measurability is needed; see Section~\ref{why use cU} for detailed explanations. Showing that the objective function \eqref{fucnal'} is universally measurable then demands Theorem~\ref{thm:projection}, which does not require specific Borel structure; see Lemma~\ref{measuretheta} and Remark~\ref{clremk}. 
%\item[(iii)]  
% If the risk neutral valuation is employed in a robust framework to compare the momentary payoff and the discounted future payoff, we may encounter a dilemma that the current payoff sometimes stays between the worst case and best case future payoffs. No sound conclusion, namely stop or continue, can be determined if the investor is not extremely pessimistic. Some previous work only analyze reasonable actions case by case. By means of our optimal stopping under $\alpha$-maxmin preference, we gladly discover that the index $\alpha$ provides a unified criterion so that payoffs become comparable and business actions can be made even when the underlying model is not specified.    
\end{itemize}

The rest of the paper is organized as follows. Section \ref{sec:model} introduces the general set-up, including the formulations of the time-inconsistent stopping problem under model ambiguity and the corresponding fixed-point operator. In Section \ref{sec:converg}, under both drift and volatility uncertainty of a one-dimensional diffusion, we show that every equilibrium stopping policy is the limit of a fixed-point iteration. 
Section \ref{sec:example} studies a concrete real options valuation problem; all equilibrium stopping policies, and the best one among them, are explicitly characterized under appropriate conditions. Section \ref{sec:proj} is devoted to the derivation of a new, generalized measurable projection theorem, which is required in Section~\ref{sec:model}. Appendix~\ref{sec:proof of lem:in capacity} presents the technical proof of Lemma~\ref{lem:in capacity}.

\section{The Set-up}\label{sec:model}
%We start with an abstract set up to account for decision maker's ambiguity on the underlying models. In particular, we are interested in the scenario of volatility uncertainty so that the family of unspecified model priors is not necessarily dominated by one reference probability measure. To formulate the robust optimal stopping problem in a general framework, 

For any Polish space $M$, we denote by $\B(M)$ the Borel $\sigma$-algebra of $M$, and by $\cU(M)$ the $\sigma$-algebra consisting of all universally measurable sets in $M$. Let $\mathfrak{P}(M)$ be the set of all probability measures on $(M,\B(M))$. Each $\P\in\mathfrak P(M)$ can be uniquely extended to $\cU(M)$, and we do not distinguish between $\P$ and such an extension. In this paper, we will particularly take $M$ to be either $\R^d$ or the canonical spaces of continuous paths, i.e., $\Omega$ and $\Omega_t$ to be defined below.  
%$\Omega := C([0,\infty);\R^d)$, , or $\mathbb{R}^d$, such that $\cU(\Omega)$ and $\cU(\mathbb{R}^d)$ are defined using all probability measures in $\mathfrak{P}(\Omega)$ and $\mathfrak{P}(\mathbb{R}^d)$ respectively. 

Consider the canonical space $\Omega := C([0,\infty);\R^d)$. %, i.e., the space of continuous paths starting with $0\in \R^d$. 
For each $t>0$, we define $\Omega_t := C([0,t];\R^d)$. Let $B$ denote the canonical process $B_t(\omega) := \omega_t$ for all $\omega\in\Omega$, and let $\mathbb{F}^B= (\F^B_t)_{t\ge 0}$ be the natural filtration generated by $B$. Recall that \begin{equation}\label{F^X=B}
\F^B_t = \B(\Omega_t),\ \ \forall t\ge 0\quad \hbox{and}\quad \F^B_\infty = \B(\Omega).
\end{equation}
For each $\P\in\fP(\Omega)$, let $\mathbb{F}^\P= (\F^\P_t)_{t\ge 0}$ be the $\P$-augmentation of $\mathbb{F}^B$. We then define the universal filtration $\mathbb{F} = (\F_t)_{t\ge 0}$ by
\[
\F_t := %\cG_t := 
\bigcap_{\P\in \fP(\Omega)} \F_t^{\P},\quad \forall t\ge 0.
\]
Note that $\mathbb F$ is right-continuous, thanks to the right-continuity of $\mathbb F^\P$ for all $\P\in  \fP(\Omega)$. Moreover, 
\begin{equation}\label{G=U}
\F_t = \cU(\Omega_t),\ \ \forall t\ge 0\quad \hbox{and}\quad \F_\infty = \cU(\Omega).
\end{equation}
%Let $\mathbb{F}= \{\F_t\}_{t\ge 0}$ be defined as $\F_t := \cG_{t+}$ for all $t\ge 0$. 
We denote by $\T$ the set of all $\mathbb F$-stopping times.

Let us introduce a general, albeit time-homogeneous, formulation of model ambiguity. For any $x\in \R^d$, consider
\[
\Omega^x := \{\omega\in\Omega: \omega_0=x\},
\]
and let 
\begin{equation}\label{cP}
\cP(x) \subseteq \{\P\in\fP(\Omega) : \P(\Omega^x) =1\ \hbox{and}\ B\ \hbox{is strong Markov under $\P$}\} %\quad \forall x\in\R,
\end{equation}
denote the set of {\it priors} of an agent at the state $x\in\R^d$. That is, %$\cP(x)$ represents the beliefs of the agent at $x\in\R^d$: 
every $\P\in\cP(x)$ is believed by the agent to be a possibly true description of how the process $B$ will evolve, given that its current value is $x\in\R^d$. Note that $\cP(x)$ is {\it not} necessarily dominated by some reference probability $\P^*$ with respect to which all $\P\in\cP(x)$ are absolutely continuous. In other words, some elements in $\cP(x)$ may be mutually singular, which in particular covers the case of volatility uncertainty in a diffusion model of $B$; see Section~\ref{sec:example} for a detailed example. 

%%%%%%%%%%%%%%%%%%%%%%%%%%%%%%%%%%5

\subsection{The $\alpha$-maxmin Objective and Time Inconsistency}\label{subsec:alpha}
Consider a payoff function $g:\R^d\to \R$. %, assumed to be continuous. 
An agent, with discount rate $r>0$, intends to maximize $\E^{\P}[e^{-r\tau}g(B_\tau)]$ by choosing an appropriate $\tau\in\T$, subject to the uncertainty $\P\in\cP(x)$ at the current state $x\in\R^d$. Such a stopping problem has been substantially studied, {\it yet} almost always in the worst-case (or best-case) scenario. Stated in the current setting, the literature is focused on finding $\tau^*\in\T$ that maximizes the worst-case (or best-case) value, i.e.,
\begin{equation}\label{worst case}
\inf_{\P\in\cP(x)} \E^{\P}[e^{-r\tau}g(B_\tau)]\quad \hbox{or}\quad \sup_{\P\in\cP(x)} \E^{\P}[e^{-r\tau}g(B_\tau)].
\end{equation}
A large number of references can be found in the introduction. 
%Studies along this line include \cite{Smith2002}, \cite{FrankRiedel2009}, \cite{ErhanYao1, ErhanYao2, BaySong14}, \cite{ChengRiedel13}, \cite{ETZhang14}, \cite{NutzZhang15}, and \cite{BelomestnyKra2016}, among many others. 

Practical decision making, however, is much more complicated than the worst-case (or best-case) analysis. What is missing in \eqref{worst case} is the agent's {\it attitude} towards ambiguity: even with the same perceived ambiguity $\cP(x)$, different agents may have different levels of ambiguity aversion. As shown empirically in \cite{Curley1989} and \cite{HeathTversky}, ambiguity attitude is heterogeneous among individuals: some can be much less ambiguity-averse than others under various circumstances. To accommodate ambiguity attitude, a general model of utility maximization has been developed in \cite{GhirardatoMacMar2004} and \cite{KlibanoffMM05}. In particular, the {\it $\alpha$-maxmin} preference, a popular, straightforward version of the general model, stipulates that, at the current state $x\in\R^d$, the agent maximizes
%expected payoff under model ambiguity and ambiguity attitude, associated with the $\mathbb F$-stopping time $\tau\in\T$, is characterized by the $\alpha$-maxmin functional
\begin{align}\label{fucnal}
\alpha \inf_{\P\in\cP(x)} \E^{\P}[e^{-r\tau}g(B_\tau)] + (1-\alpha) \sup_{\P\in\cP(x)} \E^{\P}[e^{-r\tau}g(B_\tau)],
\end{align}
where $\alpha\in [0,1]$ is a given constant that reflects the level of ambiguity aversion of the agent. Here, both ambiguity and ambiguity attitude are captured, by $\cP(x)$ and $\alpha$, respectively. The case $\alpha=1$  (resp. $\alpha=0$) corresponds to the standard worst-case (resp. best-case) problem, reflecting extreme aversion to (resp. desire for) ambiguity.  It is the scope of this paper to investigate the diverse stopping behavior between these two extremes, i.e., for any $\alpha\in [0,1]$. 

When solving the problem
\begin{equation}\label{supJ}
\sup_{\tau\in\T} \bigg(\alpha \inf_{\P\in\cP(x)} \E^{\P}[e^{-r\tau}g(B_\tau)] + (1-\alpha) \sup_{\P\in\cP(x)} \E^{\P}[e^{-r\tau}g(B_\tau)]\bigg),
\end{equation}
the issue of {\it time inconsistency} arises: an optimal strategy we find today may no longer be optimal at future dates. 
Specifically, suppose an optimal stopping time $\widetilde \tau_x\in\T$ exists for \eqref{supJ}, for all $x\in\R^d$. The problem \eqref{supJ} is said to be time-consistent if for any $x\in\R^d$ and $t\ge 0$,
\begin{align}\label{consistent condition}
\widetilde\tau_x(\omega)=t+\widetilde\tau_{B_t}(\omega)\quad \text{for}\ \omega\in\{\tau\geq t\}\quad \P\text{-a.s.},\qquad \forall \P\in\cP(x).
\end{align}
If the above condition fails to hold, \eqref{supJ} is said to be time-inconsistent. 

A critical condition time consistency hinges on is the tower property of conditional expectations. In the worst case (or best-case) scenario, \cite{ES03} shows that time consistency holds if the set of priors is {\it rectanguler}, i.e., closed under conditioning and stable under pasting. Similar conditions are proposed under great generality in the literature of mathematical finance; see  \cite{NvH13}, \cite{BaySong14}, \cite{ETZhang14}, and \cite{NutzZhang15}. The technical endeavor in these works ensures certain tower property of nonlinear conditional expectations in the form of \eqref{worst case}, so that time consistency follows. 

By contrast, time inconsistency is inherent in \eqref{supJ}. Even when the set of priors is rectangular, 
an $\alpha$-maxmin objective does not uphold time consistency. This is demonstrated in detail through an optimal stopping problem in Section 7 of \cite{Schroder}, as well as in a simple two-period model in Section 2 of \cite*{BLR16}.

\begin{remark}
When the set of priors is dominated (i.e., a reference measure $\mathbb{P}$ exists such that $\Q\ll \P$ for each prior $\Q$), ambiguity aversion can be modeled via an entropic penalty term (``minimal penalty'' in the literature). As specified in \cite{BayKaYao}, the expected payoff under each prior $\Q$ is penalized by its distance from $\P$, and the resulting robust optimal stopping problem is still time-consistent. Our framework differs from this in two ways. 
%\begin{itemize}
%\item [(i)] 

Technically, we do not require the set of priors to be dominated, thereby allowing for more general forms of model ambiguity, such as volatility uncertainty (under which priors can be mutually singular). 
%\item [(ii)] 
Economically, the entropic penalization assumes (i) an agent already has a specific belief (the reference measure $\P$) and (ii) he is ambiguity-averse---so that the expected payoff under a prior $\Q$ is weighted by how much $\Q$ differs from his belief $\P$. This paper assumes neither (i) nor (ii). Through the $\alpha$-maxmin objective, we cover a wide range of ambiguity attitude---from extreme ambiguity aversion ($\alpha=1$) to pure ambiguity loving ($\alpha=0$)---without the need of a specific belief form the agent.
%prescribes an agent's ambiguity attitude in a fixed, less flexible way: The agent is ambiguity-averse and already has a firm belief of ...
% that describes the distance between the reference measure and the prior distribution. The robust optimal stopping problem under this type of preference, see \cite{BayKaYao}, is still time consistent. 
%Moreover, the entropic penalty formulation actually characterizes how much the agent would take into account the model ambiguity in the decision making, i.e., how robust the decision making is. In contrast, the ambiguity attitude via $\alpha$-maxmin is defined in a way that the decision making is fully robust and the optimal decision is based on how the agent views all payoffs under these distributions. Under the umbrella of $\alpha$-maxmin preference, some agents can be less ambiguity averse than others in the sense these agents care less about the worst case payoff that may hinder their decision making. To incorporate the latter ambiguity aversion attitude by agents and to consider possible volatility uncertainty in applications, we choose to work with the $\alpha$-maxmin preference, which is the source of time inconsistency in the present paper.
%\end{itemize}
\end{remark}

To deal with time inconsistency, we follow {\it consistent planning} proposed in \cite{Stroz1956myopia}:  one takes into account the potential disobedience of his future selves, and selects the best present action in response to that. If every future self will reason in the same way, the resulting strategy will be a (subgame perfect) equilibrium, from which no future self has incentive to deviate. How to precisely formulate and locate such strategies has been a long-standing challenge.  
In the context of time-inconsistent stopping, \cite{HN18} develop a versatile iterative approach: equilibrium strategies, formulated as fixed points of an operator, can be found conveniently via fixed-point iterations. It has been applied successfully to optimal stopping under non-exponential discounting (\cite{HN18}, \cite{HZ17-discrete, HZ17-continuous}) and probability distortion (\cite*{HNHZ18}). We will now extend this iteration approach further to account for model ambiguity and ambiguity attitude.  

%%%%%%%%%%%%%%%%%%%%%%%%

\subsection{Consistent Planning}
%{\color{blue}The majority of the optimal stopping time in the existing work on time consistent robust optimal stopping (with drift uncertainty or volatility uncertainty) in the Markovian framework fit into the type of first hitting time. This motivates us to consider the first hitting time in our time inconsistent setting with model ambiguity described by \eqref{cP}. More importantly, the first hitting time allows us to pass the stopping policy to its associated set, which is technically convenient for us to define and verify the convergence of our proposed iteration under model ambiguity. %Thanks to the time-homogeneous setup of model ambiguity in \eqref{cP}, Starting from this point on, let us focus on hitting times to regions in $\R^d$, instead of dealing with all general stopping times.} 
Thanks to the time-homogeneous Markovian setup in \eqref{cP}, we assume that an agent decides to stop or to continue depending on his current state $x\in\R^d$. That is, the agent chooses some $R\in \cU(\R^d)$, and stops at the moment 
\begin{equation}\label{tauR}
\tau_R:= \inf\{t\ge 0: B_t\in R\}. %\quad \hbox{for}\ R\in\cU(\R^d),
\end{equation}
This corresponds to a {\it pure strategy} in game theory. While {\it mixed strategies} could also be considered here, we will leave it for future research. For convenience, we will often call $R\in \cU(\R^d)$ a {\it stopping policy} in the rest of the paper. 

It is worth noting that we do not restrict ourselves to Borel measurable %or analytic stopping 
policies $R$, but allow for universally measurable ones.  
Such generality is essential to our subsequent fixed-point formulation \eqref{iteration}; see the detailed explanations in Section~\ref{why use cU} below. 

To carry out consistent planning proposed in \cite{Stroz1956myopia}, we follow the game-theoretic formulation in Section 2.1 of \cite{HZ17-continuous} (in line with Section 3.1 of \cite{HN18}). Suppose that the agent initially planned to take $R\in \cU(\R^d)$ as his stopping policy. Given the current state $x\in\R^d$, the agent carries out the game-theoretic reasoning: ``assuming that all my future selves will follow $R\in\cU(\R^d)$, what is the best stopping strategy today in response to that?'' The agent today has only two possible actions: stopping and continuation. If he stops, he gets $g(x)$ right away; if he continues, given that all his future selves will follow $R\in\cU(\R^d)$, he will eventually stop at the moment 
\begin{equation}\label{rhoR}
\rho_R:= \inf\{t> 0: B_t\in R\}.
\end{equation}
Given that the agent's ambiguity attitude is characterized by $\alpha\in [0,1]$, this leads to the $\alpha$-maxmin expected payoff
\begin{equation}\label{J}
J(x,R) := \alpha \inf_{\P\in\cP(x)} \E^{\P}[e^{-r\rho_R}g(B_{\rho_R})] + (1-\alpha) \sup_{\P\in\cP(x)} \E^{\P}[e^{-r\rho_R}g(B_{\rho_R})].
\end{equation}

\begin{remark}
The fact ``$\rho_R\in \T$'' can be proved in the same way as in Lemma~\ref{lem:in T}. Note the subtle difference between $\tau_R$ and $\rho_R$: the former involves ``$t\ge 0$'', while the latter ``$t>0$''. In \eqref{J}, as this is the case where the agent at $x\in\R^d$ chooses to continue (without regard to whether $x\in R$), the stopping time in effect is $\rho_R$, not $\tau_R$.
\end{remark}

%\begin{remark}
In \eqref{J}, we allow $\rho_R$ to take the value $\infty$: if $\rho_R(\omega)=\infty$, we define
\begin{equation}\label{at infty}
e^{-r\rho_R}g(B_{\rho_R})(\omega) :=  \limsup_{t\to\infty} e^{-rt}g(B_t)(\omega).
\end{equation}
This is in line with Appendix D of \cite{KS-book-98}.
%so that the expectations in \eqref{J} can be evaluated. 
%\end{remark}
%Moreover, to ensure that $J(x,R)$ in \eqref{J} is well-defined, we impose throughout the paper that for any $x\in\R^d$ and $t\ge 0$,
%%\begin{assumption}\label{standing}
%%For any $x\in\R^d$, 
%\[
%\sup_{\P\in\cP(x)}\E^\P\left[e^{-rt} |g(B_t)|\right] <\infty,
%%\sup_{\P\in\cP(x)}\E^\P\left[\sup_{0\le t\le\infty} e^{-rt} |g(B_t)|\right] <\infty,
%\]
%where we interpret $e^{-r\cdot \infty }|g(B_\infty)|:= \limsup_{t\to\infty} e^{-rt}|g(B_t)|$, similarly to \eqref{at infty}. 
%\end{assumption}

To find the best stopping policy for today, in response to future selves following $R\in\cU(\R^d)$, the agent simply compares the payoffs $g(x)$ and $J(x, R)$. This leads to % we obtain the best stopping strategy for today as
\begin{align}\label{opertheta}
\Theta(R) := S_R\cup(I_R\cap R),
\end{align}
where we define
\begin{equation}\label{regions}
\begin{split}
S_R &:= \{x\in\R^d : g(x) > J(x,R)\},\\
I_R &:= \{x\in\R^d : g(x) = J(x,R)\},\\
C_R &:= \{x\in\R^d : g(x) < J(x,R)\}.
\end{split}
\end{equation}
Here, $S_R$, $I_R$, and $C_R$ are called the stopping region, the indifference region, and the continuation region, respectively. In particular, on $I_R$, the agent is indifferent between stopping and continuation as they yield the same payoff. There is then no incentive for the agent to deviate from the original stopping policy $R\in\cU(\R^d)$. This gives rise to the term $I_R\cap R$ in \eqref{opertheta}.\footnote{A similar formulation can be found in Section 2.1 of \cite{HZ17-continuous}.}

It is of interest to determine whether this new region $\Theta(R)\subseteq \R^d$, obtained from the original stopping policy $R\in\cU(\R^d)$, %via the above game-theoretic reasoning, 
is again a stopping policy, i.e., $\Theta(R)\in\cU(\R^d)$. If this is true, it will facilitate the fixed-point formulation in \eqref{iteration} below. In view of \eqref{regions}, whether $\Theta(R)$ belongs to $\cU(\R^d)$ hinges on the measurability of $x\mapsto J(x,R)$, which will now be investigated.

%%%%%%%%%%%%%%%%%%%%%%%%%%%%%%%

\subsection{Measurability}
First, we show that $\tau_R$, with $R\in \cU(\R^d)$, is a well-defined stopping time. Note that this result does not follow from the standard debut theorem; see Remark~\ref{rem:debut} below. %the measurability of $\tau_R$ is nontrivial. 
%Indeed, if we had $R\in\B(\R^d)$, the classical debut theorem (see e.g. Theorem 2.1 in \cite{Bass10}) would imply that for any $t\ge 0$, $\{\tau_R\le t\}\in\F^\P_t$ for all $\P\in\fP(\Omega)$, which readily yields $\tau_R\in\T$. As the next result shows, much more care needs to be taken for $R\in \cU(\R^d)$. 

\begin{lemma}\label{lem:in T}
For any $R\in\cU(\R^d)$, $\tau_R$ in \eqref{tauR} belongs to $\T$.  
%\begin{equation}\label{rhoR}
%\tau_R:= \inf\{t\ge 0: B_t\in R\}\quad \hbox{and}\quad \rho_R:= \inf\{t> 0: B_t\in R\}
%\end{equation}
\end{lemma}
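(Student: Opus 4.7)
My plan is to use the right-continuity of $\mathbb{F}$ (noted in the excerpt) to reduce $\tau_R\in\T$ to the condition $\{\tau_R \leq t\}\in\F_t$ for every $t\geq 0$. By \eqref{G=U}, this becomes the statement that $\{\tau_R\leq t\}$, identified with the subset of $\Omega_t$ it determines (legitimate since the event depends only on $\omega|_{[0,t]}$), lies in $\cU(\Omega_t)$. The whole argument then boils down to writing this event as a projection of a universally measurable set and invoking a measurable projection theorem suited to universally measurable sets.

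Concretely, I would introduce the evaluation map $e:[0,t]\times\Omega_t\to \R^d$ defined by $e(s,\omega):=\omega(s)$. Under the uniform topology on $\Omega_t$, the map $e$ is jointly continuous and, in particular, Borel. A standard verification---push any probability measure on $[0,t]\times\Omega_t$ forward via $e$, and use that $R\in\cU(\R^d)$ is in the completion of $\B(\R^d)$ with respect to that pushforward---yields
\[
e^{-1}(R)\;\in\;\cU([0,t]\times\Omega_t).
\]
One then observes that
\[
\{\tau_R\leq t\}\;=\;\pr_{\Omega_t}\bigl(\{(s,\omega)\in[0,t]\times\Omega_t:\omega(s)\in R\}\bigr)\;=\;\pr_{\Omega_t}\bigl(e^{-1}(R)\bigr),
\]
exhibiting the event in question as the projection onto $\Omega_t$ of a universally measurable subset of the Polish product space $[0,t]\times\Omega_t$.

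The last step is to appeal to the measurable projection theorem for universally measurable sets (a classical consequence of Choquet's capacitability theorem, applicable here because $[0,t]$ and $\Omega_t$ are Polish), which ensures $\pr_{\Omega_t}(e^{-1}(R))\in\cU(\Omega_t)$. Combined with \eqref{G=U} this gives $\{\tau_R\leq t\}\in\F_t$, and hence $\tau_R\in\T$ by right-continuity of $\mathbb{F}$.

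The main obstacle---and the reason the classical debut theorem cited after the statement does not immediately apply---is the loss of Borel measurability when pulling back a universally measurable $R$. No naive inner or outer approximation of $R$ by Borel sets is compatible with the hitting-time operation, because $R\mapsto\tau_R$ is not monotone-continuous in a way that respects null sets of the law of $B$. Invoking projection of universally measurable sets is thus genuinely necessary. I expect this to be the only nontrivial point in the proof; precisely this kind of subtlety in the multi-prior setting is what motivates the paper's generalized projection theorem developed in Section~\ref{sec:proj}, although the Polish setting here permits the classical universal-measurable version.
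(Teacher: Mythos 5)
Two issues arise; the second is fatal to the argument as you have written it. First, replace $\leq$ by $<$ throughout. Since $\tau_R$ is an infimum, $\tau_R(\omega)=t$ can hold with no $s\in[0,t]$ satisfying $\omega(s)\in R$, so the event $\{\tau_R\leq t\}$ is not determined by $\omega|_{[0,t]}$ and the claimed identity $\{\tau_R\leq t\}=\pr_{\Omega_t}(e^{-1}(R))$ is false. The correct identity, used by the paper, is $\{\tau_R<t\}=\pr_{\Omega}\left(\{(s,\omega)\in[0,t)\times\Omega:B_s(\omega)\in R\}\right)$; right-continuity of $\mathbb{F}$ then makes $\{\tau_R<t\}\in\F_t$ for all $t$ sufficient, just as you intended.

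Second, and fundamentally, the projection theorem you invoke does not exist. You place $e^{-1}(R)$ in $\cU([0,t]\times\Omega_t)$---the universal completion of the product Borel $\sigma$-algebra---and assert that projections of such sets onto $\Omega_t$ are again universally measurable, as a ``classical consequence of Choquet's capacitability theorem.'' Choquet's theorem gives capacitability of \emph{analytic} sets, not of arbitrary universally measurable ones, and the general claim ``$G\in\cU(X\times Y)$ with $X,Y$ Polish implies $\pr_X(G)\in\cU(X)$'' is not a theorem of ZFC: under $V=L$ there is a coanalytic, hence universally measurable, $C\subseteq X\times[0,1]$ whose projection is a $\Sigma^1_2$ set that is not Lebesgue measurable. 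What the classical projection theorem (Theorem~I.4.14 of Revuz--Yor, or Theorem~III.23 of Castaing--Valadier) actually requires is that the set lie in a \emph{product} $\sigma$-algebra $\B([0,t])\otimes\F$ with $\F$ complete for a fixed measure; this class is not what $\cU([0,t]\times\Omega_t)$ gives you. The paper therefore works prior by prior: it fixes $\P\in\fP(\Omega)$, uses a discretization of $s\mapsto B_s$ to show $\Gamma_t\in\B([0,t])\times\F^\P_t$ (exploiting that the $\P$-augmentation $\F^\P_t$ is $\P$-complete and absorbs preimages of $\cU(\R^d)$ sets under each fixed $B_s$), applies the projection theorem for that $\P$ to get $\pr_\Omega(\Gamma_t)\in\F^\P_t$, and only afterward intersects over $\P$ to land in $\F_t$. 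Your one-shot appeal to $\cU$ of the product space skips this prior-by-prior localization, and it cannot be repaired by a cleaner projection theorem because no such theorem is available.
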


\begin{proof}
For each fixed $s\geq 0$, $\omega\mapsto B_s(\omega)$ is by definition $\F^B_s$-measurable. Thanks to \eqref{F^X=B} and Corollary 7.44.1 of \cite{BS-book-78}, we have the relation
\begin{equation}\label{each s}
A\in\cU(\R^d)\quad \hbox{implies}\quad (B_s)^{-1}(A)\in \cU(\Omega_s) = \F_s \subseteq \F^\P_s\ \hbox{for all}\ \P\in\fP(\Omega),
\end{equation}
where the equality follows from \eqref{G=U}.

Fix $t>0$, by the right continuity of $B$, we can construct, as in Proposition 1.1.13 of \cite{KS-book-91}, a sequence of discretized processes $\{B^{(n)}\}_{n\in\N}$, which satisfy $B^{(n)}_s(\omega)\to B_s(\omega)$ for all $(s,\omega)\in[0,t]\times\Omega$. For each $\P\in\fP(\Omega)$, by using \eqref{each s}, the constructed map $(s,\omega)\mapsto B^{(n)}_s(\omega)$ from $([0,t]\times \Omega, \B([0,t])\times \F^\P_t)$ to $(\R^d,\cU(\R^d))$ is measurable, for all $n\in\N$. As $n\to\infty$, we conclude that the map $(s,\omega)\mapsto B_s(\omega)$, again from $([0,t]\times \Omega, \B([0,t])\times \F^\P_t)$ to $(\R^d,\cU(\R^d))$, is also measurable. Given $R\in\cU(\R^d)$, it follows that
\begin{equation}\label{Gamma_t}
\Gamma_t := \{(s,\omega)\in [0,t)\times \Omega: B_s(\omega)\in R\} \in\B([0,t])\times \F^\P_t,\quad \forall t\ge 0\ {and}\ \P\in\fP(\Omega).
\end{equation}
Now, for each $\P\in\fP(\Omega)$, thanks to Theorem I.4.14 of \cite{RY-book-99} and \eqref{Gamma_t}, we have $\{\tau_R < t \} = \text{proj}_{\Omega}(\Gamma_t) \in\F^\P_t$, for all $t\ge 0$. Thus, $\{\tau_R < t \}\in \bigcap_{\P\in \fP(\Omega)} \F^\P_t = \F_t$, for all $t\ge 0$. As $\mathbb F=(\F_t)_{t\ge 0}$ is by construction right-continuous, $\tau_R$ is an $\mathbb F$-stopping time, i.e., $\tau_R\in\T$. %It can be proved similarly that $\rho_R$ is an $\mathbb F$-stopping time.
\end{proof}
	
\begin{remark}\label{rem:debut}
If we had $R\in\B(\R^d)$, $\tau_R\in\T$ would follow directly from the debut theorem (see Theorem I.4.15 in \cite{RY-book-99} or Theorem 2.1 in \cite{Bass10}). Indeed, the theorem would imply that for any $t\ge 0$, $\{\tau_R\le t\}\in\F^\P_t$ for all $\P\in\fP(\Omega)$, which readily yields $\tau_R\in\T$. The debut theorem, noticeably, requires progressive measurability of $\{(s,\omega):B_s(\omega)\in R\}$, i.e.
\begin{equation*}%\label{debut cond}
\{(s,\omega): 0\le s\le t,\ B_s(\omega)\in R\}\in\B([0,t])\times\F_t,\quad\forall t\ge 0.
\end{equation*}
This holds trivially for $R\in\B(\R^d)$ (by the fact that $B$ is progressively measurable), but is questionable for $R\in\cU(\R^d)$. %Hence, it is unclear if the debut theorem can be applied with $R\in\cU(\R^d)$. 
Without using the debut theorem, Lemma~\ref{lem:in T} proves $\tau_R\in\T$ generally for all $R\in\cU(\R^d)$. %, under additional technicalities.  
%As the next result shows, much more care needs to be taken for $R\in \cU(\R^d)$. 
\end{remark}	
	
\begin{remark}\label{rem:two Borel concepts}
There are two related, but distinct, ``Borel measurability'' of $\tau_R$: %(and $\rho_R$).
\begin{itemize}
\item [1.] $\tau_R$ is a Borel measurable random variable, i.e. $\{\tau_R\le t\}\in\mathcal F^B_\infty$ for all $t\ge0$. % is $\mathcal F^B_\infty$-measurable). 
\item [2.] $\tau_R$ is an $(\mathcal F^B_t)$-stopping time, i.e. $\{\tau_R\le t\}\in\mathcal F^B_t$ for all $t\ge 0$. 
\end{itemize} 
For $R\in\B(\R^d)$, $\tau_R$ is always a Borel measurable random variable, thanks to Galmarino's test (see e.g. p.46, Exercise 4.21 in \cite{RY-book-99}). However, $\tau_R$ is not necessarily an $(\mathcal F^B_t)$-stopping time, for instance, when $R$ is open; see p.43, Proposition 4.6 and the discussion below it in \cite{RY-book-99}. In general, $\tau_R$, even with $R\in\B(\R^d)$, is only known to be an $(\mathcal F_t)$-stopping time (i.e. $\tau_R\in\T$), as Remark~\ref{rem:debut} points out. 
\end{remark}
	
Next, we investigate the measurability of $x\mapsto J(x,R)$. The key in the proof below is the use of a non-standard measurable projection theorem (i.e., Theorem~\ref{thm:projection} below) that does not require Borel structure, as is needed in all classical projection theorems. Section \ref{sec:proj} will be devoted to the derivation of this new measurable projection theorem.

\begin{lemma}\label{measuretheta}
Suppose $\{(x,\cP(x)): x\in\R^d\}\subseteq \R^d\times \fP(\Omega)$ is universally measurable and $g:\R^d\to\R$ is universally measurable.
For any $R\in\cU(\R^d)$, the functions
\begin{equation}\label{maps}
x \mapsto \inf_{\P\in\cP(x)} \E^{\P}[e^{-r\rho_R}g(B_{\rho_R})]\quad \hbox{and}\quad x \mapsto \sup_{\P\in\cP(x)} \E^{\P}[e^{-r\rho_R}g(B_{\rho_R})]
\end{equation}
are universally measurable. Hence, $x\mapsto J(x,R)$ in \eqref{J} is universally measurable.
%Hence, $\Theta(R)\in \cU(\R)$ and $\Theta$ can be taken as an operator $\Theta: \cU(\R)\mapsto \cU(\R)$.
\end{lemma}

\begin{proof}
By Lemma~\ref{lem:in T}, universal measurability of $g$, and Proposition 7.44 of \cite{BS-book-78}, the function $e^{-r\rho_R}g(B_{\rho_R})$, mapping $\Omega$ to $\R$, is universally measurable. It follows that $f(\P):=\E^{\P}[e^{-r\rho_R}g(B_{\rho_R})]$, viewed as a map from $\fP(\Omega)$ to $\R$, is universally measurable, thanks to Corollary 7.46.1 of \cite{BS-book-78}. For any $K\in \R$, this implies that 
\begin{equation}\label{the set}
\left\{\P\in \fP(\Omega): \E^{\P}[e^{-r\rho_R}g(B_{\rho_R})]<K \right\}
\end{equation} 
is universally measurable. Consequently,
\begin{equation}\label{A}
A := \big\{(x,\cP(x)):x\in\R^d\big\} \cap \left(\R^d\times\left\{\P\in \fP(\Omega): \E^{\P}[e^{-r\rho_R}g(B_{\rho_R})]<K  \right\}\right) 
\end{equation}
is universally measurable, thanks to the assumption that $\left\{(x,\cP(x)):x\in\R^d\right\}$ is universally measurable. Now, observe that 
\begin{align}\label{projection}
&\left\{x\in\R^d : \inf_{\P\in\cP(x)} \E^{\P}[e^{-r\rho_R}g(B_{\rho_R})]<K\right\} = \text{proj}_{\R^d} (A), %\\
%&\hspace{0.2in} = \text{proj}_{\R^d}\left(\left\{(x,\cP(x)):x\in\R^d\right\} \bigcap \left(\R^d\times\left\{\P\in \fP(\Omega): \E^{\P}[e^{-r\rho_R}g(B_{\rho_R})]<K  \right\}\right)  \right)
\end{align}
which is universally measurable in $\R^d$, thanks to the generalized measurable projection result Theorem~\ref{thm:projection}. % and Remark \ref{measurermk}, a generalized measurable projection result, for which the proof will be presented in the next section.
Thus, we conclude that $x \mapsto \inf_{\P\in\cP(x)} \E^{\P}[e^{-r\rho_R}g(B_{\rho_R})]$ is universally measurable. By a similar argument and the fact that the complement of a universally measurable set is universally measurable, we obtain that $x \mapsto \sup_{\P\in\cP(x)} \E^{\P}[e^{-r\rho_R}g(B_{\rho_R})]$ is also universally measurable. Hence, by definition in \eqref{J}, $x\mapsto J(x,R)$ is universally measurable.
\end{proof}

\begin{remark}\label{clremk}
With $R\in\cU(\R^d)$, classical measurable projection theorems (see e.g. Theorem 2.12 of \cite{Crauel-book-02} or Theorem III.23 of \cite{CV-book-77}) cannot be used in the proof above: %of Lemma \ref{measuretheta}. 
applying these theorems requires Borel measurability of the set in \eqref{the set}, %$\text{proj}_{\R^d} (A)=\left\{\P\in \fP(\Omega): \E^{\P}[e^{-r\rho_R}g(X_{\rho_R})]<K  \right\}$, 
which is only universally measurable in general. % (as $R\in\cU(\R^d)$ need not belong to $\B(\R^d)$). % needs to be a Borel subset of $\fP(\Omega)$. 
Theorem~\ref{thm:projection} then comes into play:  it obtains the same projection result as classical theorems, but does not require Borel measurability to start with. 
%However, even for $R\in\B(\R^d)$, $\rho_R$ in general is only universally measurable; indeed, for any $t\ge 0$, $\{\rho_R\le t\}$ lies in $\F^\P_t$ for all $\P\in\fP(\Omega)$, but not necessarily in $\F^B_t$. Thus, even with Borel measurable $g$, it is only  guaranteed that $\P\mapsto \E^{\P}[e^{-r\rho_R}g(X_{\rho_R})]$ is universally measurable, but not Borel measurable. 
\end{remark}

It can now be established that $\Theta(R)$ remains in $\cU(\R^d)$ whenever $R\in\cU(\R^d)$. 

\begin{proposition}\label{prop:measuretheta}
Suppose $\{(x,\cP(x)): x\in\R^d\}\subseteq \R^d\times \fP(\Omega)$ is universally measurable and $g:\R^d\to\R$ is universally measurable. Then, for any $R\in\cU(\R^d)$, $\Theta(R)\in \cU(\R^d)$. 
\end{proposition}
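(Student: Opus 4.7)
The proposition is essentially a direct corollary of Lemma~\ref{measuretheta}, so the plan is to stitch its two conclusions together and then verify that the set operations defining $\Theta(R)$ preserve universal measurability.

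First, I would observe that by Lemma~\ref{measuretheta}, under the standing hypotheses on $\{(x,\cP(x)):x\in\R^d\}$ and on $g$, both maps
\[
x\mapsto \inf_{\P\in\cP(x)} \E^{\P}[e^{-r\rho_R}g(B_{\rho_R})],\qquad x\mapsto \sup_{\P\in\cP(x)} \E^{\P}[e^{-r\rho_R}g(B_{\rho_R})]
\]
are universally measurable from $\R^d$ to $\R$. Since the collection of universally measurable real-valued functions on $\R^d$ is stable under $\R$-linear combinations, the convex combination $J(\cdot,R)$ defined in \eqref{J} is universally measurable. Together with the assumed universal measurability of $g$, this gives universal measurability of the difference $g-J(\cdot,R)$.

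Next I would read off the three regions in \eqref{regions} as preimages under this universally measurable difference: $S_R = (g-J(\cdot,R))^{-1}((0,\infty))$ and $I_R = (g-J(\cdot,R))^{-1}(\{0\})$, so both $S_R$ and $I_R$ lie in $\cU(\R^d)$. Given $R\in\cU(\R^d)$, the class $\cU(\R^d)$ is a $\sigma$-algebra and is in particular closed under finite unions and intersections, so
\[
\Theta(R)=S_R\cup(I_R\cap R)\in\cU(\R^d),
\]
as required.

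The only mild subtlety is making sure that $J(\cdot,R)$ is well-defined as a real-valued universally measurable function rather than merely an extended real-valued one, so that the preimages above make unambiguous sense; this is already ensured by the standing integrability hypothesis $\sup_{\P\in\cP(x)}\E^\P[\sup_{0\le t\le \infty} e^{-rt}|g(B_t)|]<\infty$ imposed just before \eqref{opertheta}. With that in hand, no further analytic or measure-theoretic work is needed beyond Lemma~\ref{measuretheta}, which is where the nontrivial measurable projection argument has already been absorbed.
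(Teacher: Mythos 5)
Your proof is correct and follows essentially the same route as the paper's own argument: invoke Lemma~\ref{measuretheta} to get universal measurability of $J(\cdot,R)$, deduce that $S_R$, $I_R$ (and $C_R$) lie in $\cU(\R^d)$, and close the argument using that $\cU(\R^d)$ is a $\sigma$-algebra. The extra care you take about $J(\cdot,R)$ being real-valued (via the standing integrability assumption) is a reasonable point that the paper leaves implicit.
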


\begin{proof}
As $x\mapsto J(x,R)$ is universally measurable (Lemma~\ref{measuretheta}) and $g$ is also universally measurable, $S_R$, $I_R$, and $C_R$, defined in \eqref{regions}, all belong to $\cU(\R^d)$. Hence, $\Theta(R) = S_R\cup (I_R\cap R) \in \cU(\R^d)$.
\end{proof}

\subsubsection{Discussion on the use of universally measurable stopping policies}\label{why use cU}
Our choice to work with stopping policies $R\in\cU(\R^d)$, instead of $R\in\B(\R^d)$, can now be explained clearly. In view of Proposition~\ref{prop:measuretheta}, $\Theta$ defined in \eqref{opertheta} is an operator acting on $\cU(\R^d)$, i.e., $\Theta:\cU(\R^d)\to \cU(\R^d)$. This is a desirable property: it facilitates the definition of equilibria in Definition~\ref{def:E} below, as well as the fixed-point iteration in \eqref{iteration}. 

If we instead work with $R\in\B(\R^d)$, it is not necessarily true that $\Theta(R)\in\B(\R^d)$, i.e., $\B(\R^d)$ is not closed under the operator $\Theta$. 
Specifically, when we focus on $R\in\B(\R^d)$, Lemma~\ref{measuretheta} can be revised as follows. Suppose that $\{(x,\cP(x)): x\in\R^d\}\subseteq \R^d\times \fP(\Omega)$ is Borel measurable and $g:\R^d\to\R$ is Borel measurable. As $\tau_R$ is now a Borel measurable random variable (see Remark~\ref{rem:two Borel concepts}), we argue as in the proof of Lemma~\ref{measuretheta} that $f(\P):=\E^{\P}[e^{-r\rho_R}g(B_{\rho_R})]$ is Borel measurable, thanks to Corollary 7.29.1 of \cite{BS-book-78}. It follows that the set in \eqref{the set} is now Borel measurable, and so is the set $A$ in \eqref{A}. With such enhanced measurability of $A$ (from universal to Borel), we can apply classical measurable projection theorems (see e.g. Theorem 2.12 of \cite{Crauel-book-02} or Theorem III.23 of \cite{CV-book-77}) to conclude that $\text{proj}_{\R^d} (A)$ is universally measurable, with no need of the generalized projection result Theorem~\ref{thm:projection}. Note, however, that $\text{proj}_{\R^d} (A)$ is only universally measurable in general even when $A$ is now Borel. This reflects the general fact that the projection of a Borel measurable set need not be Borel measurable; see Section 7.6 of \cite{BS-book-78}. With only universal measurability of $\text{proj}_{\R^d} (A)$ in \eqref{projection} (and thus of the maps in \eqref{maps} and $x\mapsto J(x,R)$), the sets $S_R$, $I_R$, $C_R$, and therefore $\Theta(R)$, are only universally measurable in general. To wit, with $R\in\B(\R^d)$, $\Theta(R)\in\B(\R^d)$ is not guaranteed. 

\begin{remark}\label{rem:MA needs cU}
The need of $R\in\cU(\R^d)$, as detailed above, is caused by model ambiguity. Indeed, in the absence of ambiguity (i.e. for all $x\in\R^d$, $\cP(x)=\{\P^x\}$ for some $\P^x\in\fP(\Omega)$), \eqref{J} reduces to $J(x,R) =\E^{\P^x}[e^{-r\rho_R}g(B_{\rho_R})]$,  where $B$ is strong Markov under $(\P^x)_{x\in\R^d}$; recall \eqref{cP}. As long as $g$ is Borel measurable, $x\mapsto  J(x,R)$ is Borel measurable for all $R\in\B(\R^d)$. %There is then no need to consider more general $R\in\cU(\R^d)$. 
With measurability much easier to achieve now (with no need of any projection), one can simply focus on $R\in\B(\R^d)$. 
\end{remark}

%\begin{remark}\label{why use cU}
%%Taking stopping policies $R$ from $\cU(\R^d)$, instead of $\B(\R^d)$ or the collection of analytic sets in $\R^d$, ensures that $\Theta$ is an operator, which facilitates our fixed-point formulation. 
%If we take $R$ to be a Borel (or analytic) set, it is not guaranteed that $\Theta(R)$ is also Borel (or analytic). Indeed, for any $R\in\B(\R^d)$, as $x\mapsto J(x,R)$ is only universally measurable (recall Remark~\ref{clremk}), $\Theta(R)\in \B(\R^d)$ need not hold. Alternatively, take $R$ to be an analytic set in $\R^d$. Even if we show that $x\mapsto J(x,R)$ is upper (resp. lower) semi-analytic and thus $C_R$ (resp. $S_R$) is analytic, it is unclear if $S_R$ (resp. $C_R$) is analytic. Whether $\Theta(R)$ is analytic is then in question.   
%\end{remark}

%%%%%%%%%%%%%%%%%%%%%%%%%%%%

\subsection{Problem Formulation}
In view of Proposition~\ref{prop:measuretheta}, $\Theta$ defined in \eqref{opertheta} can be viewed as an operator acting on $\cU(\R^d)$, i.e., $\Theta:\cU(\R^d)\to \cU(\R^d)$. An equilibrium is then defined as a fixed point of the operator.

\begin{definition}\label{def:E}
$R\in\cU(\R^d)$ is called an equilibrium if $\Theta(R)=R$. We denote by $\cE$ the collection of all equilibria.
\end{definition}

\begin{remark}[Existence of an equilibrium]
The entire space $\R^d$ is an equilibrium. Indeed, for any $x\in\R^d$, $\rho_{\R^d} = 0$ and thus $J(x,\R^d)=g(x)$. This implies $I_{\R^d}=\R^d$, so that $\Theta(\R^d) =\R^d$. 
\end{remark}

The general methodology for finding equilibria (other than the entire space $\R^d$) is to perform fixed-point iterations: one starts with an arbitrary $R\in\cU(\R^d)$, and apply $\Theta$ to it repetitively until an equilibrium is reached. That is, we take 
\begin{equation}\label{iteration}
R_* := \lim_{n\to\infty} \Theta^n(R)
\end{equation}
as a candidate equilibrium. 
%{\color{blue}It is noted that the limit of sets in \eqref{iteration} should be interpreted in either of two equivalent ways: (i) the upper and lower bounds on the sequence that converge monotonically to the same set and (ii) by convergence of a sequence of indicator functions which are themselves real-valued. In some special cases such as when $\Theta^n(R)$ is a monotone sequence of sets, the convergence in \eqref{iteration} holds trivially.}
To make this fixed-point approach rigorous, two important questions need to be answered: 
\begin{itemize}
\item [(i)] How do we make sense of the limit-taking in \eqref{iteration}, and further show that $R_*$ is well-defined?
\vspace{-0.15in}\\
The right hand side of \eqref{iteration} involves a sequence of sets in $\R^d$, whose convergence can be complicated and there is no standard definition of it. Nonetheless, when the state process is a one-dimensional diffusion, Section~\ref{sec:converg} below shows that $(\Theta^n(R))_{n\in\N}$ is nondecreasing, so that $R_*$ can simply be defined as $\bigcup_{n\in\N}\Theta^n(R)$; see Proposition~\ref{limitstopping}.
 \item [(ii)] Given that $R_*$ is well-defined, is it indeed an equilibrium (i.e., $\Theta(R_*)=R_*$)?
\end{itemize}
%{
%\color{blue}Up to this point, we are in a general multi-dimensional framework. It has been shown that the iteration operator $\Theta(R)$ in \eqref{opertheta} is always well defined thanks to our new generalized measurable projection theorem in Section \eqref{sec:proj}.} To make our iteration argument work, we need to show rigorously that: (i) the above limit-taking is well-defined for the sequence of sets $\Theta^n(R)$ and we have $R_*\in\cU(\R^d)$, and (ii) $R_*$ is indeed an equilibrium, i.e., $\Theta(R_*)=R_*$. {\color{blue}In general,  such results are very challenging to establish. However, we hope our general set up and conclusions can help to pave the way for some future research in a general multi-dimensional framework.}
The next section focuses on answering (i) and (ii) in a concrete one-dimensional diffusion model under drift and volatility uncertainty. %parameter ambiguity
% such that $\{\Theta^n(R)\}_n$ can be shown as a nondecreasing sequence. % in a general formulation of model ambiguity. 

%%%%%%%%%%%%%%%%%%%%%%%%%%%%%%%%%%%%%%
%%%%%%%%%%%%%%%%%%%%%%%%%%%%%%%%%%%%%%

\section{Convergence of Fixed-Point Iterations}\label{sec:converg}
In this section, under a strong formulation of model ambiguity, we will show that the fixed-point iteration \eqref{iteration} indeed converges to an equilibrium, when the involved state process is a one-dimensional diffusion. Our analysis crucially relies on the ``regular'' property of one-dimensional diffusion processes (i.e., \eqref{regular} below). The multi-dimensional case is left for future research.%{\color{blue}which requires some distinctive mathematical methods.}
\footnote{For a multi-dimensional diffusion process, there is no corresponding notion of being ``regular''. As a result, the analysis in this section does not extend naturally to a multi-dimensional case.}

%We have confirmed that $\Theta: \cU(\R)\mapsto \cU(\R)$ is well defined given the assumption that $\{(x,\cP(x)): x\in\R\}$ is universally measurable. For specific models in this section, we need to first verify that  $\{(x,\cP(x)): x\in\R\}$ is indeed universally measurable and then show that the iteration converges so that the fixed point argument will conclude the existence of the equilibrium stopping policy. However, as we start with a family of probabilities in $\cP(x)$, the convergence is expected to hold in the quasi sure sense and the proof becomes highly technical in a general setting. It is the scope of this paper to only consider one dimensional diffusion process with model uncertainty in the strong formulation, so that Markov properties and quasi sure analysis can become helpful ingredients in our proofs. The more general setting with model uncertainty, especially the multi-dimensional case with correlation uncertainty will be left for our future research.
    	
Take $d=1$ in the set-up of Section~\ref{sec:model}. Let $\P_0\in\fP(\Omega)$ denote the Wiener measure, under which $B$ is a standard Brownian motion. %under $\P_0$, In \eqref{SDE}, $\int_{0}^{t} \sigma(X^{x, b,\sigma}_t) dB_t$ is the standard stochastic integral with respect to Brownian motion. 
Let $I=(\ell,r)$, for some $-\infty\le \ell <r \le \infty$, be a given interval. For any $x\in{I}$, consider the stochastic differential equation
\begin{equation}\label{SDE}
\quad X^{x, b,\sigma}_t = x + \int_0^t b(X^{x, b,\sigma}_s) ds +  {\int_{0}^{t}} \sigma(X^{x, b,\sigma}_s) dB_s,\quad \forall 0\le t<\zeta,\qquad \P_0\hbox{-a.s.},
\end{equation}
where $\zeta:= \lim_{n\to\infty} S_n$, with $S_n := \inf\{t>0:X^{x,b,\sigma}_t\notin (\ell+1/n,r-1/n)\}$ for all $n\in\N$. 
We assume that $X^{x,b,\sigma}$ is absorbed at the endpoints of $I$ in the case $\zeta<\infty$. For any $y\in I$ and $A\in\cU(I)$, we introduce the hitting times
\begin{equation}\label{hitX}
T^{x,b,\sigma}_y := \inf\{t> 0: X^{x,b,\sigma}_t =y\}\qquad \hbox{and}\qquad T^{x,b,\sigma}_A := \inf\{t> 0: X^{x,b,\sigma}_t \in A\}.
\end{equation}
For simplicity, we will often write $X$, $T^x_y$, and $T^x_A$ for $X^{x,b,\sigma}$, $T^{x,b,\sigma}_y$, and $T^{x,b,\sigma}_A$.

%%%%%%%%%%%%%%%%%%%%%%

\subsection{A Strong Formulation of Model Ambiguity}
Let us introduce a subset of $\{(b,\sigma): b,\sigma\ \hbox{maps $I$ to $\R$}\}$ that help specify the scope of uncertainty we would like to deal with. 

\begin{definition}\label{def:Gamma}
Let $\mathfrak{L}$ be the collection of functions $b, \sigma:I\to \R$ that are Lipschitz continuous and grows at most linearly on $I$, with $\sigma^2(y)>0$ for all $y\in I$. Moreover,
\begin{itemize}
\item [(i)] let $\A$ be the collection of all set-valued functions $\Pi: I\to 2^{\mathfrak L}$;
\item [(ii)] let $\A^\infty$ be the collection of all set-valued functions $\Pi: I\to 2^{\mathfrak L}$ satisfying the following: for any $x\in I$, there exists $K>0$ such that for any $(b,\sigma)\in\Pi(x)$,
\[
|b(u)-b(v)|+|\sigma(u)-\sigma(v)|\le K|u-v|\quad \hbox{and}\quad |b(u)|+|\sigma(u)|\le K(1+|u|),\quad \forall u, v\in I.
\]
\end{itemize}
% such that $\Pi(x)\subseteq\Gamma$ for all $x\in I$.
%satisfying (i) $\sigma^2(y)>0$ for all $y\in I$, and (ii) $b$ and $\sigma$ are Lipschitz and grows at most linearly.  
%\begin{itemize}
%\item [(i)] for any $y\in I$, $\sigma^2(y)>0$ and $\int_{y-\eps}^{y+\eps}\frac{|b(z)|}{\sigma^2(z)}dz<\infty$ for some $\eps>0$; %is locally integrable on $I$,%for all $y\in I$
%\item [(ii)] \eqref{SDE} admits a unique strong solution. 
%\end{itemize}
%For any set-valued map $\Pi: I\to\{(b,\sigma): b,\sigma\ \hbox{maps $I$ to $\R$}\}$ such that $\Pi(x)\subseteq\Gamma(x)$ for all $x\in I$, we will write $\Pi\subseteq \Gamma$ for brevity. 
\end{definition}
Here, %$\mathfrak L$ represents the largest extent of ambiguity, independent of the current state $x\in I$, our framework can accommodate. On the other hand, 
each $\Pi:I\to 2^{\mathfrak L}$ identifies the actual ambiguity faced by the agent, depending on the current state $x\in I$. That is, $\Pi(x)\subseteq\mathfrak L$ is the collection of coefficients $(b,\sigma)$ in \eqref{SDE} that are considered plausible by the agent at $x\in I$. 
%practical situations and the agent's subjective beliefs. 

%Note that Definition~\ref{def:Lambda} (i) is by no means restrictive: it is a fairly general condition related to the existence of a weak solution to \eqref{SDE} that is unique in distribution; see Theorem 5.5.15 in \cite{KS-book-91}. In many applications where a unique strong solution exists (as required in Definition~\ref{def:Lambda} (ii)), Definition~\ref{def:Lambda} (i) can be easily verified. More importantly, Definition~\ref{def:Lambda} (i) gives rise to %Definition~\ref{def:Lambda} (i) has a desired consequence: the {\it regularity} of $X^{x,b,\sigma}$; see Section~\ref{subsec:regular} for details. 

For each $x\in{I}$ and $(b,\sigma)\in\mathfrak L$, the Lipschitz and linear growth conditions in Definition~\ref{def:Gamma} ensure the existence of a unique strong solution $X^{x,b,\sigma}$ to \eqref{SDE}. By viewing $X^{x,b,\sigma}$ as a map from $\Omega$ to itself, we define the  probability measure $\P^x_{b,\sigma}\in\fP(\Omega)$ by
\begin{equation}\label{P^x}
\P^x_{b,\sigma} := \P_0 \circ (X^{x,b,\sigma})^{-1}.
\end{equation}
By construction, for any $A\subseteq \cU(\Omega)$,
\begin{equation}\label{P^x to P_0}
\P^x_{b,\sigma}(A) = \P_0\left(\{\omega\in\Omega: X^{x,b,\sigma}(\omega)\in A\}\right). 
\end{equation}
%\begin{definition}
%Now, for any $x\in I$, the collection of probabilities associated with $\mathfrak L$ is given by
%\[
%\cP_{\mathfrak L}(x):= \{\P^x_{b,\sigma} : (b,\sigma)\in\mathfrak L\}.
%\]
Given $\Pi\in\A$, we introduce
\begin{equation}\label{cP Pi}
\cP(x) :=  \{\P^x_{b,\sigma} : (b,\sigma)\in\Pi(x)\},\quad \forall x\in I.
\end{equation}
%\end{definition}

\begin{remark}\label{rem:rectangular}
The ``rectangularity'' of the set of priors (cf. \cite{ES03}) was refined in \cite{NvH13} as ``closedness under conditioning'' plus ``stability under pasting'' (i.e., Assumption 2.1 (ii) and (iii) therein). While ``rectangularity'' is widely assumed in the literature of model ambiguity (recall discussions below \eqref{consistent condition}), we do not impose it on $\{\cP(x)\}_{x\in I}$. 

To see this, note that ``closedness under conditioning'' in our setting amounts to the following: %it is necessary that each $\cP(x)$ contains conditional probabilities of all probabilities in $\cP(y)$, for all $y\neq x$. Specifically, 
given $x\in I$ and $(b,\sigma)\in\Pi(x)$, for any $t>0$, $(\P^x_{b,\sigma}\mid\F_t) (\omega)\in \cP(X_t(\omega))$ for $\P^x_{b,\sigma}$-a.e. $\omega\in\Omega$, 
where $\P^x_{b,\sigma}\mid\F_t$ denotes the conditional probability of $\P^x_{b,\sigma}$ given $\F_t$.  As $X^{x,b,\sigma}$ is a time-homogeneous Markov process, we have $(\P^x_{b,\sigma}\mid\F_t)(\omega)=\P^{X_t(\omega)}_{b,\sigma}$, so that the above condition becomes 
\begin{equation}\label{closedness}
\P^{X_t(\omega)}_{b,\sigma} \in \cP(X_t(\omega))\quad \hbox{for $\P^x_{b,\sigma}$-a.e. $\omega\in\Omega$}. 
\end{equation}
This can be easily violated by $\{\cP(x)\}_{x\in I}$ in \eqref{cP Pi}. Specifically, take $A\subseteq I$ with $\P^x_{b,\sigma}(X_t \in A)>0$. For any $\Pi: I\to 2^{\mathfrak L}$ in Definition~\ref{def:Gamma}, as long as $(b,\sigma)\notin \Pi(y)$ for all $y\in A$, we have $\P^y_{b,\sigma}\notin \cP(y)$ for all $y\in A$ and thus \eqref{closedness} is violated on the set $\{X_t\in A\}$. 
%$\P^x_{b,\sigma}\otimes_t \Q \in \cP(x)$ for all $\Q\in\cP(\omega_t)$, i.e. 
%\begin{equation}\label{pasting}
%\P^x_{b,\sigma}\otimes_t \P^{\omega_t}_{b',\sigma'} \in \cP(x)\quad \hbox{for all}\ (b',\sigma')\in\Pi(\omega_t). 
%\end{equation}
%As long as there is $(b',\sigma')\in\Pi(\omega_t)$ that does not coincide with $(b,\sigma)\in\Pi(x)$ on $\{\omega_s:0\le s\le t\}\subseteq I$, the space travelled by $\omega$ up to time $t$, we have $\P^x_{b,\sigma}\otimes_t \P^{\omega_t}_{b',\sigma'}\notin\cP(x)$ and thus \eqref{pasting} fails. %thanks to \eqref{cP Pi} and \eqref{P^x}. 
%However, by \eqref{cP Pi} and \eqref{P^x}, the only possibility for $\P^x_{b,\sigma}\otimes_t \Q \in \cP(x)$ is $\Q= \P^{\omega_t}_{b,\sigma}$.
%That is, \eqref{pasting} requires $\cP(y)$ to contain exactly the measure $\P^y_{b,\sigma}$ for all $y\in I$, reducing to the case with no ambiguity where $(b,\sigma)$ is fixed throughout. 
%As $X^{x,b,x\sigma}$ is a regular diffusion.  That is, we need $\P^{\omega_t}_{b,\sigma}\in \cP(\omega_t)$. 
\end{remark}

Two important consequences of Definition~\ref{def:Gamma}, the {\it regularity} of $X^{x,b,\sigma}$ and the {\it relative compactness} of $\cP(x)$, are established in Lemmas \ref{lem:regular} and ~\ref{lem:rc} below. 

\begin{lemma}\label{lem:regular}
For any $x\in {I}$ and $(b,\sigma)\in\mathfrak L$, $X^{x,b,\sigma}$ is a {\it regular} diffusion, i.e.,  
\begin{equation}\label{regular}
\hbox{for any}\ x\in {I},\quad \P_0(T^{x}_y <\infty) >0,\quad \forall y\in I.
\end{equation}
\end{lemma}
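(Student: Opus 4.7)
The plan is to prove the regularity of $X=X^{x,b,\sigma}$ by the classical scale-function argument, restricted to a compact subinterval of $I$ on which the coefficients are bounded and $\sigma^2$ is bounded away from zero.

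Without loss of generality assume $y>x$ (the case $y<x$ is symmetric). Since $y\in I=(\ell,r)$, choose $a,b\in I$ with $\ell<a<x<y<b<r$, so $[a,b]$ is a compact subinterval of $I$. By path continuity and the definition of $T^x_y$, it suffices to show that the process exits $(a,b)$ through the right endpoint with positive probability: if $S:=\inf\{t>0:X_t\notin(a,b)\}$ and $\{X_S=b\}$ has positive probability, then on that event $T^x_y\le S<\infty$ by the intermediate value theorem, yielding $\P_0(T^x_y<\infty)>0$.

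On $[a,b]$, continuity of $\sigma$ together with $\sigma^2>0$ (and compactness) gives some $\delta>0$ with $\sigma^2\ge\delta$, while $b$ is bounded. Define the scale function
\[
p(u):=\int_a^u \exp\Bigl(-\int_a^v \tfrac{2b(w)}{\sigma^2(w)}\,dw\Bigr)\,dv,\qquad u\in[a,b],
\]
which is $C^2$, strictly increasing, and bounded on $[a,b]$. By Itô's formula $M_t:=p(X_{t\wedge S})$ is a local martingale, and boundedness of $p$ on $[a,b]$ makes $M$ a bounded martingale. Next I would verify $\P_0(S<\infty)=1$ using the non-degeneracy $\sigma^2\ge\delta$ on $[a,b]$: since $\langle X\rangle_{t\wedge S}\ge\delta(t\wedge S)$ while $X_{t\wedge S}$ stays in $[a,b]$, a boundedness-of-variance comparison (equivalently, a Girsanov reduction to a time-changed Brownian motion) forces $\E[S]<\infty$, which is the standard exit-time finiteness for a non-degenerate diffusion on a bounded interval.

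With $\P_0(S<\infty)=1$ established, optional stopping applied to the bounded martingale $M$ yields $\E[p(X_S)]=p(x)$. Since $p(X_S)\in\{p(a),p(b)\}$ and $p(a)<p(x)<p(b)$,
\[
\P_0(X_S=b)=\frac{p(x)-p(a)}{p(b)-p(a)}>0,
\]
and combined with the continuity argument above this gives $\P_0(T^x_y<\infty)>0$, completing the proof. I do not anticipate a serious obstacle: the only points requiring care are the possibly infinite endpoints of $I$ (bypassed by working on $[a,b]\subset I$) and the a.s.\ finiteness of $S$, both of which are standard for non-degenerate one-dimensional diffusions.
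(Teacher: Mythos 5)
Your proof is correct, and it takes a genuinely different (more elementary, self-contained) route than the paper's. The paper also uses a scale function, but globally: it transforms $X$ into the driftless diffusion $Y=s(X)$ on all of $I$, notes that the speed measure $m(dy)=2\,dy/\tilde\sigma^2(y)$ is finite on compacts (via local integrability of $\tilde\sigma^{-2}$, Theorem 5.5.4 of Karatzas--Shreve), and then invokes the Rogers--Williams criterion that such a driftless diffusion is regular, pulling regularity of $X$ back through $s^{-1}$. You instead localize to a compact $[a,b]\subset I$ with $x,y\in(a,b)$, exploit the uniform non-degeneracy $\sigma^2\ge\delta$ there to get a.s.\ finiteness of the exit time $S$, and run the bounded-martingale/optional-stopping argument to obtain $\P_0(X_S=b)=\frac{p(x)-p(a)}{p(b)-p(a)}>0$, which gives $T^x_y\le S<\infty$ on that event by path continuity. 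The trade-off is transparent: the paper's proof is shorter once the cited regularity criterion is available, while yours is self-contained, avoids the change-of-variable construction of $Y$, and makes the mechanism (positive hitting probability via a scale-function gambler's-ruin identity) explicit. Your two ``points requiring care'' are correctly identified and correctly handled; the exit-time bound follows cleanly because $p(X_{\cdot\wedge S})$ is a bounded martingale whose quadratic variation is bounded below by $c^2\delta(t\wedge S)$ for some $c>0$ from the lower bound on $p'$, so $\E[\langle p(X_{\cdot\wedge S})\rangle_\infty]<\infty$ forces $S<\infty$ a.s. One cosmetic remark: you reuse the letter $b$ for both the drift coefficient and the right endpoint of the compact subinterval, which should be renamed to avoid confusion, but it does not affect the argument.
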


\begin{proof}
Under Definition~\ref{def:Gamma}, the scale function
\[
s(z) := \int_x^z \exp\left(-2\int_x^u\frac{b(\xi)}{\sigma^2(\xi)}d\xi\right) du,\quad z\in I,
\]
is well-defined, strictly increasing, and continuously differentiable. Let $q:(s(\ell), s(r))\to \R$ be the inverse function of $s$. By the arguments in Proposition 5.5.13 of \cite{KS-book-91}, $X$ being the unique strong solution to \eqref{SDE} entails the existence of a unique strong solution to 
$dY_t = \tilde\sigma(Y_t) dB_t$, $Y_0=0$, $\P_0\hbox{-a.s.}$, where 
$
\tilde\sigma(y) := s'(q(y)) \sigma(q(y))$ for $s(\ell)<y< s(r). 
$

By Theorem 5.5.4 in \cite{KS-book-91} and $\sigma^2>0$ on $I$, $\tilde\sigma^2$ is locally integrable. Hence, the speed measure
$
m(dy) := \frac{2 dy}{\tilde\sigma^2(y)}$,  $s(\ell)<y< s(r)$,
assigns a finite value to any $[a,b]\subset (s(\ell),s(r))$. This readily implies that $Y$ is a regular diffusion; see Remark (ii), ``{\it The converse to Theorem 47.1}'', on p.277 of \cite{RW-book-00}. As $X=s^{-1}(Y)$, $X$ is also regular. 
\end{proof}

Te fact that $X$ is a regular diffusion (i.e., satisfying \eqref{regular}) means that $I=(\ell,r)$ cannot be decomposed into smaller intervals from which $X$ could not exit. Moreover, when starting with $x\in I$, $X$ has to enter the regions above and below $x$ immediately, as stated below. 

\begin{remark}\label{lem:T^x_x=0}
For any $x\in {I}$ and $(b,\sigma)\in\mathfrak L$, $T^x_{(\ell, x)}=T^x_{(x,r)} = T^x_x= 0$ $\P_0$-a.s. Indeed,
%\begin{proof}
Lemma 46.1 (i) in \cite{RW-book-00} directly gives  $T^x_{(\ell, x)}=T^x_{(x,r)}=0$ $\P_0$-a.s.; see also the discussion above Lemma 46.1 therein. Now, for $\P_0$-a.e. $\omega\in\Omega$, the fact $\P_0(T^x_{(\ell,x)} =0)=\P_0(T^x_{(x,r)} =0)=1$ implies that for any $n\in\N$, there exist $t,t'\in [0,1/n]$ such that $X_t(\omega) >x$ and $X_{t'}(\omega)<x$. Hence, $T^x_x(\omega)\le 1/n$ for all $n\in\N$, implying $T^x_x(\omega) =0$.
\end{remark}

%\begin{remark}
%The standard Lipschitz and linear growth conditions on $b, \sigma$ ensures the existence of a unique strong solution $X$ to \eqref{SDE}, for any $x\in \Int(I)$. Additional conditions on $b, \sigma$ on $\R\setminus \Int(I)$ are needed to ensure that $X$ $I$ is
%
%To further ensure that the state space is $I$,
%\end{remark}
%that are Lipschitz and satisfy the linear growth condition, i.e. there exists $K>0$ such that for any $x,y \in \R$,
%\begin{align*}
%|b(x)-b(y)|+|\sigma(x)-\sigma(y)| \le K|x-y|,\\
%|b(x)|+|\sigma(x)|\le K(1+|x|).
%\end{align*} 

%\begin{remark}
%The distribution of the canonical process $B$ under $\P^{x}_{b,\sigma}$ is the same as that of the unique strong solution $X^{x,b,\sigma}$ to \eqref{SDE} under $\P_0$.
%\end{remark}

\begin{corollary}\label{coro:rho_x=0}
For any $x\in {I}$ and $(b,\sigma)\in\mathfrak L$, $\rho_{(\ell, x)}=\rho_{(x,r)} = \rho_{\{x\}}= 0$ $\P^x_{b,\sigma}$-a.s.
\end{corollary}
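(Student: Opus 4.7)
My plan is to reduce the claim directly to Remark~\ref{lem:T^x_x=0} by exploiting the pushforward definition of $\P^x_{b,\sigma}$ in \eqref{P^x}. Fix $x \in I$, $(b,\sigma)\in\mathfrak{L}$, and let $R$ be any of the three sets $(\ell,x)$, $(x,r)$, $\{x\}$; each is Borel, hence universally measurable, so $\rho_R \in \T$ by (the analogue of) Lemma~\ref{lem:in T}, and the event $\{\rho_R = 0\}$ is universally measurable. From \eqref{P^x to P_0},
\[
\P^x_{b,\sigma}(\rho_R = 0) \;=\; \P_0\bigl(\{\omega \in \Omega : \rho_R(X^{x,b,\sigma}(\omega)) = 0\}\bigr).
\]

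The key algebraic identity is that $\rho_R \circ X^{x,b,\sigma} = T^{x,b,\sigma}_R$ pointwise on $\Omega$. Indeed, for each $\omega \in \Omega$, the path $X^{x,b,\sigma}(\omega)$ is itself an element of $\Omega$ whose $t$-th coordinate equals $X^{x,b,\sigma}_t(\omega)$, so
\[
\rho_R(X^{x,b,\sigma}(\omega)) \;=\; \inf\{t > 0 : B_t(X^{x,b,\sigma}(\omega)) \in R\} \;=\; \inf\{t > 0 : X^{x,b,\sigma}_t(\omega) \in R\} \;=\; T^{x,b,\sigma}_R(\omega),
\]
recalling the hitting-time definition \eqref{hitX}. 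Substituting this back gives
\[
\P^x_{b,\sigma}(\rho_R = 0) \;=\; \P_0\bigl(T^{x,b,\sigma}_R = 0\bigr),
\]
and Remark~\ref{lem:T^x_x=0} tells us that the right-hand side equals $1$ for each of the three choices $R = (\ell,x)$, $(x,r)$, $\{x\}$, which is exactly the assertion.

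There is no substantial obstacle here: the proof is purely a verification that the pushforward under $X^{x,b,\sigma}$ transports the three $\P_0$-almost-sure equalities of Remark~\ref{lem:T^x_x=0} into the corresponding $\P^x_{b,\sigma}$-almost-sure statements for $\rho$. The only point requiring any care is the measurability of $\{\rho_R = 0\}$ so that the pushforward formula applies, which is handled by Lemma~\ref{lem:in T}.
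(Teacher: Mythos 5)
Your proof is correct and follows essentially the same route as the paper's: you apply the pushforward identity $\P^x_{b,\sigma} = \P_0 \circ (X^{x,b,\sigma})^{-1}$ to convert $\rho_R = 0$ into $T^{x,b,\sigma}_R = 0$, and then invoke Remark~\ref{lem:T^x_x=0}. The only difference is that you spell out the pointwise identity $\rho_R \circ X^{x,b,\sigma} = T^{x,b,\sigma}_R$ and the measurability of $\{\rho_R=0\}$ explicitly, which the paper leaves implicit.
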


\begin{proof}
Observe from \eqref{P^x to P_0} that 
\begin{align*}
\P^x_{b,\sigma}(\rho_{(\ell, x)}=0) &= \P^x_{b,\sigma}\left(\inf\{t>0 : B_t\in (\ell, x)\}=0\right)\\
&= \P_0\left(\inf\{t>0 : X^{x,b,\sigma}_t\in (\ell, x)\}=0\right) = \P_0(T^x_{(\ell, x)}=0) =1,
\end{align*}
where the last equality follows from Remark~\ref{lem:T^x_x=0}. The same argument shows that $\P^x_{b,\sigma}(\rho_{(x,r)}=0)=\P_0(T^x_{(x,r)}=0)=1$ and $\P^x_{b,\sigma}(\rho_{\{x\}}=0)=\P_0(T^x_{x}=0)=1$.
\end{proof}

The following observation will be useful in Section~\ref{sec:example}.

\begin{remark}\label{rem:closed E}
By Remark \ref{lem:T^x_x=0} (or Corollary \ref{coro:rho_x=0}), we can follow arguments in Lemmas 4.1 and 4.2 in \cite{HZ17-continuous} to show that for any $R \in\cU({I})$, 
%\begin{align*}
$\rho_R=\rho_{\overline{R}}\ \mathbb{P}^x_{b,\sigma}\hbox{-a.s.},\ \forall x\in I\ \hbox{and}\ (b,\sigma)\in\mathfrak L.$
%\end{align*}
%In particular, we have $\rho_R=\rho_{\overline{R}}=0$,  $\mathbb{P}^x_{b,\sigma}$-a.s., $\forall (b,\sigma)\in\Lambda(x)$, $\forall x\in \overline{R}$. 
Consequently, $S_R=S_{\overline{R}}$, $I_R=I_{\overline{R}}$, and $C_R=C_{\overline{R}}$. It follows that $R\in\cE$ if and only if $\overline{R}\in\cE$.

This, however, does not imply that we can focus on solely $R\in\B(I)$ in our one-dimensional setting. In the main result Theorem~\ref{thm:main} below, $\Theta(R)$ and $R$ need to have the same measurability to facilitate the fixed-point iteration. As explained in Section~\ref{why use cU}, $R\in\B(I)$ does not guarantee $\Theta(R)\in\B(I)$: this loss of Borel measurability stems from the use of projections in Lemma~\ref{measuretheta}, indispensable under model ambiguity (even when $d=1$). Namely, the use of $R\in\cU(I)$ is caused by model ambiguity, regardless of the dimension of the state space; see also Remark~\ref{rem:MA needs cU}.
\end{remark}

Focusing on $\Pi\in\A^\infty$ yields the relative compactness of $\cP(x)$. 

\begin{lemma}\label{lem:rc}
For any $\Pi\in\A^\infty$, $\cP(x)$ is relatively compact for all $x\in I$. 
\end{lemma}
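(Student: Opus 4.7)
The plan is to apply Prokhorov's theorem. Since $\Omega=C([0,\infty);\mathbb{R})$ is a Polish space under the topology of uniform convergence on compact intervals, a subset of $\mathfrak{P}(\Omega)$ is relatively compact if and only if it is tight. I would establish tightness of $\mathcal{P}(x)$ via Kolmogorov's criterion, which requires moment estimates on the canonical process $B$ that are uniform over $(b,\sigma)\in\Pi(x)$.

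The crucial input is Definition~\ref{def:Gamma}(ii): for any fixed $x\in I$, there is a single constant $K=K(x)>0$ such that every $(b,\sigma)\in\Pi(x)$ satisfies the stated Lipschitz and linear growth bounds with that common $K$. This uniform control is precisely what distinguishes $\mathcal{A}^\infty$ from $\mathcal{A}$, and it propagates to uniform SDE estimates. Specifically, for any $T>0$ and $p\ge 2$, standard arguments based on Burkholder--Davis--Gundy and Gronwall's inequality (applied to the stopped processes $X^{x,b,\sigma}_{\cdot\wedge S_n}$ and then passed to the limit in $n$) give
\begin{equation*}
\mathbb{E}^{\mathbb{P}_0}\!\left[\sup_{0\le s\le T}|X^{x,b,\sigma}_s|^p\right]\le C_1,\qquad \mathbb{E}^{\mathbb{P}_0}\!\left[|X^{x,b,\sigma}_t-X^{x,b,\sigma}_s|^p\right]\le C_2\,|t-s|^{p/2},
\end{equation*}
for all $0\le s\le t\le T$, with constants $C_1,C_2$ depending only on $K,T,p,|x|$, not on the particular $(b,\sigma)\in\Pi(x)$. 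Transferring via $\mathbb{P}^x_{b,\sigma}=\mathbb{P}_0\circ(X^{x,b,\sigma})^{-1}$ converts these into identical bounds for $B$ under $\mathbb{P}^x_{b,\sigma}$, uniformly across $\Pi(x)$.

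Choosing $p=4$ yields an increment bound of order $|t-s|^2$, so Kolmogorov's tightness criterion gives tightness of the restricted family $\{\mathbb{P}^x_{b,\sigma}\circ(B|_{[0,T]})^{-1}:(b,\sigma)\in\Pi(x)\}$ on $C([0,T];\mathbb{R})$ for every $T>0$. Since tightness on $C([0,\infty);\mathbb{R})$ is equivalent to tightness of all finite-horizon restrictions, $\mathcal{P}(x)$ is tight, hence relatively compact by Prokhorov. I do not anticipate a substantive obstacle: the proof is routine once the uniform constants provided by Definition~\ref{def:Gamma}(ii) are invoked. The only mild subtlety is handling absorption at the endpoints of $I$ when $\zeta<\infty$, but this only strengthens the moment bounds, since after absorption $X$ is constant and confined to $\overline{I}$.
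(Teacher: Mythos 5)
Your proposal is correct and is essentially the paper's argument: both establish tightness of $\cP(x)$ by deriving moment bounds on increments of $X^{x,b,\sigma}$ that are uniform over $(b,\sigma)\in\Pi(x)$ thanks to the single Lipschitz/linear-growth constant $K$ in Definition~\ref{def:Gamma}(ii), then invoking a Kolmogorov-type criterion. The paper phrases relative compactness via the equicontinuity condition of Theorem~1.3.1 in Stroock--Varadhan and uses a Kolmogorov continuity bound plus Markov's inequality, which is precisely the Prokhorov-plus-Kolmogorov-tightness route you sketch.
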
 

\begin{proof}
Fix $x\in I$. By Theorem 1.3.1 of \cite{Stroock-Varadhan-book-06}, $\cP(x)$ is relatively compact if and only if for any $\eps>0$ and $T>0$,
\begin{equation}\label{rc}
\lim_{\delta\downarrow 0}\sup_{\P\in\cP(x)} \P\bigg(\sup_{0\le s\le t\le T,\ t-s<\delta} |B_t-B_s|> \eps\bigg) =0. 
\end{equation}
Thanks to the Lipschitz and linear growth conditions, under the same constant $K>0$, in Definition~\ref{def:Gamma} (ii), standard estimations, see e.g. Proposition 1.2.1 in \cite{Bouchard-note-07}, show that there exist constants $\beta, \gamma>0$ such that for any $(b,\sigma)\in\Pi(x)$, 
\begin{equation}\label{estimate}
\E^{\P_0}[|X^{x,b,\sigma}_t-X^{x,b,\sigma}_s|^\beta]\le C_T|t-s|^{1+\gamma},\quad \forall T>0\ \hbox{and}\ 0\le s,t\le T,
\end{equation}
where $C_T>0$ depends on only $x\in I$, $T>0$, and $K>0$. In view of the proof of Theorem I.2.1 in \cite{RY-book-99}, \eqref{estimate} implies that for any $\eta\in [0,\gamma)$, there exists $C_\eta>0$ such that 
\[
\E^{\P_0}\bigg[\sup_{0\le s\le t\le T} \frac{|X^{x,b,\sigma}_t-X^{x,b,\sigma}_s|^\beta}{|t-s|^\eta}\bigg] \le C_\eta,\quad \forall (b,\sigma)\in\Pi(x). 
\]
This, together with the Markov inequality, shows that for any $\P\in\cP(x)$, 
\begin{align*}
\P\bigg(\sup_{0\le s\le t\le T,\ t-s<\delta} |B_t-B_s|> \eps\bigg)&\le \eps^{-\beta} \E^\P\bigg[\sup_{0\le s\le t\le T,\ t-s<\delta} |B_t-B_s|^\beta\bigg]\\
&=\eps^{-\beta} \E^{\P_0}\bigg[\sup_{0\le s\le t\le T,\ t-s<\delta} |X^{x,b,\sigma}_t-X^{x,b,\sigma}_s|^\beta\bigg]\le \eps^{-\beta} C_\eta \delta^\eta,
\end{align*}
which readily yields \eqref{rc}.
\end{proof}

%%%%%%%%%%%%%%%%%%%%

\subsection{The Main Result}

Corollary~\ref{coro:rho_x=0} facilitates the convergence of the fixed-point iteration \eqref{iteration}, as the next result shows. Recall that for any $\Pi\in \A$, $\cP(x)$ is defined as in \eqref{cP Pi}.

\begin{proposition}\label{limitstopping}
Fix $\Pi\in\A$ such that $\{(x,\cP(x)): x\in {I}\}\subseteq I\times\fP(\Omega)$ is universally measurable. Then, for any $R\in\cU({I})$, $R\subseteq \Theta(R)$. Hence, $R_*$ in \eqref{iteration} is well-defined, and of the form
\begin{equation}\label{union form}
R_* %:= \lim_{n\to\infty} \Theta^n(R) 
= \bigcup_{n\in\N} \Theta^n(R) \in \cU({I}).
\end{equation}
\end{proposition}

\begin{remark}
Assuming universal measurability of $\{(x,\cP(x)): x\in {I}\}$ in $I\times\fP(\Omega)$ is not restrictive in terms of the related literature. Such a set is typically assumed to be analytic (and thus universally measurable) in the more general path-dependent setting; see e.g., \cite{NeufeldNutz13}, \cite{NvH13}, and \cite{Nutz17-MF}.
%Note that by the argument of Theorem 2.4 in \cite{NeufeldNutz13} (or Theorem 2.5 in \cite{NeufeldNutz14}), $\{(x,\cP_\Gamma): x\in {I}\}$ is analytic (and hence universally measurable).
%Assuming universal measurability of $\{(x,\cP(x)): x\in {I}\}$ in $I\times\fP(\Omega)$ is not very restrictive. For instance, for any Borel measurable $E\subseteq\R$ and $\Sigma\subseteq (0,\infty)$, consider the set-valued map $\Pi_{E,\Sigma}: I\to\{(b,\sigma): b,\sigma\ \hbox{maps $I$ to $\R$}\}$ be such that $\Pi_{E,\Sigma}\subseteq\Gamma$ and for any $x\in I$, $\Pi_{E,\Sigma}(x)$ is the collection of all $(b,\sigma)$ satisfying $b(I)\subseteq E$ and $\sigma^2(I)\subseteq\Sigma$. 
%In particular, with $E=\R$ and $\Sigma=(0,\infty)$,  we have $\Pi_{E,\Sigma}=\Gamma$.
%Following the argument of Theorem 2.4 in \cite{NeufeldNutz13} or Theorem 2.5 in \cite{NeufeldNutz14}, $\cP(x)$, induced by $\Pi_{E,\Sigma}$, is Borel measurable and $\{(x,\cP(x)): x\in {I}\}$ is analytic (and hence universally measurable). 
\end{remark}

\begin{proof}
Fix $R\in\cU({I})$. For any $x\in R$, we claim that $\rho_R=0$ $\P^x_{b,\sigma}$-a.s. for all $(b,\sigma)\in\Pi(x)$. There are three cases: (i) if $x$ is an interior point of $R$, the claim trivially holds; (ii) if $x$ is a boundary point of $R$, note that $\rho_{(\ell, x)}=\rho_{(x,r)} = 0$ $\P^x_{b,\sigma}$-a.s. (by Corollary~\ref{coro:rho_x=0}) readily implies $\rho_R=0$ $\P^x_{b,\sigma}$-a.s., for all $(b,\sigma)\in\Pi(x)$; (iii) if $x$ is an isolated point of $R$, note that $\rho_{\{x\}}= 0$ $\P^x_{b,\sigma}$-a.s. (by Corollary~\ref{coro:rho_x=0}) and the fact $x\in R$ readily imply $\rho_R=0$ $\P^x_{b,\sigma}$-a.s., for all $(b,\sigma)\in\Pi(x)$. With $\rho_R=0$ $\P^x_{b,\sigma}$-a.s. for all $(b,\sigma)\in\Pi(x)$, we have $J(x,R) = g(x)$, i.e. $x\in I_R$. Hence, we conclude $R\subseteq I_R$, which gives $\Theta(R) = S_R\cup (I_R\cap R) = S_R \cup R \supseteq R$.
This, together with Proposition~\ref{prop:measuretheta}, shows that $\{\Theta^n(R)\}_{n\in\N}$ is an nondecreasing sequence of sets in $\cU(I)$, leading to the last assertion. 
\end{proof}

%\begin{lemma}
%Let $\cP(x) := \{\P^x_{b,\sigma} : (b,\sigma)\in\Lambda(x)\}$, for all $x\in\Int{I}$. We have that $\{(x,\cP(x)): x\in\Int{I}\}$ is an analytic set. Therefore $\{(x,\cP(x)): x\in\Int{I}\}$ is universally measurable and $\Theta: \cU(\R)\mapsto \cU(\R)$ is well defined.
%\end{lemma}
%\begin{proof}
%\ \\
%\ \\
%\ \\
%\end{proof}

%\begin{definition}
%$R\in\cU(\Int{I})$ is called an equilibrium stopping policy if $\Theta(R)=R$, i.e., the stopping policy can not improve anymore. Let us denote $\mathcal{E}(\Int{I})$ as the set of all equilibria.
%\end{definition}

The remaining question is whether the limit $R_*$ of the fixed-point iteration \eqref{iteration} is indeed an equilibrium. To answer this, we need the following technical result, which requires $\Pi\in \A^\infty$. 

%not only converges, but converges to an equilibrium. To state the result, for any $R\in\cU(I)$ and its fixed-point limit $R_0= \bigcup_{n\in\N} \Theta^n(R)$, we will simply write $\rho^n$ and $\rho^0$ for the hitting times $\rho_{R_n}$, with $R_n:= \Theta^{n-1}(R)$, and $\rho_{R_0}$ (recall \eqref{rhoR}), respectively.

\begin{lemma}\label{lem:in capacity}
For any nondecreasing sequence $\{R_n\}_{n\in\N}$ in $\cU(I)$, set $R_0 := \bigcup_{n\in\N} R_n$ and let $\rho^n$ and $\rho^0$ denote the hitting times $\rho_{R_n}$ and $\rho_{R_0}$, defined in \eqref{rhoR}, respectively. Then, for any $x\in I$, 
\begin{equation}\label{rho converge}
\rho^n(\omega) \downarrow \rho^0(\omega),\quad \forall \omega\in \Omega^x. %\ \hbox{with}\ \omega_0\in I.
\end{equation}
Furthermore, for any $\Pi\in\A^\infty$ and $\eps>0$, we have
\begin{align}
&\lim_{n\to\infty}\sup_{\P\in\cP(x)}\P\left(\left|\rho^n-\rho^0\right|\geq \eps\right)=0,\label{capty1} \\
&\lim_{n\to\infty}\sup_{\P\in\cP(x)}\P\left(\left|B_{\rho^n}-B_{\rho^0}\right| 1_{\{\rho^n<\infty\}}\geq \eps\right)=0.\label{capty2}
\end{align}
%where $\rho^n$ and $\rho^0$ denote the hitting times $\rho_{R_n}$ and $\rho_{R_0}$, defined in \eqref{rhoR}, respectively.
\end{lemma}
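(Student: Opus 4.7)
The three claims are tackled in sequence: the pathwise statement \eqref{rho converge} is essentially definitional, the uniform capacity estimate \eqref{capty1} is the technical core of the lemma, and \eqref{capty2} follows from \eqref{capty1} together with uniform path-regularity estimates. For \eqref{rho converge}, fix $\omega\in\Omega^x$. Monotonicity $R_n\subseteq R_{n+1}\subseteq R_0$ makes $\rho^n(\omega)$ nonincreasing and bounded below by $\rho^0(\omega)$, hence $\rho^n(\omega)\downarrow\rho^*(\omega)\geq\rho^0(\omega)$. If $\rho^0(\omega)=\infty$ no $R_n$ is ever hit and all $\rho^n(\omega)=\infty$; otherwise, given $\eps>0$ the definition of the infimum supplies $t\in(0,\rho^0(\omega)+\eps)$ with $B_t(\omega)\in R_0=\bigcup_n R_n$, and choosing $n$ with $B_t(\omega)\in R_n$ gives $\rho^n(\omega)\leq t$, so $\rho^*(\omega)\leq\rho^0(\omega)+\eps$; sending $\eps\downarrow 0$ yields equality.

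For \eqref{capty1}, set $A^n_\eps:=\{\rho^n-\rho^0\geq\eps\}$; monotonicity and the previous step give $A^n_\eps\downarrow\emptyset$ pathwise, so $\P(A^n_\eps)\downarrow 0$ for each fixed $\P$. I would upgrade this to uniformity over $\cP(x)$ by contradiction: if some $\eps_0,\eta_0>0$, $n_k\uparrow\infty$, and $\P_k\in\cP(x)$ satisfy $\P_k(A^{n_k}_{\eps_0})\geq\eta_0$, then by Lemma~\ref{lem:rc} one extracts a weakly convergent subsequence $\P_k\Rightarrow\P^*\in\overline{\cP(x)}$, and via Skorokhod's representation realizes $B^k\sim\P_k$ and $B^\infty\sim\P^*$ on a common space with $B^k\to B^\infty$ uniformly on compacts almost surely. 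The main obstacle is that the hitting-time functional $\omega\mapsto\rho_R(\omega)$ is generally discontinuous in this topology, which blocks a direct Portmanteau argument. My plan to circumvent it is two-fold: (i) by Remark~\ref{rem:closed E} each $R_n$ may be replaced by its closure $\overline{R_n}$ without changing $\rho^n$; and (ii) since the uniform Lipschitz/linear-growth bounds in Definition~\ref{def:Gamma}(ii) pass to the weak limit by an Arzel\`a--Ascoli argument on the coefficients, the canonical process under $\P^*$ is still a one-dimensional regular diffusion, so Corollary~\ref{coro:rho_x=0} forces it to strictly cross every level it visits. This crossing property converts the pathwise convergence of the first step into continuity of each functional $\omega\mapsto\rho_{\overline{R_n}}(\omega)$ at $B^\infty$ along the Skorokhod sequence, so that $\rho^n(B^k)\to\rho^n(B^\infty)$ almost surely for each fixed $n$. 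A diagonal argument (sandwiching $\rho^{n_k}(B^k)$ between $\rho^0(B^k)$ and $\rho^m(B^k)$ for arbitrary $m$, using $R_{n_k}\supseteq R_m$ for $n_k\geq m$) then yields $\rho^{n_k}(B^k)-\rho^0(B^k)\to 0$ almost surely, contradicting $\P_k(A^{n_k}_{\eps_0})\geq\eta_0$. Turning the crossing picture into a rigorous continuity statement under weak limits, and executing the diagonal step, is the hard technical step, and I expect it to occupy most of Appendix~\ref{sec:proof of lem:in capacity}.

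For \eqref{capty2}, the event $\{\rho^n=\infty\}$ contributes zero through the indicator, and on $\{\rho^n<\infty\}$ we have $\rho^n\downarrow\rho^0<\infty$ and, by continuity of $B$, $B_{\rho^n}\to B_{\rho^0}$ pathwise. For uniformity in $\P$ I would split
\[
\P\bigl(|B_{\rho^n}-B_{\rho^0}|\,1_{\{\rho^n<\infty\}}\geq\eps\bigr)\leq \P\bigl(\rho^n-\rho^0\geq\delta\bigr)+\P\bigl(|B_{\rho^n}-B_{\rho^0}|\geq\eps,\ \rho^n-\rho^0<\delta,\ \rho^n<\infty\bigr).
\]
The first summand is uniformly small in $\P$ as $n\to\infty$ by \eqref{capty1}. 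For the second I would apply the strong Markov property at $\rho^0$: conditional on $\F_{\rho^0}\cap\{\rho^0<\infty\}$, the increment $B_{\rho^n}-B_{\rho^0}$ is the displacement of a freshly restarted diffusion run for a stopping time bounded by $\delta$; the uniform Lipschitz/linear-growth bounds of Definition~\ref{def:Gamma}(ii)---the same bounds driving the equicontinuity estimate in the proof of Lemma~\ref{lem:rc}---then furnish a modulus-of-continuity bound at scale $\delta$ that is uniform in $\P\in\cP(x)$, after a routine moment-based truncation of $|B_{\rho^0}|$ to neutralize the linear growth. Sending $\delta\downarrow 0$ finishes the argument.
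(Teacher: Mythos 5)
Your pathwise argument for \eqref{rho converge} is correct and is in fact a bit cleaner than the paper's: you avoid any appeal to the diffusion structure by directly exploiting $R_0=\bigcup_n R_n$ and the definition of the infimum, whereas the paper argues via the boundary points $p,q$ and the immediate-crossing property of a regular diffusion. Your decomposition for \eqref{capty2} (split on $\{\rho^n-\rho^0\geq\delta\}$, then strong Markov at $\rho^0$ plus the uniform Kolmogorov-type modulus from the proof of Lemma~\ref{lem:rc}) is also a legitimate alternative to what the paper does, since $B_{\rho^0}$ takes values in the bounded set $\{p,q\}$ and the constant $K$ in Definition~\ref{def:Gamma}(ii) is uniform over $\Pi(x)$.

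The genuine gap is in your plan for \eqref{capty1}. Your contradiction argument hinges on the assertion that ``the canonical process under $\P^*$ is still a one-dimensional regular diffusion,'' from which you would get the crossing property at $B^\infty$. This is not justified. Definition~\ref{def:Gamma} imposes a uniform Lipschitz/linear-growth constant over $\Pi(x)$ but \emph{no uniform lower bound} on $\sigma^2$; a sequence $(b_k,\sigma_k)\in\Pi(x)$ can have $\sigma_k\to\sigma_*$ with $\sigma_*$ vanishing at $p$ or $q$ (or anywhere in between), in which case $\P^*$ need not satisfy \eqref{regular}, and the ``immediate level-crossing'' at $p_n,q_n$ that you need to make $\rho_{\overline{R_n}}$ continuous at $B^\infty$ can fail with positive $\P^*$-probability. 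More fundamentally, $\P^*$ lies in $\overline{\cP(x)}$ but need not equal $\P^x_{b_*,\sigma_*}$ for any $(b_*,\sigma_*)$, so identifying its law as a diffusion requires a stability-of-SDEs argument you have not supplied. The paper avoids this entirely: it applies Lemma~7 of \cite{DHP11} directly to the \emph{closures} $\overline{A_n}$ (so only Portmanteau upper semicontinuity for closed sets is needed, no Skorokhod realization and no understanding of the weak limit), and it reduces $\sup_\P\P(A_n)=\sup_\P\P(\overline{A_n})$ by exhibiting $\overline{A_n}\setminus A_n\subseteq F_n$ where each $F_n$ is null under every \emph{actual} prior $\P^x_{b,\sigma}\in\cP(x)$ (each of which \emph{is} a regular diffusion by Lemma~\ref{lem:regular}), together with pairwise disjointness of the $F_n$ to conclude $\bigcap_n\overline{A_n}=\emptyset$. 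The crucial point is that regularity is only ever invoked for the genuine priors, never for a limit. If you want to salvage your route, you must either impose a uniform ellipticity lower bound on $\sigma^2$ over $\Pi(x)$ (which the paper does not assume) or switch to the closure/null-boundary mechanism as in the paper.
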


\begin{remark}\label{rem:infty-infty}
In \eqref{capty1}, $\rho^n$ and $\rho^0$ may take the value $\infty$. In particular, on $\{\rho^0=\infty\}$, $\rho^n=\rho^0 =\infty$ and we define $\rho^n-\rho^0=0$, for all $n\in\N$. This is consistent with \eqref{at infty}, where we do not distinguish between any two stopping times when they both take the value $\infty$. 
%Assumption~\ref{standing}, where $e^{-r\tau}|g(B_\tau)|:= \limsup_{t\to\infty} e^{-rt}|g(B_t)|$ whenever $\tau=\infty$.  
\end{remark}

The proof of Lemma~\ref{lem:in capacity}, relying crucially on both the relative compactness of $\cP(x)$ and the regularity of $X^{x,b,\sigma}$, is relegated to Appendix~\ref{sec:proof of lem:in capacity}.

Now, we are ready to present the main result of this paper.

\begin{theorem}\label{thm:main}
Fix $\Pi\in\A^\infty$ such that $\{(x,\cP(x)): x\in {I}\}\subseteq I\times\fP(\Omega)$ is universally measurable. Suppose that $g:\overline I\to \R$ is continuous and 
\begin{align}\label{condition}
\lim_{t\rightarrow\infty}e^{-rt}g(X^{x, b,\sigma}_t)=0\quad \P_0\text{-a.s.},\quad \forall x\in I\ \hbox{and}\ (b,\sigma)\in\Pi(x).
\end{align}
Then, for any $R\in\cU({I})$, $R_*$ defined as in \eqref{union form} belongs to $\cE$. Hence, 
\begin{align}\label{equistopping}
\mathcal{E} = \left\{\lim_{n\to\infty} \Theta^n(R): R\in  \cU({I})\right\}.
%\left\{R_0\in \cU({I}): R_0= \lim_{n\to\infty} \Theta^n(R)\ \text{for some}\ R \in\cU({I})\right\}=\bigcup_{R\in  \cU({I})}\left\{\lim_{n\to\infty} \Theta^n(R)\right\}.
\end{align}
\end{theorem}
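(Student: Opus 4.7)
The plan is to verify the two inclusions $R_* \subseteq \Theta(R_*)$ and $\Theta(R_*) \subseteq R_*$, and then read off \eqref{equistopping}. Write $R_n := \Theta^n(R)$, so $R_* = \bigcup_n R_n$ by Proposition~\ref{limitstopping}. For the first inclusion, take any $x \in R_*$, so $x \in R_n$ for some $n$. The three-case analysis (interior/boundary/isolated point) in the proof of Proposition~\ref{limitstopping}, combined with Corollary~\ref{coro:rho_x=0}, gives $\rho_{R_n} = 0$ $\P^x_{b,\sigma}$-a.s.\ for every $(b,\sigma) \in \Pi(x)$. Since $R_n \subseteq R_*$ implies $\rho_{R_*} \le \rho_{R_n}$, we get $\rho_{R_*} = 0$ $\P^x_{b,\sigma}$-a.s., hence $J(x, R_*) = g(x)$ and $x \in I_{R_*} \cap R_* \subseteq \Theta(R_*)$.

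For the reverse inclusion, let $x \in \Theta(R_*)$. If $x \in I_{R_*} \cap R_*$ there is nothing to prove, so assume $x \in S_{R_*}$, i.e., $g(x) > J(x, R_*)$. The crux is to establish
\[
\lim_{n\to\infty} J(x, R_n) = J(x, R_*),
\]
because then $g(x) > J(x, R_n)$ for all $n$ large enough, giving $x \in S_{R_n} \subseteq \Theta(R_n) = R_{n+1} \subseteq R_*$. Using $|\inf_\P f_n - \inf_\P f| \vee |\sup_\P f_n - \sup_\P f| \le \sup_\P |f_n - f|$, this reduces to the uniform convergence
\[
\lim_{n\to\infty} \sup_{\P \in \cP(x)} \bigl| \E^\P[Y_n] - \E^\P[Y_*] \bigr| = 0, \qquad Y_n := e^{-r\rho_n} g(B_{\rho_n}),\quad Y_* := e^{-r\rho_*} g(B_{\rho_*}).
\]
On $\{\rho_* = \infty\}$ one has $\rho_n = \infty$ for every $n$ (recall Remark~\ref{rem:infty-infty}), so by convention \eqref{at infty} combined with \eqref{condition} the variables $Y_n$ and $Y_*$ both vanish. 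On $\{\rho_* < \infty\}$, Lemma~\ref{lem:in capacity} delivers $\rho_n \to \rho_*$ and $B_{\rho_n} \to B_{\rho_*}$ in capacity uniformly in $\P \in \cP(x)$; combining this with the continuity of $g$ on $\overline I$ (which yields uniform continuity on every bounded subset) promotes it to uniform convergence $Y_n \to Y_*$ in capacity. One then upgrades this to convergence of expectations uniformly in $\P$ via the global dominating bound $|Y_n|, |Y_*| \le Z := \sup_{0 \le t \le \infty} e^{-rt}|g(B_t)|$ and the integrability hypothesis $\sup_{\P \in \cP(x)} \E^\P[Z] < \infty$.

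The main obstacle is this last step: promoting uniform convergence in capacity to uniform convergence of expectations across a family of priors that may be mutually singular, armed only with an $L^1$-type bound on $Z$. The dominated-convergence argument has to be executed carefully, splitting the expectation on $\{Z \le M\}$ and $\{Z > M\}$ and exploiting the relative compactness of $\cP(x)$ from Lemma~\ref{lem:rc}, together with the asymptotic decay in \eqref{condition}, to control the tail $\sup_{\P} \E^\P[Z \mathbf{1}_{\{Z > M\}}]$ as $M \to \infty$. Once the key limit is in hand, the characterization \eqref{equistopping} is immediate: ``$\supseteq$'' is the content of $R_* \in \cE$ just established, and ``$\subseteq$'' is trivial since every $R \in \cE$ satisfies $\Theta^n(R) = R$ for all $n$ and thus equals $\lim_n \Theta^n(R)$.
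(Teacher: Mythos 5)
Your overall strategy mirrors the paper's: use Proposition~\ref{limitstopping} for $R_*\subseteq\Theta(R_*)$, then reduce $\Theta(R_*)\subseteq R_*$ to $S_{R_*}\subseteq R_*$, which you attack by a limit statement relating $J(x,R_n)$ to $J(x,R_*)$. (The paper works contrapositively, fixing $x\notin R_*$ and showing $J(x,R_*)\ge\liminf_n J(x,R_n)\ge g(x)$; you fix $x\in S_{R_*}$ and push for $\limsup_n J(x,R_n)\le J(x,R_*)$, which is fine in principle.) The genuine problem is in the final step you yourself flag as ``the main obstacle.'' You propose to pass from uniform convergence in capacity to uniform convergence of expectations by dominating $|Y_n|,|Y_*|$ with $Z:=\sup_{0\le t\le\infty}e^{-rt}|g(B_t)|$ and invoking $\sup_{\P\in\cP(x)}\E^\P[Z]<\infty$, then controlling the tail $\sup_{\P}\E^\P[Z\mathbf{1}_{\{Z>M\}}]$ via relative compactness of $\cP(x)$ and \eqref{condition}. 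But an $L^1$-type bound on $Z$ is not uniform integrability of $Z$ across the (possibly mutually singular) priors, and relative compactness of $\cP(x)$ in the weak topology does not deliver that tail control; \eqref{condition} is a pathwise decay statement and does not help either. As stated, this step is a hole.

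What makes the paper's proof go through is a sharper observation you do not use: once $n\ge n^*$ (so that $p_n$ and $q_n$ are within $\eta$ of $p$ and $q$), the variable $Y_n=e^{-r\rho^n}g(B_{\rho^n})$ is \emph{uniformly bounded} $\P$-a.s., simultaneously for all $\P\in\cP(x)$. Indeed, on $\{\rho^n<\infty\}$ the process $B_{\rho^n}$ takes values in the bounded set $([p-\eta,p]\cup[q,q+\eta])\cap I$, on which the continuous $g$ is bounded, and on $\{\rho^n=\infty\}$ one has $Y_n=0$ by \eqref{condition}; this is exactly the paper's bound \eqref{bdd}. With that uniform essential bound plus the convergence in capacity you already obtained from Lemma~\ref{lem:in capacity}, the passage to $\sup_\P\E^\P[\cdot]$ (and by negation, $\inf_\P\E^\P[\cdot]$) is immediate from Theorem~3.2 of \cite{CoJiPeng}; no uniform-integrability machinery is needed. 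You should replace the appeal to the global dominating variable $Z$ by this local uniform bound on $Y_n$ for $n\ge n^*$.

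Two secondary remarks. First, the paper avoids having to prove convergence of the $\inf$-term uniformly across $\P$ by the inequality $\alpha\,\E^\P[Y_n]+(1-\alpha)\sup_{\P'}\E^{\P'}[Y_n]\ge J(x,R_n)\ge g(x)$, then sending $n\to\infty$ for fixed $\P$ via ordinary dominated convergence and only afterwards taking $\inf_\P$; your symmetric route via $|\inf_\P f_n-\inf_\P f|\vee|\sup_\P f_n-\sup_\P f|\le\sup_\P|f_n-f|$ is logically clean but demands the full uniform convergence, so the bound \eqref{bdd} is even more essential for you. Second, your derivation of \eqref{equistopping} from $R_*\in\cE$ is correct and identical to the paper's.
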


\begin{proof}
Fix $R\in\cU(I)$, and consider $R_*$ defined in \eqref{iteration}. Recall from Proposition \ref{limitstopping} that $R_n:= \Theta^n(R)$, $n\in\N\cup\{0\}$, form a nondecreasing sequence in $\cU(I)$ and $R_*= \bigcup_{n\in\N} R_n$. We will denote by $\rho^n$ and $\rho^*$ the hitting times $\rho_{R_n}$ and $\rho_{R_*}$, defined in \eqref{rhoR}, respectively. 

To show $R_*\in\cE$, i.e., $\Theta(R_*)=R_*$, we first note that it suffices to prove $S_{R_*}\subseteq R_*$. This is because $\Theta(R_*)=S_{R_*}\cup R_*$, thanks to the proof of Proposition \ref{limitstopping}. To this end, foy any $x\notin R_*$, we aim to show that $x\notin S_{R_*}$.  As $R_*= \bigcup_{n\in\N} R_n$, we have $x\notin R_n=\Theta^{n}(R)$ for all $n\in\N$. In view of \eqref{opertheta} and \eqref{regions}, this implies 
\begin{align}\label{base}
J(x,R_{n-1})=J(x,\Theta^{n-1}(R))\geq g(x),\quad \forall n\in\N. 
\end{align}
If we can show that 
\begin{equation}\label{to show}
J(x,R_*)\ge \liminf_{n\to\infty} J(x,R_{n}),
\end{equation}
we immediately obtain $J(x,R_*)\ge g(x)$ from \eqref{base}, and thus $x\notin S_{R_*}$, as desired. The rest of the proof focuses on deriving \eqref{to show}. 

First, let us consider
\begin{align*}%\label{p,q}
p:=\sup\{y\in R_*:y<x\}\quad \text{and}\quad  q:=\inf\{y\in R_*: y>x\},
\end{align*}
where we take $p=\ell$ (resp. $q=r$) if there is no $y<x$ (resp. $y>x$) lying in $R_0$. Similarly, we define
\begin{align*}%\label{p,q}
p_n:=\sup\{y\in R_n:y<x\}\quad \text{and}\quad  q_n:=\inf\{y\in R_n: y>x\},\quad \forall n\in\N.
\end{align*}
As $R_*= \bigcup_{n\in\N} R_n$ and $\{R_n\}_{n\in\N}$ is nondecreasing, we have $p_n\uparrow p$ and $q_n\downarrow q$. For the case where $p_n=p$ and $q_n=q$ for $n$ large enough, $\rho^n=\rho^0$ on $\Omega^x$, and thus $J(x,R_n)=J(x,R^*)$, for all $n$ large enough. That is, \eqref{to show} holds trivially. Hence, in the rest of the proof, we assume that $p_n$ is strictly increasing, or $q_n$ is strictly decreasing. 

Take $\eta>0$, and choose $n^*\in\N$ such that $\max\{|p_n-p|,|q_n-q|\}<\eta$ for all $n\ge n^*$. Note that there exists $M>0$ such that
\begin{equation}\label{bdd}
e^{-r\rho^n} |g(B_{\rho^n})|<M,\quad \forall n\ge n^*,\quad \P\hbox{-a.s.},\quad \hbox{for all }\P\in\cP(x).  
\end{equation}
%the set of random variables $\{e^{-r\rho^n} g(B_{\rho^n})\}_{n\ge n^*}$ is $\P$-a.s. bounded, for all $\P\in\cP(x)$. 
Indeed, for any $n\ge n^*$, if $\rho^n=\infty$, $e^{-r\rho^n} g(B_{\rho^n})=0$ $\P$-a.s. thanks to \eqref{condition}; for all $n\ge n^*$ such that $\rho^n<\infty$, as $B_{\rho^n}$ takes values on $([p-\eta,p]\cup[q,q+\eta])\cap I$, the continuity of $g$ yields the desired boundedness. Thus, by the dominated convergence theorem and \eqref{rho converge},
\begin{equation}\label{dominated}
\lim_{n\to\infty}\E^{\P}[e^{-r\rho^n}g(B_{\rho^n})]=\E^{\P}[e^{-r\rho^*}g(B_{\rho^*})],\quad \forall\P\in\cP(x).
\end{equation}
On the other hand, by the definition of $J$ in \eqref{J}, \eqref{base} implies that for any $\P\in\cP(x)$, 
\begin{align*}
\alpha \E^{\P}[e^{-r\rho^n}g(B_{\rho^n})] + (1-\alpha) \sup_{\P\in\cP(x)} \E^{\P}[e^{-r\rho^n}g(B_{\rho^n})] \ge J(x,R_n)  \ge g(x),\quad \forall n\in\N. 
\end{align*}
As $n\to\infty$, we deduce from \eqref{dominated} that
\begin{align*}
\alpha \inf_{\P\in\cP(x)}\E^{\P}[e^{-r\rho^*}g(B_{\rho^*})] + (1-\alpha) \liminf_{n\to\infty}\sup_{\P\in\cP(x)} \E^{\P}[e^{-r\rho^n}g(B_{\rho^n})] \ge g(x). 
\end{align*}
Hence, to prove \eqref{to show}, it remains to show that 
\begin{equation}\label{to show'}
\sup_{\P\in\cP(x)} \E^{\P}[e^{-r\rho^*}g(B_{\rho^*})] \ge \liminf_{n\to\infty}\sup_{\P\in\cP(x)} \E^{\P}[e^{-r\rho^n}g(B_{\rho^n})].  
\end{equation}

Thanks to \eqref{condition}, for any $n\ge n^*$, 
\begin{align*}
\Big|e^{-r\rho^n}&g(B_{\rho^n})- e^{-r\rho^*}g(B_{\rho^*})\Big|= \left|e^{-r\rho^n}g(B_{\rho^n})- e^{-r\rho^*}g(B_{\rho^*})\right| 1_{\{\rho^*<\infty\}}\\ 
&\le \left(e^{-r\rho^n}\left|g(B_{\rho^n})-g(B_{\rho^*})\right|1_{\{\rho^n<\infty\}}+|g(B_{\rho^*})||e^{-r\rho^*}-e^{-r\rho^n}|\right) 1_{\{\rho^*<\infty\}}\\
&\le \kappa\left(|B_{\rho^n}-B_{\rho^*}|1_{\{\rho^n<\infty\}}\right) + C (\rho^n-\rho^*), %1_{\{\rho^*<\infty\}}, 
\end{align*}
where $\kappa:\R_+\to\R_+$ is a modulus of continuity of $g$ on the domain $([p-\eta,p]\cup[q,q+\eta])\cap I$, and $C>0$ is a constant independent of $n$, thanks to the boundedness of $g(B_{\rho^*})$ and the Lipschitz continuity of $x\mapsto e^{-r x}$ on $[0,\infty)$. Fix $\eps>0$. Take $\delta>0$ such that $\kappa(z)< \eps/2$ for $z<\delta$. Then,
\begin{align*}
&\P\left(\Big|e^{-r\rho^n}g(B_{\rho^n})- e^{-r\rho^*}g(B_{\rho^*})\Big|\ge \eps\right)\\
&\le \P\left(  \kappa\left(|B_{\rho^n}-B_{\rho^*}|1_{\{\rho^n<\infty\}}\right) + C (\rho^n-\rho^*)\ge \eps  \right)\\
&\le \P\left( |B_{\rho^n}-B_{\rho^*}|1_{\{\rho^n<\infty\}} \ge \delta\right)  + \P\left( \rho^n-\rho^*\ge \frac{\eps}{2C}  \right),\quad \forall \P\in\cP(x).
\end{align*}
By \eqref{capty1} and \eqref{capty2}, this implies
\begin{equation}\label{in ca}
\lim_{n\to\infty}\sup_{\P\in\cP(x)}\P\left(\Big|e^{-r\rho^n}g(B_{\rho^n})- e^{-r\rho^*}g(B_{\rho^*})\Big|\ge \eps\right) = 0. 
\end{equation}
That is, $e^{-r\rho^n}g(B_{\rho^n})$ converges to $e^{-r\rho^*}g(B_{\rho^*})$ in capacity, in the sense of Definition 3.4 of \cite{CoJiPeng}. Now, by Theorem 3.2 in \cite{CoJiPeng}, \eqref{in ca} and \eqref{bdd} together imply
\[
 \lim_{n\to\infty}\sup_{\P\in\cP(x)} \E^{\P}[e^{-r\rho^n}g(B_{\rho^n})]=\sup_{\P\in\cP(x)} \E^{\P}[e^{-r\rho^*}g(B_{\rho^*})].  
\]
Then, \eqref{to show} is verified, which completes the proof.
\end{proof}

%%%%%%%%%%%%%%%%%%%%%%%%%%%%%%%%%%%%%%
%%%%%%%%%%%%%%%%%%%%%%%%%%%%%%%%%%%%%%

\section{Application to Real Options Valuation}\label{sec:example}
Coined by \cite{Myers} and popularized by \cite{MS86}, real options valuation refers to applying financial option pricing techniques to corporate investment decision making. The essence is to evaluate the right, but not the obligation, to undertake certain business plan, such as initiating, abandoning, expanding, or contracting a capital investment project. An optimal stopping problem can be accordingly formulated, and its solution dictates optimal timing or scheduling of investment outlays.

According to the seminal monograph \cite{Dixit94}, real options valuation fall into two categories: 
(i) dynamic programming under the physical measure, and (ii) contingent claim analysis under the risk-neutral measure. These two methods, as explained in depth by \cite{Dixit94}, have their respective advantages and limitations, and are theoretically justified under different market conditions. The large literature on real options follow these two methods closely, including (but not limited to) \cite{MS86},  \cite{Trigeorgis91} and \cite{Brandao05} under method (i), and \cite{Smith95}, \cite{Hugonnier07}, and \cite{Schwartz13} under method (ii). 

%This section is devoted to enhancing method (ii) by incorporating model ambiguity and ambiguity attitude. 
By nature, real options valuation may suffer model ambiguity {\it more} severely than pricing a typical financial option: as the underlying asset of a real option may not be tradable or fully observable, determining its dynamics
%a risk-neutral measure 
relies largely on an agent's estimate and belief. This often leads to %a collection of plausible risk-neutral measures and a corresponding 
an interval of plausible values of a real option. 
How to deal with these multiple values is unclear in the literature. Standard investment models assume that agents are completely ambiguity-averse, considering solely the worst case, i.e., the least value of the real option; see e.g., \cite{NO07}, \cite{TroKort}, and \cite{MiaoWang}. On the other hand, many empirical studies, including \cite{HeathTversky} and \cite{Bhide-book-99}, suggest heterogeneous ambiguity attitude, towards the same investment opportunities, among investors---some can be quite ambiguity-loving. 

In this section, we incorporate the $\alpha$-maxmin preference into real options valuation. This yields an immediate benefit: $\alpha\in [0,1]$, which measures an agent's ambiguity aversion, turns the multiple values of a real option into one, i.e., the convex combination of the least and the best values, weighted by $\alpha$ and $1-\alpha$, respectively. There is, however, a downside of it: the decision making problem now becomes time-inconsistent. Note that a related stopping problem under the $\alpha$-maxmin preference was introduced, but {\it not} solved,  in \cite{Schroder}, precisely because of the time inconsistency involved. By contrast, we will resolve a practical real options valuation problem under the $\alpha$-maxmin preference, on strength of the developments in Section~\ref{sec:model} and \ref{sec:converg}: {\it all} equilibria, as well as the {\it best} one among them, will be fully characterized under appropriate conditions.
%As a major contribution of this paper, we will provide complete characterizations of not only 

Specifically, we take the underlying asset $X$ to be a geometric Brownian motion, i.e., 
\begin{align}\label{GBM}
 X^{x,b,\sigma}_t = x + \int_0^t b X^{x,b, \sigma}_s ds +  {\int_{0}^{t}} \sigma X^{x,b,\sigma}_s dB_s,\quad \forall t\ge 0,\qquad \P_0\hbox{-a.s.},
\end{align}
for some $b\in\R$ and $\sigma>0$, yet an investor is uncertain about the true values of $b$ and $\sigma$. Following the uncertain volatility model in \cite{ALP95} and \cite{Lyons95}, we assume $\underline{\sigma}\leq \sigma\leq \overline{\sigma}$ for some known constants $0<\underline{\sigma}<\overline{\sigma}$. We also allow for uncertain drift, assuming $\underline{b}\leq b\leq \overline{b}$ for some known constants $\underline{b}<\overline{b}$.
This gives rise to a collection of plausible probability measures: similarly to \eqref{P^x}, each $\sigma\in[\underline{\sigma},\overline{\sigma}]$ and $b\in[\underline{b},\overline{b}]$ correspond to $\P^x_{\sigma,b} := \P_0\circ (X^{x,b,\sigma})^{-1}$. 

%We also assume that there is a known riskfree rate $r> 0$.  

In this section, we focus on the payoff function $g(x):=(K-x)^+$ of the real option, for some given $K>0$. In view of the setup in Sections~\ref{sec:model} and \ref{sec:converg}, we have $I=(0,\infty)$ and the expected payoff \eqref{J} now takes the form
\begin{align}\label{J'}
J(x,R)&=\alpha \inf_{\sigma\in[\underline{\sigma},\overline{\sigma}], b\in[\underline{b},\overline{b}]} \E^{\P_0}\left[e^{-rT_R}(K-X^{x,b,\sigma}_{T_R})^+\right]\nonumber\\ 
&\hspace{0.2in}+ (1-\alpha) \sup_{\sigma\in[\underline{\sigma},\overline{\sigma}], b\in[\underline{b},\overline{b}]} \E^{\P_0}\left[e^{-rT_R}(K-X^{x,b,\sigma}_{T_R})^+\right],
\end{align}
where $T_R$ is defined similarly to \eqref{hitX} as $T_R := \inf\{t>0:X^{x,b,\sigma}_t \in R\}$. Our goal is to characterize all (closed) equilibria $R$, and find the best one $\hat R$ among them; recall from Remark~\ref{rem:closed E} that we can focus on closed equilibria in the current setting. To this end, we need to first introduce an optimality criterion for an equilibrium. For any $R\in \cE$, we define % the value function 
\[
V(x,R) := g(x)\vee J(x,R),\quad\forall x\in I.  
\]

\begin{definition}\label{def:optimal E}
$\hat{R}\in\mathcal{E}$ is called an optimal equilibrium, if for any $R\in\cE$, we have
\begin{align*}
V(x,\hat{R})\geq V(x,R),\ \ \forall x\in {I}.
\end{align*}
\end{definition}
This criterion, introduced in \cite{HZ17-discrete}, is rather strong: it requires a subgame perfect Nash equilibrium to dominate any other equilibrium on the entire state space. For stopping problems under non-exponential discounting, \cite{HZ17-discrete, HZ17-continuous} establish the general existence of an optimal equilibrium, when the discount function induces decreasing impatience. In an example of optimal stopping under probability distortion, \cite{HNHZ18} derive an optimal equilibrium; see Section 4.3 therein. For the current real options valuation problem under model ambiguity, we will show that an optimal equilibrium also exists under appropriate conditions.

\begin{remark}
For time-inconsistent stopping problems, an equilibrium can be defined as in the present paper (i.e., Definition~\ref{def:E}, based on the fixed-point approach in \cite{HN18}), as in \cite{CL18} (based on the standard definition of an equilibrium for control problems in \cite{EL06}), or as in \cite{BZZ} (based on ``strong equilibria'' for control problems in \cite{HZ18-MOR}). As argued in \cite{BZZ} and \cite{HZ18-MOR}, the third kind of definition captures the idea of subgame perfect Nash equilibrium most accurately: it prevents deviation from the present strategy in a however small time interval starting from today---an ideal property that may not be achieved by an equilibrium of the first or the second kind. %under time inconsistency. 
 In a continuous-time Markov chain model, \cite{BZZ} analyze these three types of equilibria in detail, showing that an optimal equilibrium of the first kind (i.e., defined as in Definition~\ref{def:optimal E}) is in fact an equilibrium of the third kind. If such a result could be generalized to a diffusion model (which remains an open problem), an optimal equilibrium in Definition~\ref{def:optimal E} would automatically possess the ideal property mentioned above. %prevent deviation in a however small time interval starting from today. 
 \end{remark}

Let us start with characterizing closed equilibria that are contained in $(0,K]$. It will be shown in the end that this focus on $(0,K]$ is not restrictive at all. 

\begin{lemma}\label{chareq-1}
Suppose $\underline b \ge 0$. For any $R\in\mathcal{E}$ that is closed and contained in $(0,K]$, $R=(0,a]$ for some $a\in(0,K]$.
\end{lemma}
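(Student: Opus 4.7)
The plan is to set $a:=\sup R$ and show $R=(0,a]$. First, $R\neq\emptyset$: if $R=\emptyset$ then $\rho_R\equiv\infty$ and \eqref{condition} makes $J(x,\emptyset)=0$, so $\Theta(\emptyset)=S_\emptyset=(0,K)\neq\emptyset$, ruling out $\emptyset\in\cE$. Closedness of $R$ together with $R\subseteq(0,K]$ then gives $a\in R\subseteq(0,K]$, so it only remains to prove $(0,a)\subseteq R$. Suppose, for contradiction, that $x_0\in(0,a)\setminus R$, and let $(p^*,q^*)\subseteq I\setminus R$ be the maximal open interval containing $x_0$; closedness of $R$ forces $q^*\in R$ and $q^*\le a\le K$, while either $p^*=0$ or $p^*\in R\subseteq(0,K]$.

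Under each $\P^{x_0}_{r,\sigma}$, $B$ is a geometric Brownian motion with drift $r$ and volatility $\sigma$, so $e^{-rt}B_t$ is a $\P^{x_0}_{r,\sigma}$-martingale, paths are continuous, and $B$ cannot reach $0$. Hence $\rho_R$ is the first exit from $(p^*,q^*)$---through $q^*$ if $p^*=0$ and through $\{p^*,q^*\}$ otherwise---with $B_{t\wedge\rho_R}\in[\max(p^*,0),q^*]$ bounded, and $e^{-rt}B_t\to 0$ on $\{\rho_R=\infty\}$ because $r>0$. Optional stopping gives $\E^{\P^{x_0}_{r,\sigma}}\!\bigl[e^{-r\rho_R}B_{\rho_R}1_{\{\rho_R<\infty\}}\bigr]=x_0$. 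Since $B_{\rho_R}\in R\subseteq(0,K]$ on $\{\rho_R<\infty\}$, the payoff reduces to $(K-B_{\rho_R})^+=K-B_{\rho_R}$, yielding the key identity
\[
\E^{\P^{x_0}_{r,\sigma}}\!\bigl[e^{-r\rho_R}(K-B_{\rho_R})^+\bigr]=Kw(\sigma)-x_0,\qquad w(\sigma):=\E^{\P^{x_0}_{r,\sigma}}\!\bigl[e^{-r\rho_R}1_{\{\rho_R<\infty\}}\bigr].
\]
Substituting into \eqref{J'} produces $J(x_0,R)=K\bigl[\alpha\inf_\sigma w(\sigma)+(1-\alpha)\sup_\sigma w(\sigma)\bigr]-x_0$, so it suffices to establish $\alpha\inf_\sigma w(\sigma)+(1-\alpha)\sup_\sigma w(\sigma)<1$: that would force $J(x_0,R)<K-x_0=g(x_0)$ and hence $x_0\in S_R\subseteq\Theta(R)=R$, contradicting $x_0\notin R$.

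When $p^*=0$, the martingale identity pins $w(\sigma)=x_0/q^*$ independently of $\sigma$, and $x_0<q^*$ instantly gives the desired strict inequality. When $p^*>0$, the positive distance from $x_0$ to $R$ forces $\rho_R>0$ a.s.\ under each $\P^{x_0}_{r,\sigma}$, so $w(\sigma)<1$ pointwise in $\sigma\in[\usig,\osig]$. For $\alpha>0$ this is enough, since $\alpha\inf_\sigma w+(1-\alpha)\sup_\sigma w\le\alpha\inf_\sigma w+(1-\alpha)<1$ using $\inf_\sigma w<1$ and $\sup_\sigma w\le 1$. For the best-case subcase $\alpha=0$ one must upgrade the pointwise bound to $\sup_\sigma w(\sigma)<1$; this follows from the Feynman--Kac characterisation of $w(\sigma)$ as $\phi_\sigma(x_0)$, where $\phi_\sigma$ solves the Cauchy--Euler ODE $\tfrac{\sigma^2y^2}{2}\phi''+ry\phi'-r\phi=0$ on $(p^*,q^*)$ with $\phi(p^*)=\phi(q^*)=1$, so that $\sigma\mapsto w(\sigma)$ is continuous on the compact interval $[\usig,\osig]$ and attains its supremum at some $\sigma^*$ with $w(\sigma^*)<1$. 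I expect this $\alpha=0$ upgrade to be the only mildly delicate step; everything else is a clean consequence of the martingale $e^{-rt}B_t$ combined with $R\subseteq(0,K]$ keeping the payoff in its linear regime, which collapses the $\sigma$-ambiguity into the single scalar $w(\sigma)$.
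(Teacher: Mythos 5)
Your proof is correct, but it takes a genuinely different route from the paper's. The paper's argument is short and avoids any case split: since $p<x<q$ forces $\rho_R>0$ a.s., dropping the discount factor gives the \emph{strict} inequality $J(x,R)<\alpha\inf_\sigma\E^{\P_0}[K-X^{x,r,\sigma}_{T_R}]+(1-\alpha)\sup_\sigma\E^{\P_0}[K-X^{x,r,\sigma}_{T_R}]$, and then the fact that $X^{x,r,\sigma}$ is a $\P_0$-\emph{sub}martingale (drift $r>0$) bounds each $\E^{\P_0}[K-X^{x,r,\sigma}_{T_R}]$ by $K-x$, immediately producing $J(x,R)<g(x)$. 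In contrast, you exploit the \emph{martingale} $e^{-rt}B_t$ to compute $J(x_0,R)$ exactly as $K[\alpha\inf_\sigma w(\sigma)+(1-\alpha)\sup_\sigma w(\sigma)]-x_0$ with $w(\sigma)=\E[e^{-r\rho_R}1_{\{\rho_R<\infty\}}]$, and then show the convex combination of $w$'s is strictly below $1$. This exact decomposition is more informative, but it forces you to split into $p^*=0$ vs.\ $p^*>0$ and to handle $\alpha=0$ separately via continuity of $\sigma\mapsto w(\sigma)$ (the Feynman--Kac/Cauchy--Euler step), which the paper's argument bypasses entirely: the strictness there comes from the discount factor, not from $\sup_\sigma w<1$. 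Your $\alpha=0$ upgrade is correct (the exit-time Laplace transform is continuous in $\sigma$ on the compact $[\usig,\osig]$), and you also handle the $R=\emptyset$ edge case explicitly, which the paper leaves implicit. Both proofs rely on the same two structural facts---$r>0$ and $R\subseteq(0,K]$ keeping the payoff in the linear regime---so they are close in spirit, but the paper's submartingale bound is the more economical of the two.
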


\begin{proof}
Define $a:=\sup\{x:x\in R\}\leq K$. By contradiction, suppose that there exists $x\in(0,a)$ such that $x\notin R$. Consider
\begin{align}\label{p,q}
p:=\sup\{y\in R:y<x\}\ \ \text{and}\ \ \ q:=\inf\{y\in R: y>x\},
\end{align}
where the supremum is taken to be $0$ if there exists no $y\in R$ such that $y<x$. By the closedness of $R$, we have $p<x<q$ and hence $\rho_R>0$ $\mathbb{P}_0$-a.s. In view of \eqref{J'}, this implies
\begin{align}
J(x,R)
&<\alpha  \inf_{\sigma\in[\underline{\sigma},\overline{\sigma}], b\in[\underline{b},\overline{b}]} \E^{\P_0}[K-X^{x,b,\sigma}_{T_R}]+ (1-\alpha)  \sup_{\sigma\in[\underline{\sigma},\overline{\sigma}], b\in[\underline{b},\overline{b}]}\E^{\P_0}[K-X^{x,b,\sigma}_{T_R}]\notag\\
&=\alpha  \inf_{\sigma\in[\underline{\sigma},\overline{\sigma}], b\in[\underline{b},\overline{b}]} \left(K-\E^{\P_0}[X^{x,b,\sigma}_{T_R}]\right)+ (1-\alpha)  \sup_{\sigma\in[\underline{\sigma},\overline{\sigma}], b\in[\underline{b},\overline{b}]}\left(K-\E^{\P_0}[X^{x,b,\sigma}_{T_R}]\right)\notag\\
&\le \alpha(K-x)+(1-\alpha)(K-x)=K-x=g(x),\label{sub}
\end{align}
where the last inequality follows from $X^{x,b,\sigma}$ being a $\P_0$-submartingale for all $\sigma\in[\underline{\sigma},\overline{\sigma}]$ and $b\in[\underline{b},\overline{b}]$, thanks to $\underline b\ge 0$. It follows that $x\in S_R$, a contradiction to $R$ being an equilibrium.
\end{proof}

%Note that for a closed equilibrium $R\in \cE$, there are three possibilities: (i) $R$ is one-sided, i.e. an element of
%\begin{align*}
%\cE_1:= \{R\in\cE: R= (0,a]\ \hbox{for some}\ a>0\}.
%\end{align*}
%Note that $[a,\infty)$ cannot be an equilibrium for any $a>0$, as $J(x,[a,\infty))< (K-a)^+< (K-x)^+=g(x)$ for all $0<x<a$. 

To obtain the converse of Lemma~\ref{chareq-1}---for which $a > 0$ the set $R = (0, a]$ is an equilibrium---
requires a detailed analysis on the map $x\to J(x,(0, a])$. For each $a\in (0,K)$, we define
\begin{align*}
&\Lambda(x,a):= J(x,(0,a])\\
&= (K-a) \left(\alpha\inf_{\sigma\in[\underline{\sigma},\overline{\sigma}],b\in[\underline{b},\overline{b}]} \E^{\P_0}\left[e^{-rT_a^{x,b,\sigma}}  \right] +(1-\alpha) \sup_{\sigma\in[\underline{\sigma},\overline{\sigma}],b\in[\underline{b},\overline{b}]}\E^{\P_0}\left[e^{-rT_a^{x,b,\sigma}}  \right]\right),\ \ \text{for}\ a\leq x<\infty,
\end{align*}
where $T_a^{x,b,\sigma}$ is defined as in \eqref{hitX}. Thanks to the formula on p.628 of \cite{BS-book-02},
\begin{equation}\label{BS}
\E^{\P_0}\left[e^{-rT_a^{x,b,\sigma}}  \right] = \left(\frac{a}{x}\right)^{\sqrt{\left(\frac{b}{\sigma^2}-\frac{1}{2}\right)^2+\frac{2r}{\sigma^2}}+\frac{b}{\sigma^2}-\frac{1}{2}}. % = \left(\frac{a}{x}\right)^{\frac{2r}{\sigma^2}}.
\end{equation}
It can be checked by direct calculation that the map 
\[
(b,\sigma)\mapsto \sqrt{\left(\frac{b}{\sigma^2}-\frac{1}{2}\right)^2+\frac{2r}{\sigma^2}}+\frac{b}{\sigma^2}-\frac{1}{2}
\]
is strictly increasing in $b$, and strictly decreasing in $\sigma$. It follows that
\begin{align}\label{LAMD}
\Lambda(x,a)
&=(K-a) \left(\alpha \left(\frac{a}{x}\right)^{m_1} +(1-\alpha) \left(\frac{a}{x}\right)^{m_2}\right),\quad \text{for}\ a\leq x<\infty,
\end{align}
where
\begin{align}\label{m_1,2}
m_1:=\sqrt{\left(\frac{\overline{b}}{\underline{\sigma}^2}-\frac{1}{2}\right)^2+\frac{2r}{\underline{\sigma}^2}}+\frac{\overline{b}}{\underline\sigma^2}-\frac{1}{2}>0\quad  \text{and}\quad m_2:=\sqrt{\left(\frac{\underline{b}}{\overline{\sigma}^2}-\frac{1}{2}\right)^2+\frac{2r}{\overline{\sigma}^2}}+\frac{\underline{b}}{\overline\sigma^2}-\frac{1}{2}>0.
\end{align}
Let us also introduce
\begin{equation}\label{a^*}
a^*:=\frac{m_1\alpha+m_2(1-\alpha)}{1+m_1\alpha+m_2(1-\alpha)} K\in (0,K).
\end{equation}
The next result collects useful properties of $\Lambda(x,a)$.

\begin{lemma}\label{char-1}
The function $\Lambda:\{(x,a)\in (0,\infty)\times(0,K]:x>a\}\to \R$ in \eqref{LAMD} satisfies the following properties. First, for any $a\in (0,K)$, %For $a\leq x<\infty$, we have
\begin{itemize}
\item[(i)] $x\mapsto\Lambda(x,a)$ is strictly decreasing and strictly convex on $(a,\infty)$, with $\Lambda(a,a)=K-a$ and $\lim_{x\rightarrow \infty}\Lambda(x,a)=0$;
\item[(ii)] if $a<a^*$, the two functions $x\mapsto\Lambda(x,a)$ and $x\mapsto (K-x)^+$ intersect exactly once at some $x^*\in (a,K)$, with $\Lambda(x,a)<(K-x)^+$ on $(a,x^*)$ and $\Lambda(x,a)>(K-x)^+$ on $(x^*,\infty)$;
\item[(iii)] if $a\geq a^*$, then $\Lambda(x,a)>(K-x)^+$ on $(a,\infty)$.
\end{itemize}
Moreover, for any $x\ge a^*$, 
\begin{itemize}
\item [(iv)] $a\mapsto\Lambda(x,a)$ is strictly decreasing on $(a^*,x\wedge K)$.
\end{itemize}
\end{lemma}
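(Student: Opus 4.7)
The plan is to reduce everything to elementary calculus on the explicit expression $\Lambda(x,a) = (K-a)\bigl[\alpha(a/x)^{m_1} + (1-\alpha)(a/x)^{m_2}\bigr]$ from \eqref{LAMD}, using $0<m_2<m_1$ (which follows from $\underline\sigma<\overline\sigma$). Item (i) is essentially immediate: each summand $x\mapsto(a/x)^{m_i}$ is strictly positive, strictly decreasing, and strictly convex on $(a,\infty)$, equal to $1$ at $x=a$ and vanishing as $x\to\infty$, so these four properties are inherited by the positive convex combination and by multiplication with $(K-a)>0$, delivering the stated boundary values together with strict monotonicity and convexity.

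For (ii) and (iii), I would compare $\Lambda(\cdot,a)$ with the put payoff by analyzing $h(x):=\Lambda(x,a)-(K-x)$ on $[a,K]$, the extension to $[K,\infty)$ being trivial since there $\Lambda>0=(K-x)^+$ by (i). From (i), $h$ is strictly convex on $[a,K]$ with $h(a)=0$ and $h(K)=\Lambda(K,a)>0$, and a short computation gives
\[
h'(a) \;=\; 1 - \frac{(K-a)\beta}{a},\qquad \beta := \alpha m_1 + (1-\alpha)m_2,
\]
which is negative iff $a<\tfrac{\beta K}{1+\beta}=a^*$. If $a<a^*$, then $h'(a)<0$ combined with $h(a)=0$, $h(K)>0$, and strict convexity forces $h$ to dip below zero and cross back up exactly once at some $x^*\in(a,K)$, yielding the stated signs in (ii); if $a\ge a^*$, then $h'(a)\ge 0$ and strict convexity force $h'>0$ on $(a,K]$, hence $h>0$ on $(a,K]$, yielding (iii).

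For (iv), the delicate step, I would differentiate to obtain
\[
\partial_a\Lambda(x,a) \;=\; \alpha(a/x)^{m_1}\!\left[\tfrac{(K-a)m_1}{a}-1\right]+(1-\alpha)(a/x)^{m_2}\!\left[\tfrac{(K-a)m_2}{a}-1\right],
\]
and introduce $s:=(K-a)/a$ (with $s<1/\beta$ exactly when $a>a^*$, since $a\mapsto s$ is strictly decreasing) and $\theta:=(a/x)^{m_1-m_2}\in(0,1)$ for $a<x$. It then suffices to prove
\[
E(\theta,s) \;:=\; \alpha\theta(sm_1-1)+(1-\alpha)(sm_2-1)\;<\;0\qquad \text{for all } a\in(a^*,x\wedge K).
\]
The case $sm_1-1<0$ is immediate, since $m_2<m_1$ then forces $sm_2-1<0$ as well, making both summands strictly negative. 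The main obstacle is the mixed-sign regime $sm_1-1\ge 0$ (equivalently $a\le\tfrac{m_1 K}{1+m_1}$), where the two bracketed factors in $\partial_a\Lambda$ have opposite signs and neither individually has a clean sign. Here I would exploit $\theta\le 1$ (which encodes $a\le x$) to dominate $E$ by $\alpha(sm_1-1)+(1-\alpha)(sm_2-1)=s\beta-1$, with strict negativity coming from $s<1/\beta$. Combining the two cases gives (iv).
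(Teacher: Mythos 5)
Your proof is correct, and for items (i)--(iii) it follows essentially the same route as the paper: (i) is read off the explicit formula, and (ii)--(iii) reduce to comparing the one-sided slope $\lim_{x\downarrow a}\Lambda_x(x,a)$ (equivalently $h'(a)$ in your notation) with $-1$ and then invoking strict convexity, exactly as the paper does. For (iv) your algebra diverges from the paper's: after computing $\partial_a\Lambda$, the paper bounds the two bracketed coefficients $a-m_i(K-a)$ from below by $\pm\,\tfrac{(m_1-m_2)K}{1+\beta}$ times $\alpha$ or $-(1-\alpha)$, using $a>a^*$, and then combines these into the single estimate $\Lambda_a\le -\tfrac{\alpha(1-\alpha)(m_1-m_2)K}{a(1+\beta)}\bigl[(a/x)^{m_2}-(a/x)^{m_1}\bigr]$; you instead normalize by $(a/x)^{m_2}$, set $\theta=(a/x)^{m_1-m_2}<1$ and $s=(K-a)/a<1/\beta$, and split according to the sign of $sm_1-1$, using $\theta\le 1$ to dominate $E$ by $s\beta-1<0$ in the hard regime. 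The two arguments are equally elementary, but yours is arguably cleaner at the extremes $\alpha\in\{0,1\}$: the paper's final displayed bound degenerates to $\Lambda_a\le 0$ there (one must retrace the strictness of \eqref{for m2}--\eqref{for m1} to recover $<0$), whereas your case split yields strict negativity directly. One cosmetic imprecision: in your first case you say ``both summands strictly negative,'' which is false when $\alpha\in\{0,1\}$ (one summand vanishes); what you actually need, and what holds, is that both are $\le 0$ with at least one $<0$.
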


\begin{proof}
It can be checked directly from \eqref{LAMD} that (i) holds. For (ii) and (iii), it suffices to check the slope of $\Lambda(x,a)$ at $x=a$. Because
\begin{align*}
\lim_{x\downarrow a}\Lambda_x(x,a)&=\lim_{x\downarrow a} -\frac{K-a}{x}\left(m_1\alpha\left(\frac{a}{x}\right)^{m_1}+m_2(1-\alpha)\left(\frac{a}{x}\right)^{m_2}\right)\\
&=-\frac{K-a}{a} \left(m_1\alpha+m_2(1-\alpha)\right),
\end{align*}
we have $\lim_{x\downarrow a}\Lambda_x(x,a)<-1$ if and only if $a<a^*$. Now, with the properties in (i),  if $a<a^*$, $\lim_{x\downarrow a}\Lambda_x(x,a)<-1$ implies that $\Lambda(x,a)$ intersects $(K-x)^+$ exactly once at some $x^*\in (a,K)$; if $a\geq a^*$, $\lim_{x\downarrow a}\Lambda_x(x,a)\ge -1$ implies that $\Lambda(x,a)$ is always above $(K-x)^+$ on $(a,\infty)$.

To prove (iv), fix $x\ge a^*$. In view of \eqref{LAMD}, for any $a\in (a^*,x\wedge K)$,
\begin{equation}\label{LAMD' in a}
\Lambda_a(x,a) = -\frac1a \left[(1-\alpha) \big(a-m_2(K-a)\big) \left(\frac{a}{x}\right)^{m_2} + \alpha \big(a-m_1(K-a)\big) \left(\frac{a}{x}\right)^{m_1} \right].
\end{equation}
As $a> a^*$,
\begin{align}
a-m_2(K-a) &> \frac{(m_1\alpha+m_2(1-\alpha))K - m_2 (K-a) (1+m_1\alpha+m_2(1-\alpha))}{1+m_1\alpha+m_2(1-\alpha)}\notag\\
& > \frac{(m_1\alpha+m_2(1-\alpha))K - m_2 K}{1+m_1\alpha+m_2(1-\alpha)} = \frac{\alpha (m_1-m_2) K}{1+m_1\alpha+m_2(1-\alpha)},\label{for m2}
\end{align}
where the second line follows from $(m_1\alpha+m_2(1-\alpha))K< (1+m_1\alpha+m_2(1-\alpha)) a$, equivalent to $a> a^*$. 
A similar calculation yields
\begin{equation}\label{for m1}
a-m_1(K-a) >  \frac{(m_1\alpha+m_2(1-\alpha))K - m_1 K}{1+m_1\alpha+m_2(1-\alpha)}= \frac{-(1-\alpha)(m_1-m_2) K}{1+m_1\alpha+m_2(1-\alpha)}.
\end{equation}
By \eqref{for m2} and \eqref{for m1}, \eqref{LAMD' in a} leads to
\[
\Lambda_a(x,a) < -\frac{\alpha(1-\alpha)(m_1-m_2)K}{a (1+m_1\alpha+m_2(1-\alpha))} \left[\left(\frac{a}{x}\right)^{m_2} - \left(\frac{a}{x}\right)^{m_1} \right].
\]
As $m_1, m_2>0$ and $\frac{a}{x}<1$ for $a\in (a^*,x\wedge K)$, the above implies $\Lambda_a(x,a)<0$, as desired.
\end{proof}

A complete characterization of closed equilibria contained in $(0, K]$ can now be established. 

\begin{proposition}\label{characterization}
Suppose $\underline b \ge 0$. Then, 
$\cE_{(0,K]} := \{(0,a]:a^*\le a\le K\}$ is the collection of all closed equilibria contained in $(0,K]$. Moreover, %$\hat R:= (0,a^*]$ is optimal in $\cE_{(0,K]}$.
for any $a^*<a\le K$, 
\[
J(x,(0,a^*]) > J(x,(0,a])\quad \hbox{for all}\ x>a^*. 
\]
%Recall $a^*\in (0,K)$ defined in \eqref{a^*}. We have
%\begin{itemize}
%\item [(i)] $\cE_1 = \{(0,a]:a\ge a^*\}$.
%\item [(ii)] $\hat R:= (0,a^*]$ is optimal in $\cE_1$. Specifically, for any $a> a^*$, 
%\[
%J(x,(0,a^*]) > J(x,(0,a])\quad \hbox{for all}\ x>a^*. 
%\]
%\end{itemize}
\end{proposition}

\begin{proof}
In view of Lemma~\ref{chareq-1}, to prove the first assertion, it suffices to show that $(0,a]\in\mathcal{E}$ if and only if $a\geq a^*$. Observe that $(0,a]\in\mathcal{E}$ if and only if $J(x,(0,a])\ge g(x) = (K-a)^+$ for all $x>a$. As $J(x,(0,a])=\Lambda(x,a)$, Lemma \ref{char-1} asserts that this holds if and only if $a\ge a^*$. 

Set $\hat R:= (0,a^*]$ and take an arbitrary $R=(0,a]$ with $a^*<a\le K$. %By definition, $\hat R:= (0,a^*]$ satisfies $J(x,\hat R)=K-x=J(x,R)$ for all $x\in(0,a^*]$. 
For any $a^*<x\le a$, Lemma~\ref{char-1} (iii) implies $J(x,\hat R)=\Lambda(x,a^*) >K-x=J(x,R)$. For any $x>a$, Lemma~\ref{char-1} (iv) implies $J(x,\hat R)=\Lambda(x,a^*)>\Lambda(x,a)=J(x,R)$. Hence, we conclude that $J(x,\hat R)> J(x,R)$ for all $x>a^*$. %, and thus $V(x,\hat R)\ge V(x,R)$ for all $x\in I$.  
\end{proof}

Now, we show that focusing on equilibria contained in $(0,K]$ is by no means restrictive.

\begin{lemma}\label{lefteq}
Suppose $\underline b \ge 0$. For any $R\in\mathcal{E}$ that is closed, set $\bar a:=\sup\{x\in R:x\le K\}$. Then, we have $R\cap (0,K] = (0,\bar a]\in\cE$ and $J(x,(0,\bar a])\geq J(x,R)$ for all $x\in I$.
\end{lemma}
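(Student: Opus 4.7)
The plan is to establish two claims in turn: (a) $R\cap(0,K]=(0,\bar{a}]$; and (b) $J(x,(0,\bar{a}])\geq J(x,R)$ for every $x\in I$. Once (b) is in hand, $(0,\bar{a}]\in\cE$ follows quickly: Proposition~\ref{limitstopping} already gives $(0,\bar{a}]\subseteq\Theta((0,\bar{a}])$, so it suffices to show $S_{(0,\bar{a}]}\subseteq(0,\bar{a}]$. For any $x>\bar{a}$, either $x\notin R$, in which case $g(x)\le J(x,R)\le J(x,(0,\bar{a}])$ because $R\in\cE$; or $x\in R$, in which case (a) forces $x>K$ and hence $g(x)=0\le J(x,(0,\bar{a}])$. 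Either way $x\notin S_{(0,\bar{a}]}$. I expect the main obstacle to lie in (b), which demands a careful pathwise description of the hitting time $T_R$ when $x>\bar{a}$.

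For (a), I first note that $R\cap(0,K]\neq\emptyset$: were $R\subseteq(K,\infty)$, then for every $x\in(0,K)$ we would have $X^{x,r,\sigma}_{T_R}>K$ on $\{T_R<\infty\}$, so $J(x,R)=0<K-x=g(x)$, forcing $x\in S_R\subseteq R$, a contradiction. Hence $\bar{a}>0$ and, by closedness of $R$, $\bar{a}\in R$. Suppose, for contradiction, that some $x\in(0,\bar{a})$ fails to lie in $R$, and set $p:=\sup\{y\in R:y<x\}$ (interpreted as $0$ if the set is empty) and $q:=\inf\{y\in R:y>x\}$. Closedness of $R$ yields $p<x<q$, and $\bar{a}\in R$ forces $q\leq\bar{a}\leq K$. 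Path-continuity of $X^{x,r,\sigma}$ then implies $X^{x,r,\sigma}_{T_R}\in\{p,q\}\cap R\subseteq(0,K]$ on $\{T_R<\infty\}$. The same chain of inequalities as in the proof of Lemma~\ref{chareq-1}---driven by $T_R>0$ $\P_0$-a.s.\ together with the $\P_0$-submartingale property of $X^{x,r,\sigma}$ (valid since $r>0$)---then yields $J(x,R)<K-x=g(x)$, contradicting $x\notin\Theta(R)=R$. Thus $(0,\bar{a}]\subseteq R$, and the definition of $\bar{a}$ gives $R\cap(0,K]=(0,\bar{a}]$.

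For (b), fix $x\in I$. If $x\in(0,\bar{a}]$, then $x\in R$ by (a), so Corollary~\ref{coro:rho_x=0} gives $T_R=T_{(0,\bar{a}]}=0$ $\P_0$-a.s., whence $J(x,R)=J(x,(0,\bar{a}])=K-x$. If $x>\bar{a}$, then by (a) and path-continuity, on $\{T_R<\infty\}$ the process $X^{x,r,\sigma}$ either descends and first touches $R\cap(0,K]=(0,\bar{a}]$ at $\bar{a}$, or ascends and first enters $R\cap(K,\infty)$ at some value strictly above $K$. Hence $(K-X^{x,r,\sigma}_{T_R})^+=(K-\bar{a})\,1_{\{X^{x,r,\sigma}_{T_R}=\bar{a}\}}$ on $\{T_R<\infty\}$, and on $\{X^{x,r,\sigma}_{T_R}=\bar{a}\}$ we have $T_R=T_{\bar{a}}^{x,r,\sigma}$. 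Consequently,
\[
\E^{\P_0}\bigl[e^{-rT_R}(K-X^{x,r,\sigma}_{T_R})^+\bigr]
=(K-\bar{a})\,\E^{\P_0}\bigl[e^{-rT_{\bar{a}}^{x,r,\sigma}}1_{\{X^{x,r,\sigma}_{T_R}=\bar{a}\}}\bigr]
\leq(K-\bar{a})\,\E^{\P_0}\bigl[e^{-rT_{\bar{a}}^{x,r,\sigma}}\bigr].
\]
Since $K-\bar{a}\ge0$, these inequalities survive taking $\inf_{\sigma\in[\underline{\sigma},\overline{\sigma}]}$ and $\sup_{\sigma\in[\underline{\sigma},\overline{\sigma}]}$, and forming the $\alpha$-convex combination delivers $J(x,R)\leq\Lambda(x,\bar{a})=J(x,(0,\bar{a}])$. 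This completes (b), and combined with the opening paragraph, it concludes the proof.
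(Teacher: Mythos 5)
Your proof is correct but takes a genuinely different route from the paper for the equilibrium claim. Both proofs establish (a) by the same submartingale argument, and your domination inequality (b) agrees with the paper's key calculation in~\eqref{all sigma} (though you run it for all $x>\bar a$ at once, rather than only on $(\bar a,\bar b)$ with $\bar b:=\inf\{y\in R:y>K\}$, the remaining cases being handled separately in the paper). The substantive divergence is in proving $(0,\bar a]\in\cE$. The paper invokes Proposition~\ref{characterization} --- the characterization $(0,a]\in\cE\iff a\ge a^*$ --- and rules out $\bar a<a^*$ by contradiction via Lemma~\ref{char-1}(ii), exhibiting a small interval $(\bar a,\bar a+\delta)$ on which $J(\cdot,(0,\bar a])<g$. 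You instead derive $(0,\bar a]\in\cE$ directly from (b): for $x>\bar a$, either $x\notin R$ and $R\in\cE$ gives $g(x)\le J(x,R)\le J(x,(0,\bar a])$, or $x\in R$ forces $x>K$ so $g(x)=0\le J(x,(0,\bar a])$; hence $S_{(0,\bar a]}\subseteq(0,\bar a]$, which suffices by Proposition~\ref{limitstopping}. Your route is more self-contained --- it avoids both the explicit threshold $a^*$ and the closed-form expression for $\Lambda$, and it exposes the equilibrium property of the truncation as a soft consequence of the domination inequality, valid beyond the geometric-Brownian setting where $\Lambda$ is computable. The paper's route, by leaning on the threshold characterization, underscores that $\bar a\ge a^*$ necessarily, a fact it reuses in the surrounding discussion; so while your argument is cleaner as a standalone proof of this lemma, the paper's packaging keeps the characterization front and center.
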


\begin{proof}
Note that $R\cap (0,K]\neq\emptyset$ must hold. If not, we would have $J(x,R)=0 < K-x = g(x)$ for all $0<x<K$, a contradiction to $R\in\mathcal{E}$. Hence, $\bar a$ is well-defined with $0<\bar a\le K$. %We then conclude $(0,\bar a]\in\cE$ from Proposition~\ref{characterization}.

To show $R\cap (0,K]=(0,\bar a]$, assume to the contrary that there exists $x\in (0,\bar{a})$ such that $x\notin R$. Similarly to the proof of Lemma \ref{chareq-1}, by considering $p$ and $q$ as in \eqref{p,q} and carrying out the calculation as in \eqref{sub}, we get $J(x,R)<K-x=g(x)$, a contradiction to $R\in\mathcal{E}$. 

To show $(0,\bar a]\in\cE$, it suffices to prove $\bar a\ge a^*$, thanks to Proposition~\ref{characterization}. Assume to the contrary that $\bar a< a^*$. Consider $\bar q:=\inf\{x\in R: x> K\}\ge K$. For any $x\in (\bar a,\bar q)$, note that 
\begin{align}
\E^{\P_0}\left[e^{-rT_{R}}g(X^{x,b,\sigma}_{T_{R}})\right] &=  \E^{\P_0}\left[e^{-r T^{x,b,\sigma}_{\bar a}}g(\bar a)1_{\{T^{x,b,\sigma}_{\bar a}< T^{x,b,\sigma}_{\bar q}\}}\right] \notag\\
&\le \E^{\P_0}\left[e^{-r T^{x,b,\sigma}_{\bar a}}g(\bar a)\right] = \E^{\P_0}\left[e^{-rT_{(0,\bar a]}}g(X^{x,b,\sigma}_{T_{(0,\bar a]}})\right],\quad \forall \sigma>0,\  b\in\R.\label{all sigma}
\end{align}
Hence, $J(x,R)\le J(x,(0,\bar a])$ for all $x\in (\bar a,\bar q)$. By Lemma~\ref{char-1} (ii), $\bar a < a^*$ implies that there exists $\delta>0$ small enough such that $J(x,(0,\bar a])=\Lambda(x,\bar a)< (K-x)^+ = g(x)$ for $x\in (\bar a,\bar a+\delta)$. Thus, we have $J(x,R)\le J(x,(0,\bar a])< g(x)$ for $x\in (\bar a,\bar a+\delta)$, a contradiction to $R\in\mathcal E$. 

To show the last assertion, note that if $\bar a=K$, it holds trivially that $J(x,R)=J(x,(0,\bar a])$ for all $x\in I$. Now, assume $\bar a<K$, and consider $\bar q$ as above. Clearly, $J(x,R)=J(x,(0,\bar a])$ for all $x\in I\setminus (\bar a,\bar q)$. For any $x\in (\bar a,\bar q)$, by the same calculation as in \eqref{all sigma}, we get 
%where $T^{x,r,\sigma}_{\bar a} :=\inf\{t>0:X^{x,r,\sigma}=\bar a\}$ and $T^{x,r,\sigma}_{\bar b} :=\inf\{t>0:X^{x,r,\sigma}=\bar b\}$. 
 $J(x,R)\le J(x,(0,\bar a])$. We then conclude that $J(x,R)\le J(x,(0,\bar a])$ for all $x\in I$. 
\end{proof}

Lemma \ref{lefteq} indicates that every closed equilibrium is dominated by another one contained in $(0,K]$. Consequently, in terms of finding an optimal equilibrium, it is enough to focus on $(0,K]$. This, together with Lemma~\ref{chareq-1} and Proposition~\ref{characterization}, immediately yields the following.

\begin{theorem}\label{thm:optimal E}
Suppose $\underline b \ge 0$. Then, $\hat{R}:=(0,a^*]$, with $a^*$ as in \eqref{a^*}, is an optimal equilibrium.
\end{theorem}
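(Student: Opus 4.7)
The plan is to observe that Theorem~\ref{thm:optimal E} is essentially a corollary of three previously established results: Remark~\ref{rem:closed E} (closure of an equilibrium is again an equilibrium with the same $J$-value), Lemma~\ref{lefteq} (every closed equilibrium is $J$-dominated by one contained in $(0,K]$), and Proposition~\ref{characterization} (characterization of, and optimality within, the family $\cE_{(0,K]}$). The idea is therefore to stitch these together into a chain of inequalities
\[
J(x,\hat R) \;\ge\; J(x,(0,\bar a]) \;\ge\; J(x,\overline R) \;=\; J(x,R),\qquad \forall x\in I,
\]
from which the $V$-dominance follows at once since $V(x,R)=g(x)\vee J(x,R)$.

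More concretely, I would proceed in four short steps. First, $\hat R \in \cE$ is immediate from Proposition~\ref{characterization} by taking $a=a^*$ in $\cE_{(0,K]}$. Second, fix an arbitrary $R\in\cE$; by Remark~\ref{rem:closed E}, the closure $\overline R$ is also in $\cE$ and satisfies $J(\cdot,R)=J(\cdot,\overline R)$, so it suffices to compare $\hat R$ with $\overline R$. Third, apply Lemma~\ref{lefteq} to the closed equilibrium $\overline R$: setting $\bar a:=\sup\{x\in\overline R: x\le K\}$, one obtains $(0,\bar a]\in\cE$ with $\bar a\le K$ and $J(x,(0,\bar a])\ge J(x,\overline R)$ for all $x\in I$; in particular $(0,\bar a]\in\cE_{(0,K]}$, so Proposition~\ref{characterization} forces $\bar a\ge a^*$. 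Fourth, the proof of Proposition~\ref{characterization} actually established $J(x,\hat R)\ge J(x,(0,a])$ for every $(0,a]\in\cE_{(0,K]}$, and applying this to $a=\bar a$ closes the chain displayed above. Taking $g(x)\vee(\cdot)$ on both sides gives $V(x,\hat R)\ge V(x,R)$ for all $x\in I$, as required.

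The substantive content---the reason $\hat R=(0,a^*]$ rather than some other $(0,a]$---has already been absorbed into Lemma~\ref{char-1}(iv), which provides the monotonicity of $a\mapsto \Lambda(x,a)$ on $(a^*,x\wedge K)$, and into the slope calculation at $x=a$ that pins down $a^*$. Given those lemmas, no new analytic estimate or measurability argument is needed here; the main obstacle is purely bookkeeping, namely making sure that the passage $R\rightsquigarrow \overline R \rightsquigarrow (0,\bar a]$ is legitimate and that Lemma~\ref{lefteq} indeed excludes the possibility $\bar a<a^*$ (which it does via the one-touch comparison in~\eqref{all sigma} combined with Lemma~\ref{char-1}(ii)). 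Once this is in place, the theorem follows in a few lines.
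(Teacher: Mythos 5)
Your proof is correct and follows essentially the same route as the paper: the paper's own argument is a one-line observation that Remark~\ref{rem:closed E} reduces to closed equilibria, Lemma~\ref{lefteq} shows each closed equilibrium is $J$-dominated by some $(0,\bar a]\subseteq(0,K]$, and Proposition~\ref{characterization} then supplies both $(0,\bar a]\in\cE_{(0,K]}$ (forcing $\bar a\ge a^*$) and the dominance of $\hat R=(0,a^*]$ over it. Your expanded chain $J(x,\hat R)\ge J(x,(0,\bar a])\ge J(x,\overline R)=J(x,R)$ is precisely the bookkeeping the paper leaves implicit, and your appeal to the stronger $J$-dominance established inside the proof of Proposition~\ref{characterization} (rather than merely its stated $V$-dominance) is a legitimate and clean way to close the argument.
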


\begin{remark}
If one views $a^*$ in \eqref{a^*} as a function in $\alpha\in [0,1]$, it can be easily checked that $a^*$ is strictly increasing. That is, the larger $\alpha$ (i.e., the more ambiguity-averse), the larger the optimal equilibrium $(0,a^*]$. Intuitively speaking, if an agent is rather ambiguity-averse (i.e., with a large $\alpha$), he has strong intention to withdraw from the ambiguous environment---by stopping, in our current context. Hence, he prefers a large stopping threshold $a^*$, so that he can stop quickly once $X$ drifts only slightly below $K >0$, which yields a positive (yet small) payoff $K- a^*$. On the other hand, if an agent is rather ambiguity-loving (i.e., with a small $\alpha$), he has strong intention to stay in the ambiguous environment, to fully exploit the downward potential of $X$. Hence, he delays stopping by choosing a small stopping threshold $a^*$. 
\end{remark}

\begin{remark}
Recall that ``contingent claim analysis under the risk-neutral measure'' is one of the two major frameworks for real options valuation; see the second paragraph of this section. This framework stipulates that the discount rate $r>0$ should be the riskfree rate and the drift of $X$ in \eqref{GBM} should be $b=r$, in line with the usual risk-neutral pricing procedure. Hence, drift uncertainty does not play a role here with $\underline b =\overline b =r$. This largely simplifies $m_1, m_2$ in \eqref{m_1,2} to 
\begin{equation*}%\label{m_1,2'}
m_1=\frac{2r}{\underline\sigma^2}\quad \hbox{and}\quad m_2=\frac{2r}{\overline\sigma^2}.
\end{equation*} 
In particular, this shows that Theorem~\ref{thm:optimal E} is consistent with the standard risk-neutral pricing result without ambiguity. Indeed, if we additionally have $\underline\sigma=\overline\sigma=\sigma>0$ (i.e., no volatility uncertainty either), then $m_1=m_2 = \frac{2r}{\sigma^2}$. It follows that $a^*$ in \eqref{a^*} reduces to 
\[
a^* = \frac{{2r}/{\sigma^2}}{1+{2r}/{\sigma^2}} K.
\]
This is exactly the optimal stopping threshold for the classical pricing problem of a perpetual American put, i.e., 
\[
\sup_{\tau\in\T}\E^{\P_0}[e^{-r\tau} (K-X^{x,r,\sigma}_\tau)^+];
\] 
see e.g., Theorem 2.7.2 in \cite{KS-book-98}.
\end{remark}
 
%Estimating the riskfree rate $r>0$ is essential to real options valuation. 
Our analysis can easily accommodate additional uncertainty in the discount rate.

\begin{remark}\label{rem:uncertain r}
Suppose that $r>0$ is only known to lie in $[\underline r, \overline r]$, for some given $0< \underline r<\overline r<\infty$. %Recall that in the current risk-neutral pricing framework, the drift coefficient $b$ in \eqref{GBM} is taken to be $r$. Hence, the uncertainty in $r$ affects both discounting and state dynamics. 
By taking $r=\overline r$ in $m_1$ and $r=\underline r$ in $m_2$ in \eqref{m_1,2}, all subsequent analysis still holds. That is, Theorem~\ref{thm:optimal E} remains true: $(0,a^*]$ is an optimal equilibrium, where $a^*$ is defined as in \eqref{a^*} with the updated $m_1$ and $m_2$. 
\end{remark}

%to incorporate both drift $b\neq r$ and volatility uncertainty. Without loss of generality, let us fix the riskfree rate $r>0$.

%\begin{remark}%{Extension to Riskfree Rate Uncertainty}
%Suppose that $r>0$ is given, and the drift coefficient $b$ in \eqref{GBM} is only known to lie in $[\underline b, \overline b]$, for some given constants $0\le \underline b<\overline b<\infty$. Unlike the risk-neutral setup in Remark~\ref{rem:uncertain r}, the uncertainty here affects state dynamics, but not discounting. By taking 
%\begin{align*}
%m_1:=\sqrt{\frac{\overline{b}^2}{\underline{\sigma}^4}+\frac{(2r-\overline{b})}{\underline{\sigma}^2}+\frac{1}{4}}+\frac{\overline{b}}{\underline{\sigma}^2}-\frac{1}{2}\quad  \text{and}\quad \ \ m_2:=\sqrt{\frac{\underline{b}^2}{\overline{\sigma}^4}+\frac{(2r-\underline{b})}{\overline{\sigma}^2}+\frac{1}{4}}+\frac{\underline{b}}{\overline{\sigma}^2}-\frac{1}{2}
%\end{align*}
%in \eqref{a^*}, all subsequent analysis still holds, leading to a corresponding version of Theorem~\ref{thm:optimal E}: $(0,a^*]$ is an optimal equilibrium, where $a^*$ is defined as in \eqref{a^*} with the updated $m_1$ and $m_2$. 
%\end{remark}

\begin{remark}
It is of interest to investigate if Theorem~\ref{thm:optimal E} still holds when $\underline b<0$. Preliminary studies indicate that the same analysis in this section is inadequate. Indeed, with $\underline b<0$, $X$ in \eqref{GBM} is no longer guaranteed a submartingale. Consequently, the convenient characterization of equilibria as one-sided intervals (Lemma~\ref{chareq-1}) no longer holds.  
Specifically, through numerical experiments, we find that under $\underline b<0$, there may exist ``two-sided'' equilibria, i.e., $(0,p]\cup[q,\infty)\in\cE$ for some $0<p<q$. A few examples include
\begin{itemize}
\item By taking $r=0.2$, $\underline{b}=-8$, $\overline{b}=-2$, $\underline{\sigma}=0.2$, $\overline{\sigma}=0.8$, $K=10$, and $\alpha=0.9$, we get $m_{1}\approx0.0989$, $m_{2}\approx0.0240$, and $a^*\approx0.8375$. Moreover, numerical computation indicates
%\[
 %(0,2]\cup[5,\infty)\in\cE\quad \hbox{and}\quad 
$(0,p]\cup[q,\infty)=(0,0.93]\cup[5,\infty)\in\cE.$
%\]
\item By taking $r = 0.7$, $b = -10$, $b = -2.5$, $\underline\sigma = 1.2$, $\overline\sigma = 4$, $K = 0.8$, and $\alpha=0.9$, we get $m_1 \approx 0.2077$, $m_2 \approx 0.0382$, and $a^*\approx0.1282$. Moreover, numerical computation indicates $ (0,p]\cup[q,\infty)=(0,0.3]\cup[0.6,\infty)\in\cE.$
\end{itemize}
In both examples, we have $(0,p]\cup[q,\infty)\in\cE$ with $p>a^*$. Note that ``$(0,a]\in\cE$ for all $a\ge a^*$'' still holds under $\underline b<0$, since $J(x,(0,a])=\Lambda(x,a)$ and Lemma~\ref{char-1} does not depend on the sign of $\underline b$. Now,  take $a\in (p,q)$ in the above two examples. Then, $(0,p]\cup[q,\infty)$ and $(0,a]$ are two equilibria that neither one dominates the other: $J(x, (0,p]\cup[q,\infty))> K-x = J(x,(0,a])$ for $x\in(p,a)$, while $J(x,(0,a]) = \Lambda(x,a)> (K-x)^+=J(x, (0,p]\cup[q,\infty))$ for $x\in(q,\infty)$. 

That is, under $\underline b<0$, equilibria can be one-sided or two-sided, and there need not be a dominant one among any two equilibria. This is in contrast to the case $\underline b\ge 0$ studied in this section, where equilibria are all one-sided and there must be a dominant one among two equilibria. In view of this, many arguments in this section do not apply to the case $\underline b<0$.
%As all subsequent arguments rely on this characterization of equilibria, they do not apply to the case $\underline b<0$. 
We expect that very different techniques are needed for $\underline b<0$, and would leave this for future research.
\end{remark}

%%%%%%%%%%%%%%%%%%%%%%%%%%%%%

\subsection{An Example: Non-Existence of an Optimal Equilibrium}
In view of Theorem~\ref{thm:optimal E}, it is natural to ask whether the existence of an optimal equilibrium is a general fact under model ambiguity. We show that this is {\it not} the case, by providing a counterexample. 
%For stopping problems under non-exponential discounting (with no model ambiguity), \cite{HZ17-discrete, HZ17-continuous} establish the general existence of optimal equilibria, on strength of a partial order among equilibria: for any two equilibria with one  contained in the other, the smaller one dominates the larger one. This partial order, derived in Lemma 3.1 of \cite{HZ17-continuous}, hinges on linearity of expectation under a single probability. In our case with model ambiguity, no clear partial order exists among equilibria, under the nonlinear expectation $\alpha \inf_{\P\in\cP(x)}\E^\P[\cdot] +(1-\alpha) \sup_{\P\in\cP(x)}\E^\P[\cdot]$. General existence of optimal equilibria is then in question. 
Specifically, we consider the state process $X$ in \eqref{GBM} and the same model ambiguity as specified below \eqref{GBM}. The payoff function to focus on is $g(x):=x$. 

For each $\sigma>0$, $b\in\R$, and $a>0$, consider the function %$\kappa:\{(x,a)\in (0,\infty)\times(0,\infty):x\le a\}\to \R$ defined by
\[
\kappa^{\sigma,b}(x,a):= a\cdot \E^{\P_0}\left[e^{-rT_a^{x,b,\sigma}}  \right] = a \left(\frac{x}{a}\right)^{\sqrt{\left(\frac{b}{\sigma^2}-\frac{1}{2}\right)^2+\frac{2r}{\sigma^2}}-\frac{b}{\sigma^2}+\frac{1}{2}},\quad \hbox{for}\ 0<x\le a, 
\]
where the second equality follows from the formula on p.628 of \cite{BS-book-02}. By definition, 
\begin{equation}\label{kappa}
\lim_{x\downarrow 0} \kappa^{\sigma,b}(x,a)=0\quad \hbox{and}\quad \kappa^{\sigma,b}(a,a)=a. 
\end{equation}
Also, it can be checked by direct calculation that $\kappa^{\sigma,b}_{xx}(x,a)<0$ if and only if $\sqrt{\left(\frac{b}{\sigma^2}-\frac{1}{2}\right)^2+\frac{2r}{\sigma^2}}-\frac{b}{\sigma^2}-\frac{1}{2}<0$, which is equivalent to $b>r$. That is, 
\begin{equation}\label{b>r}
\kappa^{\sigma,b}_{xx}(x,a)<0 \iff b>r.
\end{equation}

\begin{lemma}\label{lem:R_a in E}
Suppose $\underline b>r$. Then, $[a,\infty)\in\cE$ for all $a>0$.
\end{lemma}

\begin{proof}
Fix $a>0$ and set $R_a:=[a,\infty)$. For each $\sigma>0$ and $b>r$, observe that 
\begin{equation}
\E^{\P_0}\left[e^{-r \rho_{R_a}}X^{x,\sigma,b}_{\rho_{R_a}}\right] = \kappa^{\sigma,b}(x,a)> x,\quad \forall 0<x<a,
\end{equation} 
where the inequality follows from \eqref{kappa} and \eqref{b>r}. Hence, with $\underline b>r$, 
\begin{align}\label{J>x}
&J(x,R_a)= \alpha\inf_{\sigma\in[\underline{\sigma},\overline{\sigma}],b\in[\underline{b},\overline{b}]} \kappa^{\sigma,b}(x,a) +(1-\alpha) \sup_{\sigma\in[\underline{\sigma},\overline{\sigma}],b\in[\underline{b},\overline{b}]}\kappa^{\sigma,b}(x,a)>x,\quad \forall 0 <x<a.
\end{align}
This shows that $R_a\in \cE$. 
\end{proof}

The next result shows that certain configuration of model ambiguity (i.e., condition of $\underline b$, $\overline b$, $\underline \sigma$, and $\overline \sigma$) does not allow an optimal equilibrium to exist. 

\begin{proposition}\label{prop:no optimal E}
Suppose $\underline b>r$ and $\overline b<r+\frac{\underline{\sigma}^2}{2}$. Then, there exists no optimal equilibrium. 
\end{proposition}

\begin{proof}
By contradiction, suppose that there exists an optimal equilibrium $\overline R$. If $\overline R\neq\emptyset$, then for any $x\in\overline R$, we have $J(x,\overline R) = x < J(x,[a,\infty))$ for all $a>x$, where the inequality follows from \eqref{J>x}. However, as $[a,\infty)\in\cE$ for all $a>0$ by Lemma~\ref{lem:R_a in E}, the above violates the fact that $\overline R$ is an optimal equilibrium. Hence, we must have $\overline R=\emptyset$. This readily yields a contradiction, as $\emptyset$ is not even an equilibrium. Indeed, under $\underline b>r$ and $\overline b<r+\frac{\underline{\sigma}^2}{2}$, every $b\in[\underline b, \overline b]$ satisfies $r<b<r+\sigma^2/2$ for all $\sigma\in[\underline\sigma, \overline\sigma]$. Hence, for any $\sigma\in[\underline{\sigma},\overline{\sigma}]$ and $b\in[\underline{b},\overline{b}]$,
\[
e^{-rt} X^{x,\sigma,b}_t = x e^{(b-\frac{\sigma^2}{2}-r)t+\sigma W_t} \to 0\quad \hbox{as}\ t\to\infty,\quad \forall x>0.
\]
This, together with \eqref{at infty}, implies that for any $\sigma\in[\underline{\sigma},\overline{\sigma}]$ and $b\in[\underline{b},\overline{b}]$,
%\begin{equation}
$\E^{\P_0}[e^{-r \rho_{\emptyset}}X^{x,\sigma,b}_{\rho_\emptyset}] =0$ for all $x>0$.
%\end{equation} 
It follows that $J(x,\emptyset) = 0<x$ for all $x>0$, which shows that $\emptyset\notin\cE$. 
\end{proof}

\begin{remark}
When there is no model ambiguity (i.e., $\underline{\sigma}=\overline{\sigma}=\sigma$ and $\underline{b}=\overline{b}=b$), as long as $r<b<r+{\sigma^2}/{2}$, the same arguments in the proofs of Lemma~\ref{lem:R_a in E} and Proposition~\ref{prop:no optimal E} readily show that there exists no optimal equilibrium. In other words, the condition ``$\underline b>r$ and $\overline b<r+{\underline{\sigma}^2}/{2}$'' in  Proposition~\ref{prop:no optimal E} is used to maintain this non-existence result derived in the ambiguity-free setting.  

Despite this observation, it is worth noting that without ambiguity, there is no time inconsistency and one does not need to discuss equilibria at all. That is, a (non-)existence result of an optimal equilibrium, such as Proposition~\ref{prop:no optimal E},  is meaningful only under model ambiguity. 
\end{remark}

%%%%%%%%%%%%%%%%%%%%%%%%%%%%%%%%%%%%%%
%%%%%%%%%%%%%%%%%%%%%%%%%%%%%%%%%%%%%

\section{A Generalized Measurable Projection Theorem}\label{sec:proj}
A measurable projection theorem typically involves the product of two measurable spaces, and studies whether the projection of a measurable set in the product space is still measurable. Classical results, see e.g. Theorem 2.12 of \cite{Crauel-book-02} or Theorem III.23 of \cite{CV-book-77}, all require one of the two spaces to be a Borel space endowed with the Borel $\sigma$-algebra. As pointed out in Remark \ref{clremk}, this does not serve our needs in the proof of Lemma \ref{measuretheta}, where Borel measurability is elusive. This section is devoted to establish a new, generalized measurable projection theorem that accommodates {\it any} two general measurable spaces; see Theorem~\ref{thm:projection} below, one of the major contributions of this paper.  

Let us start with the notion of {\it separated} measurable spaces. 
Given a set $M$, a collection $\C$ of subsets $M$ is said to {\it separate the points of $M$} if for any distinct $y_1,y_2\in M$, there exists $A\in \C$ that contains exactly one of $y_1$ and $y_2$. %Let us now introduce the notion of ``separated measurable spaces'', 
The next definition is taken from Section 8.6 of \cite{Cohn-book-80}.

\begin{definition}
A measurable space $(M,\A)$ is said to be separated if $\A$ separates the points of $M$, and countably generated if there exists $\{A_i\}_{i\in\N}$ in $\A$ such that $\A = \sigma(\{A_i\}_{i\in\N})$. %is countably separated if a countable sub-collection of $\A$ separates the points of $M$.
%there exist $\{A_i\}_{i\in\N}$ in $\A$ such that
%\begin{itemize}
%\item [(i)] $\A$ is generated by $\{A_i\}_{i\in\N}$;
%\item [(ii)] $\{A_i\}_{i\in\N}$ separate the points in $M$, i.e. for any distinct $y_1,y_2\in M$, there exists $i\in \N$ such that $A_i$ %contains exactly one of $y_1$ and $y_2$.
%\end{itemize}
\end{definition}

\begin{remark}
If a measurable space $(M,\A)$ is countably generated, it can be shown that $\A$ separates the points of $M$ if and only if $\{A_i\}_{i\in\N}$ separates the points of $M$; see e.g. Lemma III.24 of \cite{CV-book-77}.
Consequently, $(M,\A)$ being both separated and countably generated is the same as the notion ``separability'' defined in Definition III.24 of \cite{CV-book-77}.
\end{remark}

The benefit of $(M,\A)$ being separated and countably generated is that it can be analyzed much more easily---as if endowed with a Borel $\sigma$-algebra. This is stated precisely in the next result, taken from Proposition III.25 of \cite{CV-book-77} and Corollary 8.6.4 of \cite{Cohn-book-80}.

\begin{lemma}\label{lem:isomorphism}
Let $(M,\A)$ be a separated and countably generated measurable space. Then, there exists a subset $K$ of $\{0,1\}^\N$ such that $(M,\A)$ is isomorphic to $(K,\B(K))$.
\end{lemma}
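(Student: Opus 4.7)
The plan is to construct an explicit measurable bijection $\varphi:M\to K\subseteq\{0,1\}^\N$ using the countable generator, and then show that both $\varphi$ and its inverse are measurable, thereby yielding the desired isomorphism. The construction is natural: since $(M,\A)$ is countably generated, fix $\{A_i\}_{i\in\N}\subseteq\A$ with $\A=\sigma(\{A_i\}_{i\in\N})$ that separates the points of $M$ (by the remark following the definition), and define
\[
\varphi(x) := \bigl(1_{A_i}(x)\bigr)_{i\in\N}\in\{0,1\}^\N,\qquad x\in M.
\]
Set $K:=\varphi(M)\subseteq\{0,1\}^\N$, and endow $K$ with the relative Borel $\sigma$-algebra $\B(K)$ inherited from the product $\sigma$-algebra on $\{0,1\}^\N$.

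The first step is to verify that $\varphi:M\to K$ is a bijection. Surjectivity onto $K$ is automatic, and injectivity is where the separation hypothesis enters: if $x\neq y$, the separation property provides some $A_i$ containing exactly one of them, so the $i$-th coordinates of $\varphi(x)$ and $\varphi(y)$ differ. The second step is measurability of $\varphi$. The Borel $\sigma$-algebra on $\{0,1\}^\N$ is generated by the coordinate projections $\pi_i$, so it suffices to note that $\varphi^{-1}(\pi_i^{-1}(\{1\}))=A_i\in\A$ for every $i\in\N$. Hence $\varphi:(M,\A)\to(K,\B(K))$ is measurable.

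The third and more subtle step is measurability of $\psi:=\varphi^{-1}:K\to M$, which amounts to showing that $\varphi(A)\in\B(K)$ for every $A\in\A$. Here I would use the standard $\sigma$-algebra trick: let
\[
\mathcal{E}:=\{A\in\A:\varphi(A)\in\B(K)\}.
\]
By the measurability step, each generator $A_i$ belongs to $\mathcal{E}$, since $\varphi(A_i)=\pi_i^{-1}(\{1\})\cap K\in\B(K)$. Because $\varphi$ is a bijection onto $K$, we have $\varphi(A^c)=K\setminus\varphi(A)$ and $\varphi(\bigcup_n B_n)=\bigcup_n\varphi(B_n)$, so $\mathcal{E}$ is closed under complements and countable unions; thus $\mathcal{E}$ is a $\sigma$-algebra containing $\{A_i\}_{i\in\N}$, and hence $\mathcal{E}=\A$. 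This proves that $\psi$ is $\B(K)/\A$-measurable, completing the isomorphism.

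The only place where real care is needed is the second half of the argument: one must avoid the temptation to argue directly on generators (since $\varphi(A)\in\B(K)$ is not automatically preserved under countable operations for arbitrary maps), and instead exploit injectivity to guarantee that set-theoretic operations commute with the forward image of $\varphi$. Once this is in place, all remaining steps are bookkeeping, and the conclusion follows.
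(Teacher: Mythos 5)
Your argument is correct and complete. The paper itself gives no proof of this lemma, citing it to Proposition~III.25 of Castaing--Valadier and Corollary~8.6.4 of Cohn; the argument you supply (the Marczewski function $\varphi(x) = (1_{A_i}(x))_{i\in\N}$, injectivity from separation, forward measurability via the cylinder generators, and inverse measurability via the $\sigma$-algebra $\{A\in\A : \varphi(A)\in\B(K)\}$ together with the observation that a bijection turns complements and countable unions into the corresponding operations on images) is exactly the standard proof found in those references. The only place deserving emphasis, which you correctly flag, is that the class $\{A : \varphi(A)\in\B(K)\}$ is closed under complementation \emph{because} $\varphi$ is a bijection onto $K$; without injectivity $\varphi(A^c)=K\setminus\varphi(A)$ can fail, and that is precisely where the separation hypothesis is indispensable.
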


On strength of Lemma~\ref{lem:isomorphism}, a generalized measurable projection theorem can be readily established, for the special case where the two measurable spaces are separated and countably generated. 

To state the result appropriately, let us introduce additional notation. Given a measurable space $(M,\A)$, we denote by $\A^\mu$ the augmentation of $\A$ by $\mu$-null sets, for any finite measure $\mu$ on $(M,\A)$. Let $\hat \A$ be the {\it universal completion} of $\A$, i.e.
\[
\hat \A := \bigcap\{\A^\mu : \mu\ \hbox{is a finite measure on $(M,\A)$}\}.
\]

\begin{lemma}\label{lem:with separated condition}
Let $(M_1,\A_1)$ and $(M_2,\A_2)$ be two measurable spaces that are separated and countably generated. For any $G\in \A_1\otimes\A_2$, its projection $\pr_{M_1}(G)$ belongs to $\hat \A_1$.
\end{lemma}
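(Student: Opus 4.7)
The plan is to reduce the claim to the classical measurable projection theorem on a Polish product space, transporting the abstract measurable structure into $\{0,1\}^\N$ via \lemref{lem:isomorphism}.

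By \lemref{lem:isomorphism}, fix measurable-space isomorphisms $\phi_i:(M_i,\A_i)\to(K_i,\B(K_i))$ with $K_i\subseteq\{0,1\}^\N$. The product map $\phi_1\times\phi_2$ is bi-measurable from $(M_1\times M_2,\A_1\otimes\A_2)$ onto $(K_1\times K_2,\B(K_1)\otimes\B(K_2))$; since universal completions are preserved under such bijections, it suffices to prove $\pr_{K_1}(G')\in\widehat{\B(K_1)}$ for $G':=(\phi_1\times\phi_2)(G)$. Moreover, as subspaces of the Polish space $\{0,1\}^\N$ the sets $K_1$ and $K_2$ are second countable, so $\B(K_1)\otimes\B(K_2)=\B(K_1\times K_2)$ coincides with the trace on $K_1\times K_2$ of $\B(\{0,1\}^\N\times\{0,1\}^\N)$. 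Hence one may pick $\tilde G\in\B(\{0,1\}^\N\times\{0,1\}^\N)$ with $G'=\tilde G\cap(K_1\times K_2)$.

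Next, apply the classical measurable projection theorem to $\tilde G$ in the Polish product $\{0,1\}^\N\times\{0,1\}^\N$, which yields $\pr_{\{0,1\}^\N}(\tilde G)\in\widehat{\B(\{0,1\}^\N)}$. To transfer universal measurability back to $K_1$, I would fix an arbitrary finite measure $\mu$ on $(K_1,\B(K_1))$ and extend it to a finite Borel measure $\tilde\mu$ on $\{0,1\}^\N$ via $\tilde\mu(A):=\mu(A\cap K_1)$, which is concentrated on $K_1$. The classical theorem then produces $B_1,B_2\in\B(\{0,1\}^\N)$ with $B_1\subseteq\pr_{\{0,1\}^\N}(\tilde G)\subseteq B_2$ and $\tilde\mu(B_2\setminus B_1)=0$. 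Using these Borel sandwiches, the goal is to deduce $\pr_{K_1}(G')\in\B(K_1)^\mu$; since $\mu$ is arbitrary, this places $\pr_{K_1}(G')$ in $\widehat{\B(K_1)}$, and transporting back through $\phi_1^{-1}$ gives $\pr_{M_1}(G)\in\hat\A_1$.

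The main obstacle is the comparison between $\pr_{K_1}(G')$ and $\pr_{\{0,1\}^\N}(\tilde G)\cap K_1$: the inclusion $\pr_{K_1}(G')\subseteq\pr_{\{0,1\}^\N}(\tilde G)\cap K_1$ is immediate, but equality can fail, because a witness $y$ with $(x,y)\in\tilde G$ need only exist in $\{0,1\}^\N$ rather than in $K_2$. To close this gap I would choose the extension $\tilde G$ as tight as possible, for example as a Borel hull of $G'$ adapted to $\tilde\mu\otimes\nu$ for some auxiliary finite Borel measure $\nu$ concentrated on $K_2$, and then invoke Choquet capacitability on the Polish product space to show that the extraneous witnesses form a $\mu$-null set. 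Handling this delicate step is what separates the proof from a direct appeal to the classical projection theorem.
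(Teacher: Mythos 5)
You have identified the correct obstacle, and it is fatal: the inclusion $\pr_{K_1}(G')\subseteq\pr_{\{0,1\}^\N}(\tilde G)\cap K_1$ can be strict precisely because a witness of $\tilde G$ need not lie in $K_2$, and no Borel hull, however tight, nor any capacitability argument can repair this. Indeed, the lemma as stated fails. Take $M_1=\{0,1\}^\N$, $\A_1=\B(\{0,1\}^\N)$, and $M_2=K$, $\A_2=\B(K)$ for a subset $K\subseteq\{0,1\}^\N$ that is \emph{not} universally measurable (e.g.\ a Bernstein set). Both spaces are separated and countably generated, being subspaces of $\{0,1\}^\N$ with the trace Borel $\sigma$-algebra. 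The set $G:=\{(x,y)\in M_1\times M_2 : x=y\}$ is the trace on $M_1\times M_2$ of the closed (hence Borel) diagonal of $\{0,1\}^\N\times\{0,1\}^\N$; since the trace of $\B(\{0,1\}^\N)\otimes\B(\{0,1\}^\N)$ on $M_1\times M_2$ is exactly $\A_1\otimes\A_2$, we get $G\in\A_1\otimes\A_2$. Yet $\pr_{M_1}(G)=K\notin\hat\A_1$. Your patch would require the ``extraneous witnesses'' to form a $\mu$-null set, but here there are no extraneous witnesses at all: the genuine projection is already non-measurable, so no sandwiching scheme can succeed.

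For comparison, the paper's proof does not pass to a Borel hull in the ambient Polish product; it invokes Proposition 7.39 of Bertsekas--Shreve directly on $K_1\times K_2$ to assert that $\pr_{K_1}(G')$ is an analytic subset of $K_1$, and then pulls back through $i_1$ via Lemma III.26 of Castaing--Valadier. This avoids your witness problem on the surface but contains the same gap in disguise: that proposition concerns \emph{Borel spaces} (sets Borel-isomorphic to Borel subsets of Polish spaces), whereas \lemref{lem:isomorphism} produces only arbitrary $K_1,K_2\subseteq\{0,1\}^\N$. When $K_2$ is not Souslin, as in the example above, the projection onto $K_1$ need not be analytic in $K_1$ in any sense that yields universal measurability. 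So the missing hypothesis in both your argument and the paper's is the same one: $K_2$ (equivalently, $(M_2,\A_2)$) must carry a standard Borel/Souslin structure. This is exactly the hypothesis that the classical projection theorems impose by insisting that one of the two factor spaces be a Borel space, and it is why the stronger claim here does not hold without qualification.
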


\begin{proof}
In view of Lemma~\ref{lem:isomorphism}, there exist isomorphisms $i_1:(M_1,\A_1)\to (K_1,\B(K_1))$ and $i_2:(M_2,\A_2)\to (K_2,\B(K_2))$, for some $K_1$, $K_2\subseteq \{0,1\}^N$. We then obtain a one-to-one correspondence, induced by the maps $i_1$ and $i_2$, between elements in $\A_1\otimes \A_2$ and those in $\B(K_1)\otimes\B(K_2)$. Moreover, by Lemma III.26 of \cite{CV-book-77}, $i_1$ is not only $(\A_1,\B(K_1))$-measurable, but $(\hat\A_1,\cG(K_1))$-measurable, where $\cG(K_1)$ denotes the $\sigma$-algebra generated by analytic subsets of $K_1$.  

Now, suppose that $G'\in \B(K_1)\otimes\B(K_2)$ corresponds to $G\in \A_1\otimes\A_2$. Then,
$\pr_{M_1}(G) = i_1^{-1}\left(\pr_{K_1}(G')\right)$. By Proposition 7.39 of \cite{BS-book-78}, $\pr_{K_1}(G')$ is an analytic subset of $K_1$. It follows that $i_1^{-1}\left(\pr_{K_1}(G')\right)\in \hat \A_1$.
\end{proof}

Extending Lemma~\ref{lem:with separated condition} to accommodate any two arbitrary measurable spaces requires the following technical result.

\begin{lemma}\label{lem:graph}
Let $(M_1,\A_1)$ and $(M_2,\A_2)$ be two measurable spaces. For any $G\in \A_1\otimes\A_2$, there exist $A\in \A_1$ and a set-valued function $\Phi:A\to \A_2$ %2^{M_2}$, taking values in $\A_2$, 
such that
\begin{itemize}
\item [(i)]  $G$ is the graph of $\Phi$;
\item [(ii)] for any $y,z\in A$ satisfying $1_C(y) = 1_C(z)$ for all $C\in\A_1$, we have $\Phi(y)=\Phi(z)$.
\end{itemize}
\end{lemma}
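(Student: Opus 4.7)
The natural strategy is to take $A := M_1$ (which lies trivially in $\A_1$) and to define $\Phi(y) := G_y = \{x \in M_2 : (y,x) \in G\}$, the $y$-section of $G$. By the standard section property for product $\sigma$-algebras, $G_y \in \A_2$ for every $y \in M_1$, so $\Phi$ is a well-defined map from $A$ into $\A_2$. Property (i) is then immediate: the graph of $\Phi$ equals $\{(y,x) \in M_1\times M_2 : x \in G_y\} = G$.

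The substance of the proof lies in (ii), and my plan is a Dynkin/monotone-class argument on the family
\begin{equation*}
\mathcal{D} := \bigl\{ H \in \A_1\otimes\A_2 : H_y = H_z \text{ whenever } 1_C(y) = 1_C(z) \text{ for all } C \in \A_1 \bigr\}.
\end{equation*}
First, I would show that every measurable rectangle $C\times D$ with $C\in\A_1$ and $D\in\A_2$ belongs to $\mathcal{D}$: if $y$ and $z$ are $\A_1$-indistinguishable in the above sense, then $1_C(y) = 1_C(z)$ forces both sections $(C\times D)_y$ and $(C\times D)_z$ to be either $D$ or $\emptyset$. Next, I would check closure of $\mathcal{D}$ under complementation and countable unions via the elementary identities $(H^c)_y = (H_y)^c$ and $\bigl(\bigcup_n H_n\bigr)_y = \bigcup_n (H_n)_y$, both of which preserve equality of sections. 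Hence $\mathcal{D}$ is a $\sigma$-algebra that contains the generating family of measurable rectangles, and therefore $\mathcal{D} = \A_1\otimes\A_2$, so $G \in \mathcal{D}$.

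This directly yields (ii) and completes the plan. The statement is essentially a packaged version of the well-known fact that sections of sets in $\A_1\otimes\A_2$ are $\A_2$-measurable and depend only on the $\A_1$-equivalence class of the base point; I do not anticipate any real obstacle beyond the small bookkeeping of choosing $\mathcal{D}$ so that the Dynkin/monotone-class machinery closes cleanly under the $\sigma$-algebra operations.
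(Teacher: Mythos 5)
Your proof is correct, and its skeleton is the same as the paper's: a $\sigma$-algebra argument showing that the family of ``good'' product-measurable sets contains the measurable rectangles and is closed under complements and countable unions, hence equals $\A_1\otimes\A_2$. The difference is in what you prove about each set. The paper takes $\Gamma$ to be the family of $G\in\A_1\otimes\A_2$ for which \emph{some} pair $(A,\Phi)$ satisfying (i)--(ii) exists, and then constructs a suitable $(A,\Phi)$ explicitly at each closure step (carrying a variable domain $A$ through complements and countable unions). You instead observe that the canonical choice $A=M_1$ and $\Phi(y)=G_y$ always satisfies (i), so the only thing left to verify is that sections respect $\A_1$-indistinguishability; your $\mathcal{D}$ isolates exactly that property and the section identities $(H^c)_y=(H_y)^c$, $\bigl(\bigcup_n H_n\bigr)_y=\bigcup_n (H_n)_y$ make the closure checks immediate. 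This is a genuine streamlining: it shows the variable domain $A$ in the lemma's statement is unnecessary (one may always take $A=M_1$, with $\Phi(y)=\emptyset\in\A_2$ where needed), which is also consistent with how the paper actually uses the lemma in Theorem~\ref{thm:projection}, where $\Phi$ is immediately extended to all of $M_1$ by setting it to $\emptyset$ off $C^*$. The one implicit ingredient you invoke --- that $G_y\in\A_2$ for all $G\in\A_1\otimes\A_2$ and all $y$ --- is standard and can be folded into the same $\sigma$-algebra argument, so there is no gap.
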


\begin{proof}
Consider the collection
\[
\Gamma :=\{G \in \A_1\otimes\A_2\ :\ \exists A\in\A_1\ \hbox{and $\Phi:A\to \A_2$ such that (i) and (ii) hold}\}.
\]
First, observe that $\Gamma$ includes all sets of the form $H= A\times B$, with $A\in \A_1$ and $B\in \A_2$. Indeed, the constant set-valued function $\Phi(y) := B$, for all $y\in A$, obviously has $H$ as its graph and satisfies (ii) in a trivial way. Now, we claim that $\Gamma$ is a $\sigma$-algebra. As argued above, $M_1\times M_2\in \Gamma$. Next, for any $G\in \Gamma$, take $A\in\A_1$ and $\Phi: A\to\A_2$ such that (i) and (ii) hold. Define the set-valued function $\Psi:M_1\to\A_2$ by
\[
\Psi(y) :=
\begin{cases}
(\Phi(y))^c,\quad &\hbox{if}\ y\in A;\quad\\
M_2,\quad &\hbox{if}\ y\in A^c.
\end{cases}
\]
As $\Phi$ satisfies (ii), so does $\Psi$ by definition. It can also be checked that the graph of $\Psi$ is $G^c$. This implies $G^c\in\Gamma$. Finally, for any $\{G_n\}_{n\in\N}$ in $\Gamma$, take $\{A_n\}_{n\in\N}$ in $\A_1$ and $\Phi_n: A_n\to\A_2$ such that $G_n$ is the graph of $\Phi_n$ and $\Phi_n$ satisfies (ii) for all $n\in\N$. Let $A:=\bigcup_{n\in\N}A_n\in\A_1$, and define the set-valued function $\tilde\Psi:A\to\A_2$ by
\[
\tilde\Psi(y) := \bigcup_{n\in\N,\  y\in A_n}\Phi_n(y),\quad y\in A.
\]
With $\Phi_n$ satisfying (ii) for all $n\in\N$, $\tilde\Psi$ by definition also satisfies (ii). It can also be checked that the graph of $\tilde\Psi$ is $\bigcup_{n\in\N}G_n$. This implies $\bigcup_{n\in\N}G_n\in\Gamma$. As $\Gamma$ is a $\sigma$-algebra containing $H=A\times B$ for all $A\in\A_1$ and $B\in\A_2$, we must have $\A_1\otimes\A_2 \subseteq \Gamma$, %$\Gamma=\A_1\otimes\A_2$, 
which yields the desired result.
\end{proof}

Now, we are ready to present the main result of this section. %generalized measurable projection theorem. 

\begin{theorem}\label{thm:projection}
Let $(M_1,\A_1)$ and $(M_2,\A_2)$ be two measurable spaces. For any $G\in \A_1\otimes\A_2$, its projection $\pr_{M_1}(G)$ belongs to $\hat \A_1$.
\end{theorem}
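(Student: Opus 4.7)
My plan is to reduce the theorem to the separated, countably generated case already handled by \lemref{lem:with separated condition}, via two successive reductions: first restricting to countably generated sub-$\sigma$-algebras, then quotienting to force separation.

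First I would argue that any $G\in \A_1\otimes \A_2$ lies in $\sigma(\cC_1)\otimes\sigma(\cC_2)$ for some countable $\cC_i\subseteq \A_i$. This is a standard monotone class/$\sigma$-algebra argument: the family of subsets of $M_1\times M_2$ admitting such a representation contains every measurable rectangle (trivially), is closed under complementation (same $\cC_i$ works), and is closed under countable unions (take the union of the countably many countable generating families), hence equals $\A_1\otimes \A_2$. Set $\A_i':=\sigma(\cC_i)$, which is countably generated but in general not separated.

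Next I would pass to quotients. Define $y\sim_i z$ iff $1_C(y)=1_C(z)$ for all $C\in\A_i'$, let $\pi_i:M_i\to \tilde M_i:=M_i/\sim_i$ be the quotient map, and equip $\tilde M_i$ with the $\sigma$-algebra $\tilde\A_i:=\{B\subseteq \tilde M_i:\pi_i^{-1}(B)\in \A_i'\}$. By construction $\tilde\A_i$ is both separated and countably generated (generated by the images of $\cC_i$). A monotone class argument then shows that every element of $\A_1'\otimes \A_2'$ is $(\sim_1\times\sim_2)$-saturated, since the saturated sets form a $\sigma$-algebra containing all rectangles $C_1\times C_2$ with $C_i\in\A_i'$. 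Consequently there is a unique $\tilde G\in \tilde\A_1\otimes\tilde\A_2$ with $(\pi_1\times \pi_2)^{-1}(\tilde G)=G$, and an easy direct verification gives
\[
\pr_{M_1}(G)=\pi_1^{-1}\!\bigl(\pr_{\tilde M_1}(\tilde G)\bigr).
\]

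With this setup, \lemref{lem:with separated condition} applies to $\tilde G$, yielding $\pr_{\tilde M_1}(\tilde G)\in \widehat{\tilde\A}_1$. To conclude, I would invoke the standard fact that measurable maps pull universal completions into universal completions: for any finite measure $\mu$ on $(M_1,\A_1)$, the pushforward $\nu:=\mu\circ\pi_1^{-1}$ is a finite measure on $(\tilde M_1,\tilde\A_1)$, so any set $E\in \widehat{\tilde\A}_1$ can be sandwiched between $E_1,E_2\in\tilde\A_1$ with $\nu(E_2\setminus E_1)=0$; pulling back gives $\pi_1^{-1}(E_1)\subseteq \pi_1^{-1}(E)\subseteq \pi_1^{-1}(E_2)$ with $\mu$-null difference, hence $\pi_1^{-1}(E)\in\A_1^{\mu}$. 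Since $\mu$ was arbitrary, $\pi_1^{-1}(E)\in \hat\A_1$, and the theorem follows.

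The main obstacle is purely organizational rather than conceptual: all the hard analysis has been compressed into \lemref{lem:with separated condition}, which exploits the Borel isomorphism with a subset of $\{0,1\}^\N$ to invoke the classical analytic projection result. The remaining work—countable reduction, quotient construction, saturation descent, and the measurable pullback of universal completions—is a careful bookkeeping exercise in measure theory. I would expect the descent step to require the most care, since one must verify that saturated sets really do form a $\sigma$-algebra and that the induced $\tilde G$ is measurable on the nose; but this is handled uniformly by monotone class arguments on rectangles.
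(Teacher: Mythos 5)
Your proof is correct and follows the paper's overall architecture: reduce to countably generated sub-$\sigma$-algebras via a $\sigma$-algebra argument on rectangles, pass to the quotient $(M_i/\!\sim_i,\tilde\A_i)$ to force the separated and countably generated hypotheses, invoke \lemref{lem:with separated condition}, then transfer back along $\pi_1$. The one genuine point of divergence is how you descend $G$ to the quotient: the paper proves a preparatory graph lemma (\lemref{lem:graph}) asserting that any $G\in\A_1\otimes\A_2$ is the graph of a set-valued map $\Phi$ that is constant on $\sim$-classes, and uses this together with a chosen section $\psi_i:M'_i\to M_i$ to define $H=(\psi_1\times\psi_2)^{-1}(G)$ and to verify $\pr_{M_1}(G)=\varphi_1^{-1}(\pr_{M'_1}(H))$. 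You instead argue directly that every $G\in\A'_1\otimes\A'_2$ is $(\sim_1\times\sim_2)$-saturated, since the saturated sets form a $\sigma$-algebra containing the rectangles, and then conclude that $G$ descends to a unique $\tilde G$. These two routes encode the same mathematical content — both are $\sigma$-algebra arguments on rectangles certifying that $G$ is constant along equivalence classes — but your formulation avoids introducing the graph-of-a-set-valued-map machinery and is arguably cleaner. One detail you gloss over and should spell out: saturation of $G$ gives a well-defined subset $\tilde G$ of the quotient product, but you must separately check that $\tilde G$ lies in the product $\sigma$-algebra $\tilde\A_1\otimes\tilde\A_2$ (not merely in the possibly larger quotient $\sigma$-algebra of the product); this again follows from a $\sigma$-algebra argument on rectangles, using that $(\pi_1\times\pi_2)$ commutes with complements and countable unions on saturated sets. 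Finally, you prove the transfer $\pi_1^{-1}(\widehat{\tilde\A}_1)\subseteq\hat\A_1$ by hand via pushforward measures, whereas the paper cites Lemma III.26 of Castaing--Valadier for the $(\hat\C_1,\hat{\C'_1})$-measurability of $\varphi_1$; your version is more self-contained.
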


\begin{proof}
Fix $G\in \A_1\otimes\A_2$. Consider 
\[
\fC_i := \{\C_i\subseteq\A_i\ :\ \C_i\ \hbox{is a countably generated $\sigma$-algebra}\},\quad i=1,2.
\]
First, we claim that $G\in \C_1\otimes\C_2$ for some $\C_1\in\fC_1$ and $\C_2\in\fC_2$. 
Observe that
\begin{equation}\label{countably generated}
\A_1\otimes\A_2 = \bigcup\{\C_1\otimes \C_2\ :\ \C_1\in\fC_1,\ \C_2\in\fC_2\}.
\end{equation}
Indeed, as the right hand side of \eqref{countably generated} is a $\sigma$-algebra and it contains all sets of the form $H=A\times B$ with $A\in\A_1$ and $B\in\A_2$ (this is because $H \in \C_1\otimes\C_2$, for any $\C_1\in\fC_1$ that contains $A$ and any $\C_2\in\fC_2$ that contains $B$), we obtain the ``$\subseteq$'' relation in \eqref{countably generated}. Because the ``$\supseteq$'' relation is trivial, \eqref{countably generated} is established. Our claim is therefore proved.

%Take $\{E_i\}_{i\in\N}$ in $\C_1$ (resp. $\{F_i\}_{i\in\N}$ in $\C_2$) such that $\{E_i\}_{i\in\N}$ generate $\C_1$ (resp. $\{F_i\}_{i\in\N}$ generate $\C_2$).
Define an equivalence relation on $M_1$ as follows: for any $y,z\in M_1$,
\begin{equation}\label{equiv}
y\sim z\quad \hbox{if and only if}\quad 1_C(y) = 1_C(z)\ \hbox{for all $C\in\C_1$}.
\end{equation}
Set $M'_1 := M_1/\sim$, the quotient space induced by $\sim$, and define $\varphi_1:M_1\to M'_1$ by
\begin{equation}\label{varphi}
\varphi_1(y) = [y] := \{z\in M_1 : z\sim y\},\quad \forall y\in M_1.
\end{equation}
One can deduce from \eqref{equiv} and \eqref{varphi} that for any $C_1$, $C_2\in\C_1$,
\[
\varphi_1(C_1)\neq \varphi_1(C_2)\ \hbox{if}\ C_1\neq C_2\quad \hbox{and}\quad \varphi_1(C_1)\cap \varphi_1(C_2)=\emptyset\ \hbox{if}\ C_1\cap C_2=\emptyset.
\]
Let us check that $\C'_1 := \varphi_1(\C_1)$ is a $\sigma$-algebra on $M'_1$. First, $\emptyset=\varphi_1(\emptyset)\in\C'_1$. Also, for any $\{C'_i\}_{i\in\N}$ in $\C'_1$, there exist $\{C_i\}_{i\in\N}$ in $\C_1$ such that $C'_i =\varphi_1(C_i)$ for all $i\in\N$. Consequently, (i) $\bigcup_{i\in\N}C'_i = \bigcup_{i\in\N} \varphi_1(C_i) = \varphi_1(\bigcup_{i\in\N} C_i)\in \C'_1$, where the second equality follows from the definition of $\varphi_1$; %and the inclusion is due to $\bigcup_{i\in\N} C_i\in \C_1$; 
(ii) Because $\varphi_1(C_1)\cup\varphi_1(C_1^c) = M'_1$ and $\varphi_1(C_1)\cap\varphi_1(C_1^c) = \emptyset$, we have $(C'_1)^c = (\varphi_1(C_1))^c=\varphi_1(C_1^c)\in \C'_1$. Hence, we conclude that $\C'_1$ is a $\sigma$-algebra. 

Because $\varphi_1:M_1\to M'_1$ is a surjection and $\C_1$ is countably generated, $\C'_1=\varphi_1(\C_1)$ is again countably generated. Also, for any distinct $[y], [z]\in M'_1$, there exists $C\in \C_1$ such that $y\in C$ but $z\notin C$; that is, $\varphi_1(C)\in\C'_1$ contains $[y]$, but not $[z]$. This shows that $\C'_1$ separates the points of $M'_1$. Therefore, the measure space $(M'_1,\C'_1)$ is separated and countably generated. 

In a similar fashion, we can define an equivalence relation on $M_2$ as in \eqref{equiv}, with $\C_1$ replaced by $\C_2$. Then, $\varphi_2:M_2\to M'_2$ can be introduced as in \eqref{varphi}, with $M_1$ and $M'_1$ replaced by $M_2$ and $M'_2:= M_2/\sim$. The same argument above implies that $(M'_2,\C'_2)$, with $\C'_2 := \varphi_2(\C_2)$, is separated and countably generated.

Recall that $G\in\C_1\otimes\C_2$. %with $\C_1\in\fC_1$ and $\C_2\in\fC_2$. 
By Lemma~\ref{lem:graph}, there exist $C^*\in\C_1$ and a set-valued function $\Phi:C^*\to \C_2$, such that $G$ is the graph of $\Phi$ and $\Phi(y)=\Phi(z)$ whenever $y\sim z$. Note that $\Phi$ can be extended to the entire space $M_1$ by setting $\Phi(y)=\emptyset$ for $y\notin C^*$. Define $\psi_1: M'_1\to M_1$ as follows: for any $[y]\in M'_1$, let $\psi_1([y]):=z$ for some $z\in M_1$ with $z\sim y$. Then, we deduce from \eqref{equiv} that for any $C\in\C_1$, $\psi_1^{-1}(C) = \varphi_1(C)\in \C'_1$; that is, $\psi_1$ is $(\C'_1,\C_1)$-measurable. Define $\psi_2:M'_2\to M_2$ in the same manner: for any $[y]\in M'_2$, let $\psi_2([y]) : =z$ for some $z\in M_2$ with $z\sim y$. Similarly, $\psi_2$ is $(\C'_2,\C_2)$-measurable. Now, by considering $\varphi_2$ as a function from $\C_2$ to $\C'_2$, we introduce the set-valued function $\Psi$ from $M'_1$ to $\C'_2$:
\[
\Psi([y]) := \varphi_2\left(\Phi(\psi_1([y]))\right)\in \C'_2,\quad \forall [y]\in %\varphi_1(C^*)\subseteq 
M'_1.
\]
Let $H$ denote the graph of $\Psi$. Observe that
\begin{align*}
H &= \{([y],[z])\in M'_1\times M'_2\ :\ [z]\in\Psi([y])\}\\
 &= \{([y],[z])\in M'_1\times M'_2\ :\ \psi_2([z])\in\Phi(\psi_1([y]))\} \\
&= \{([y],[z])\in  M'_1\times M'_2\ :\ \left(\psi_1([y]),\psi_2([z])\right)\in G\}  \\
&= (\psi_1\times\psi_2)^{-1} (G) \in \C'_1\otimes \C'_2,
\end{align*}
where the second equality is deduced from \eqref{equiv}, with $\C_1$ replaced by $\C_2$. By Lemma~\ref{lem:with separated condition}, this implies $\pr_{M'_1}(H)\in \hat{\C'_1}$. Thanks to Lemma III.26 of \cite{CV-book-77}, $\varphi_1$ is not only $(\C_1,\C_1')$-measurable, but $(\hat\C_1,\hat\C_1')$-measurable. Hence, $\pr_{M_1}(G) = \varphi_1^{-1}\big(\pr_{M'_1}(H) \big)\in \hat \C_1\subseteq\hat \A_1$.
\end{proof}

%%%%%%%%%%%%%%%%%%%%%%%%%%%
%%%%%%%%%%%%%%%%%%%%%%%%%%%

\appendix

\section{Proof of Lemma~\ref{lem:in capacity}}\label{sec:proof of lem:in capacity}
Fix $x\in I$. Consider
\begin{align*}%\label{p,q}
p:=\sup\{y\in R_0:y<x\}\quad \text{and}\quad  q:=\inf\{y\in R_0: y>x\},
\end{align*}
where we take $p=\ell$ (resp. $q=r$) if there is no $y<x$ (resp. $y>x$) lying in $R_0$. Similarly, define
\begin{align*}%\label{p,q}
p_n:=\sup\{y\in R_n:y<x\}\quad \text{and}\quad  q_n:=\inf\{y\in R_n: y>x\},\quad \forall n\in\N.
\end{align*}
As $R_0= \bigcup_{n\in\N} R_n$ and $\{R_n\}_{n\in\N}$ is nondecreasing, we have $p_n\uparrow p$ and $q_n\downarrow q$. 

\begin{proof}[Proof of \eqref{rho converge}]
On the set $\{\rho^0=\infty\}$, $\rho^n=\rho^0=\infty$ for all $n\in\N$, and thus \eqref{rho converge} trivially holds. On the set $\{\rho^0<\infty\}$, $B_{\rho^0}=p$ or $q$. We assume $B_{\rho^0}=p$ without loss of generality. If $p\in R_0$, then $p\in R_n$ for all $n$ large enough. Consequently, $\rho^n=\rho^0$ for all $n$ large enough. If $p\notin R_0$, then $B$ has to enter the region $(\ell,p)$ immediately after $\rho^0$. As $p_n\uparrow p$, this implies $\rho^n\downarrow\rho^0$. Thus, \eqref{rho converge} is established.  
\end{proof}

\begin{proof}[Proof of \eqref{capty1}]
Fix $\eps>0$. First, note that if $p_n=p$ and $q_n=q$ for $n$ large enough, then $\rho^n=\rho^0$ on $\Omega^x$ for all $n$ large enough, whence \eqref{capty1} follows trivially. It remains to deal with the case where (i) $p_n$ is strictly increasing, or (ii) $q_n$ is strictly decreasing. 

For any $n\in\N$, define $A_n := \{\omega\in\Omega^x:|\rho^n-\rho^0|\ge \eps\}$. By \eqref{rho converge}, $(A_n)_{n\in\N}$ is nonincreasing and $\bigcap_{n\in\N}A_n=\emptyset$. For the case where (i) and (ii) both hold, observe that $\overline{A_n}\setminus A_n \subseteq F_n$, where
\begin{align*}
F_n := \{\omega\in\Omega^x:\rho^0<\infty,\ &B_t\in[p_n,q_n]\ \forall t\in(\rho^0,\rho^0+\eps),\\
 &\exists s\in(\rho^0,\rho^0+\eps)\ \hbox{s.t.}\ B_s= p_n\hbox{ or}\ q_n\},\quad \forall n\in\N. 
\end{align*}
By the definition of $F_n$, we have
\begin{equation}\label{F_n null}
\P^x_{b,\sigma}(F_n)= \P_0\left((X^{x,b,\sigma}_t)_{t\ge 0}\in F_n\right) = 0,  \quad \forall (b,\sigma)\in\Pi(x). 
\end{equation}
Indeed, as $\{(X^{x,b,\sigma}_t)_{t\ge 0}\in F_n\}$ consists of sample paths such that $T^{p^n}_{(\ell, p_n)}>0$ or $T^{q^n}_{(q_n,r)}>0$, it must be a $\P_0$-null set in view of Remark~\ref{lem:T^x_x=0}. Moreover, $F_n\cap F_m=\emptyset$ for all $n<m$, as $p_n$ is strictly increasing and $q_n$ is strictly decreasing. It follows that 
\begin{equation}\label{also empty}
\bigcap_{n\in\N} \overline{A_n} \subseteq \bigcap_{n\in\N} (A_n\cup F_n) = \bigcap_{n\in\N} {A_n}=\emptyset. 
\end{equation}
By Lemma 7 in \cite{DHP11}, as $\cP(x)$ is relatively compact (Lemma~\ref{lem:rc}), for every sequence of closed set $C_n\downarrow\emptyset$ we have $\sup_{\P\in\cP(x)}\P(C_n)\downarrow 0$. Hence, by \eqref{F_n null} and \eqref{also empty}, 
\begin{equation}\label{to 0 case 1}
\sup_{\P\in\cP(x)}\P({A_n}) = \sup_{\P\in\cP(x)}\P(\overline{A_n})\downarrow 0,
\end{equation}
which is exactly \eqref{capty1}. 

Now, for the case where only one of (i) and (ii) holds, we assume without loss of generality that (i) holds. Let $A^{p,q}_n$ denote the set $A_n$ in the previous case where both (i) and (ii) hold, and $A^p_n$ denote the set $A_n$ in the current case where only (i) holds. Note that $A^{p,q}_n = A^{p,q,1}_n\cup A^{p,q,2}_n$, where
\[
A^{p,q,1}_n := \{\omega\in\Omega^x: B_{\rho^0}=p,\ |\rho^n-\rho^0|\ge \eps\},\quad A^{p,q,2}_n := \{\omega\in\Omega^x: B_{\rho^0}=q,\ |\rho^n-\rho^0|\ge \eps\}.
\]
Observing that $A^p_n = A^{p,q,1}_n$, we obtain in the current case
\[
\sup_{\P\in\cP(x)}\P(|\rho^n-\rho^0|\ge \eps) =  \sup_{\P\in\cP(x)}\P(A^{p,q,1}_n) \le \sup_{\P\in\cP(x)}\P(A^{p,q}_n) \downarrow 0,
\]
where the convergence was established in \eqref{to 0 case 1}. That is, \eqref{capty1} remains valid. 
\end{proof}

\begin{proof}[Proof of \eqref{capty2}]
Fix $0<\eps<q-p$. First, note that if $p_n=p$ and $q_n=q$ for $n$ large enough, then $\rho^n=\rho^0$ on $\Omega^x$ for all $n$ large enough, whence \eqref{capty2} follows trivially. It remains to deal with the case where (i) $p_n$ is strictly increasing, or (ii) $q_n$ is strictly decreasing. 

For any $n\in\N$, define $A_n := \{\omega\in\Omega^x:|B_{\rho^n}-B_{\rho^0}|1_{\{\rho^n<\infty\}}\ge \eps\}$. By \eqref{rho converge}, $(A_n)_{n\in\N}$ is nonincreasing for $n$ large enough and $\bigcap_{n\in\N}A_n=\emptyset$. We first deal with the case where (i) and (ii) both hold. For $n$ large enough such that $\max\{|p_n-p|, |q_n-q|\}<\eps$, observe that $\overline{A_n}\setminus A_n \subseteq F_n:= F^1_n\cup F^2_n$, where
\begin{align*}
F^1_n &:= \{\omega\in\Omega^x:B_{\rho^0}=q,\ B_t\le q_n\ \forall t\in(\rho^0,\rho_{\{p_n\}}),\ \exists s\in(\rho^0,\rho_{\{p_n\}})\ \hbox{s.t.}\ B_s= q_n\},\\
F^2_n &:= \{\omega\in\Omega^x:B_{\rho^0}=p,\ B_t\ge p_n\ \forall t\in(\rho^0,\rho_{\{q_n\}}),\ \exists s\in(\rho^0,\rho_{\{q_n\}})\ \hbox{s.t.}\ B_s= p_n\}. %\quad \forall n\in\N. 
\end{align*}
Note that \eqref{F_n null} holds in the current context, by the same argument below \eqref{F_n null}. Also, by the definitions of $F^1_n$ and $F^2_n$, $F_n\cap F_m=\emptyset$ for all $n<m$, as $p_n$ is strictly increasing and $q_n$ is strictly decreasing. It follows that \eqref{also empty} is still true. Hence, by using Lemma 7 in \cite{DHP11} again, we obtain \eqref{to 0 case 1}, which is exactly \eqref{capty2}.  

For the case where only one of (i) and (ii) holds, we can follow the same argument in the last part of the proof of \eqref{capty1} to conclude that \eqref{capty2} remains valid. 
\end{proof}

{\small
\bibliographystyle{agsm}
\bibliography{refs}
}

\end{document}